\documentclass{article}
\usepackage{amsmath,amssymb,dsfont,bm,amsthm}
\usepackage{mathtools}
\usepackage[shortlabels]{enumitem}
\usepackage{authblk}
\usepackage{hyperref}
\usepackage[sort,numbers]{natbib}
\usepackage[margin=1in]{geometry}
\usepackage{xcolor}



\newcommand{\cB}{\mathcal{B}}
\newcommand{\cE}{\mathcal{E}}\newcommand{\cF}{\mathcal{F}}

\newcommand{\cN}{\mathcal{N}}
\newcommand{\cP}{\mathcal{P}}




\newcommand{\E}{\mathbb{E}}
\newcommand{\R}{\mathbb{R}}
\newcommand{\Prob}{\mathbb{P}}
\newcommand{\ind}{\mathds{1}}


\usepackage{environ}
\NewEnviron{eq}{%
\begin{equation}\begin{split}
  \BODY
\end{split}\end{equation}
}
\newcommand{\e}{\mathrm{e}}

\def\sss{\scriptscriptstyle}

\newcommand{\blf}{\bm{f}}

\newcommand{\blw}{\bm{w}}

\newcommand{\DA}{D_{\sss A}}
\newcommand{\DP}{D_{\sss P}}
\newcommand{\WA}{W_{\sss A}}
\newcommand{\WP}{W_{\sss P}}
\newcommand{\AG}{A_{\sss G}}

\newcommand{\tL}{t_{\sss L}}
\newcommand{\tU}{t_{\sss U}}
\newcommand{\OP}{O_{\sss \mathbb{\PR}}}
\newcommand{\oP}{o_{\sss \mathbb{\PR}}}

\newcommand{\err}{\mathrm{Err}}
\newcommand{\true}{\Theta_0}

\newcommand{\rmin}{r_{*}(m) }
\newcommand{\rmax}{r^*(m)}
\newcommand{\sF}{\mathcal{F}}

\usepackage{bbm}
\newcommand{\1}{\mathbbm{1}}

\newcommand\blfootnote[1]{%
  \begingroup
  \renewcommand\thefootnote{}\footnote{#1}%
  \addtocounter{footnote}{-1}%
  \endgroup
}

\DeclareMathOperator*{\argmin}{argmin}

\DeclareMathOperator*{\rank}{rank}

\title{Community detection using  low-dimensional \\network embedding algorithms}
\author{Aman Barot$^\dagger$, Shankar Bhamidi$^\dagger$, Souvik Dhara$^\ddagger$}

\numberwithin{equation}{section}
\newcommand\numberthis{\addtocounter{equation}{1}\tag{\theequation}}
\newcounter{common-numbering} \numberwithin{common-numbering}{section}
\newtheorem{prop}[common-numbering]{Proposition}
\newtheorem{lem}[common-numbering]{Lemma}
\newtheorem{assumption}[common-numbering]{Assumption}
\newtheorem{thm}[common-numbering]{Theorem}
\newtheorem{algo}[common-numbering]{Algorithm}
\newtheorem{rem}[common-numbering]{Remark}
\newtheorem{defn}[common-numbering]{Definition}
\newtheorem{obs}[common-numbering]{Observation}
\newcommand{\bbR}{\mathbb{R}}
\newcommand{\bbN}{\mathbb{N}}

\newcommand{\frob}[1]{\|#1\|_{\sss \mathrm{F}}}

\newcommand{\SOP}{o_{\sss \Prob}}
\newcommand{\diag}[1]{\text{diag}\left(#1\right)}
\usepackage{mathtools}
\DeclarePairedDelimiter\ceil{\lceil}{\rceil}
\DeclarePairedDelimiter\floor{\lfloor}{\rfloor}

\newcommand{\PR}{\mathbb{P}}

\begin{document}
	\maketitle
\begin{abstract}
    With the increasing relevance of large networks in important areas such as the study of contact networks for spread of disease, or social networks for their impact on geopolitics, it has become necessary to study machine learning tools that are scalable to very large networks, often containing millions of nodes. 
    One major class of such scalable algorithms is known as network representation learning or network embedding. 
    These algorithms try to learn representations of network functionals (e.g.~nodes) by first running multiple random walks and then using the number of co-occurrences of each pair of nodes 
    in observed random walk segments to obtain a low-dimensional representation of nodes on some Euclidean space.
    While there has been an exponential growth in the applications of these algorithms in clustering of nodes (community detection), link prediction, and classification of nodes, there is less insight into the correctness and accuracy of these methods. 
    
    The aim of this paper is to rigorously understand the performance of two major algorithms, DeepWalk and node2vec, in recovering communities for canonical network models with ground truth communities. 
    Depending on the sparsity of the graph, we find the length of the random walk segments required such that the corresponding observed co-occurrence window is able to perform almost exact recovery of the underlying community assignments. We prove that, given some fixed co-occurrence window,  node2vec using random walks with a low non-backtracking probability can succeed for much sparser networks compared to DeepWalk using simple random walks. 
    Moreover, if the sparsity parameter is low, we provide evidence that these algorithms might not succeed in almost exact recovery.
    The analysis requires developing general tools for path counting on random networks having an underlying low-rank structure, which are of independent interest.

\end{abstract}
\blfootnote{$^\dagger$ Department of Statistics and Operations Research, University of North Carolina at Chapel Hill}
\blfootnote{$^\ddagger$ Department of Mathematics, Massachusetts Institute of Technology}
\blfootnote{Emails: \href{mailto:abarot@live.unc.edu}{\tt abarot@live.unc.edu}, \href{mailto:bhamidi@email.unc.edu}{\tt bhamidi@email.unc.edu}, \href{mailto:sdhara@mit.edu}{\tt sdhara@mit.edu}}
\blfootnote{Acknowledgements:  AB has been partially supported by NSF award NSF-DMS-1929298 to the Statistical and Applied Mathematical Sciences Institute and by ARO grant W911NF-17-1-0010. SB is supported in part by NSF grants DMS-1613072, DMS-1606839, DMS 2113662 and ARO grant W911NF-17-1-0010. SD is partially supported by Vannevar Bush Faculty Fellowship ONR-N00014-20-1-2826.   }

\section{Introduction}

	Networks provide a useful framework to study interactions between entities, called nodes, in many complex systems such as social networks, protein-protein interactions, and citation networks.
	A number of important machine learning tasks such as clustering of nodes (community detection), link prediction, classification of nodes and visualization of node interactions can be performed using network-based methods for these systems.
	In recent decades, network data sets containing millions and even billions of nodes have become available in areas such as social networks. This has necessitated the development of new methods which are scalable to very large networks.
One major class of algorithms, often termed network representation learning or network embedding techniques, try to learn representations of network functionals including vertices and in some cases edges in a low-dimensional Euclidean space thus making large scale network valued data amenable to  well-known methods for data sets in Euclidean spaces. 
	Typical applications of these methods include network visualization by using t-SNE or PCA on the network embeddings, clustering of related nodes by applying $k$-means on the network embeddings, and classification of nodes by applying machine learning methods for Euclidean spaces. The main advantage of these algorithms lies in the fact that they are scalable to very large networks even with millions of nodes, see the comprehensive references \cite{chami2020machine,zhang2018network,hamilton2017representation} and the references therein for a description of the multitude of methods now available and their applications in various domains of network science.

Network embedding methods can be broadly classified into three types: methods based on matrix factorization \cite{belkin2001laplacian,cao2015grarep,ou2016asymmetric,ahmed2013distributed}, method based on random walks \cite{tang2015line,perozzi2014deepwalk,grover2016node2vec}, and methods based on deep neural networks \cite{cao2016deep,wang2016structural}. In this paper,  we will focus on methods based on random walks. 
The first step for these methods consists of running multiple random walks on the underlying graph and creating a matrix of \emph{co-occurrences} which keeps track of how frequently random walks visit one node from another within certain time-span.
Pairs of nodes which are closer to each other have a higher co-occurrence than pairs which are further apart. 
The second step then consists of optimizing for network embeddings so that the Euclidean inner products of the embeddings are proportional to the co-occurences (cf.~Section~\ref{sec:embedding-algorithms}).
These algorithms turn out to be heavily used in practice owing to a number of reasons including: 

(a) network representations can be constructed using random walk based methods for very large scale networks efficiently (see e.g. \cite{perozzi2014deepwalk,grover2016node2vec} for applications on hundreds of thousands and in some cases million node networks); (b) these algorithms are easily parallelizable; and (c) they can be easily adapted for local changes in the graph and thus easy to use for streaming network data. 
These methods have been empirically observed to perform well for link prediction and node classification for large sparse graphs as co-occurences provide a flexible measure of strength of the relationship between any two nodes as compared to deterministic measures based on node degrees and co-neighbors.

This paper focuses on the problem of finding clustering of nodes, often referred to as the \emph{community detection problem}. 
A well-known model for generating synthetic benchmarks for validating new community detection algorithms is the stochastic block model (SBM)~\cite{fortunato2016community}. 
The literature for theoretical studies for community detection on SBM is extensive and these existing methods primarily use spectral algorithms, semi-definite programming, and message passing approaches. We refer the reader to the survey \cite{A18} for an overview.
Since the more recent network embedding methods are designed to be scalable, one may apply the $k$-means algorithm on the  embeddings from these algorithms to find clusters in the corresponding embeddings in Euclidean space. The primary scientific question we aim to address in this paper is: 
\begin{quote}
	 \emph{In settings when the underlying network is generated from SBM, when can a network-embedding followed by $k$-means clustering recover the community assignments? 
	 Since community detection is generally more difficult for sparse graphs, it is also of interest to know the relationship between sparsity levels and the co-occurrence length which result in meaningful guarantees as well as to understand regimes where these algorithms might fail.}
\end{quote}
Here, co-occurrence length being $t$ means that the algorithm constructs the co-occurance matrix by looking at the co-occurrence of nodes at $t$ steps of the random walk. 
To answer the above questions, we look at perhaps the two most popular of them (1) DeepWalk due to Perozzi, Al-Rfou \& Skiena~\cite{perozzi2014deepwalk},  and (2) node2vec due to Grover \& Leskovec~\cite{grover2016node2vec}. 
While DeepWalk uses simple random walks for the network embedding task, node2vec makes use of a weighted random walk with a different weight for backtracking. 
To understand our results from a high-level, let $n$ denote the network size,  $\rho_n$ denotes the sparsity level and $t$ denotes the co-occurrence length (these are defined more precisely in Sections~\ref{sec:SBM},~\ref{sec:embedding-algorithms}). 
We will establish the following:
\begin{enumerate}
    \item[(R1)] \label{Result-1} There exists a $\phi = \phi(t)$ such that when $n^{t-1} \rho_n^t \times \frac{1}{(n\rho_n)^\phi} \gg (\log n)^C$, then DeepWalk recovers communities of all but $o(\sqrt{n})$ with high probability (cf.~Theorem~\ref{thm:DeepWalk misclassified nodes}). 
    \item[(R2)] If $n^{t-1} \rho_n^t \gg (\log n)^C$ and the backtracking parameter is sufficiently small, then node2vec recovers communities of all but $o(\sqrt{n})$ with high probability (cf.~Theorem~\ref{thm:node2vec misclassified nodes}).
    
    \item[(R3)] If $n^{t-1} \rho_n^t \ll 1$, then the co-occurrence matrix might be quite far from the ground truth, which provides strong evidence that DeepWalk, node2vec might end up misclassifying a positive fraction of nodes (cf.~Theorem~\ref{prop:Frob norm tL to tU}~\eqref{frob-norm-bound-original-2}, and Theorem~\ref{prop:node2vec frob norm}~\eqref{prop:node2vec frob norm-2}).
    
\end{enumerate}
The results show that if the random walks used for the embedding are approximately  non-backtracking, then they can succeed up to the almost optimal sparsity level of the network. Intuitively, backtracks do not provide much new information about the network structure and their effect becomes more prominent as the network becomes sparser. The effect due to backtracks, as well as the interpretation of $\phi$ in (R1), is discussed in more detail in Remark~\ref{rem:effect-of bactracking}.
This phenomenon is in line with the effectiveness of spectral methods using non-backtracking matrices in the extremely sparse setting (networks with bounded average degree), which has been studied by Krzakala,  Moore,  Mossel, Neeman,  Sly, Zdeborov\'a, and Zhang~\cite{Krzakala2013} and Bordenave, Lelarge, and Massouli{\'e}~\cite{Bordenave2015} in the context of recovering communities partially. 
The closest paper to this work is the
important recent result of Zhang and Tang~\cite{zhang2021consistency} which proves an analogue of (R1) above for the DeepWalk. 
Our result (R1) further improves the regime of success for DeepWalk compared to \cite{zhang2021consistency}.

\paragraph*{Organization.}
The rest of the paper is organized as follows. Section~\ref{sec:setup} describes the background such as the stochastic block model, the DeepWalk and node2vec algorithms, and community detection using spectral clustering.
Section~\ref{sec:main-result} describes our main results in details, followed by proof ideas.
The rest of the paper is dedicated to proof of this main results. 
In Section~\ref{sec:path counting deepwalk}, we set up the path counting estimates that are crucially used in our proof. 
We complete analysis for DeepWalk in Section~\ref{sec:deepwalk}, and node2vec in Section~\ref{sec:path counting node2vec} and Section~\ref{sec:node2vec}.
\section{Problem Setup} \label{sec:setup}
We  begin this section by describing the notation used throughout the paper. 
We then describe the stochastic block model (SBM) in Section~\ref{sec:SBM}. The underlying graphs for our results will be generated by this model.
Next, we will describe the DeepWalk and node2vec network embedding algorithms in Section~\ref{sec:embedding-algorithms}. These algorithms will be run on  graphs generated from the SBM.
We will then end this section by describing the approach to recover communities from the solution of the DeepWalk and node2vec algorithms in Section~\ref{sec:spectral-description}.
	
\paragraph*{Notation.}
For a graph $G = (V,E)$, $\AG$ will denote the adjacency matrix representation of $G$. We will drop the subscript $G$ on $\AG$  and denote the adjacency matrix by $A$ when the graph is clear from context.
We write $S^{a\times b}$ for the set of $a\times b$ matrices with entries taking values in $S$. 
For any matrix $M$, $M_{i\star}$ denotes the $i$-th row and $M_{\star i}$ denotes the $i$-th column.
For any matrix or  vector $X$, $|X|$ will denote the sum of its entries. 
Also $\frob{X} = (\sum_{i,j}X_{ij}^2)^{1/2}$ will denote the Frobenius norm of a  $X$.
We denote $[n] = \{1,2,\dots,n\}$.	We often use the Bachmann–Landau notation $o(\cdot), O(\cdot)$, $\omega(\cdot), \Omega(\cdot)$ etc. For random variables $(X_n)_{n\geq 1}$, we write $X_n = \oP(1)$ and $X_n = \OP(1)$ as a shorthand for $X_n \to 0$ in probability, and $(X_n)_{n\geq 1}$ is a tight sequence respectively.

\subsection{Stochastic Block Model} \label{sec:SBM}
We now describe the stochastic block model \cite{holland1983stochastic}. 
A random graph $G = (V,E)$ with $V = [n]$ from the SBM is generated as follows. 
Each node in the graph is assigned to one of $K$ categories, called communities or blocks. 
We fix the set of communities to be $[K]$.
Let $g(i) \in [K]$ denote the community of node $i$. 
We also define a $n \times K$ \emph{membership matrix} $\Theta_0$ which contains the communities of the nodes, defined as follows:
\[	(\Theta_0)_{ir} := \ind\{g(i) = r \}.\]
Let $B$ be a $K \times K$ matrix of probabilities, i.e. $0 \leq B_{rs} \leq 1$. The matrix $B$ will be called the matrix of block or community density parameters.
Let 
\[	P := \Theta_0 B \Theta_0^T.	\]
The edges of the graph $G$ are generated using $P$ as follows: 
	\begin{gather*}
		A_{ij} \sim \mathrm{Bernoulli}(P_{ij}), \	A_{ij} := A_{ji} \quad \text{for }i < j,\quad \text{ and } A_{ii} := 0. \numberthis\label{def:SBM}
	\end{gather*}
The graph generated from this model does not have self-loops. 
Note that the parameter of the Bernoulli distribution used for generating edge $\{i,j\}$  depends only on the communities of the nodes $i$ and $j$.

For our results in this paper, we are interested in the case when the number of communities, $K$, is fixed, and the number of nodes, $n$, tends to infinity. 
The size of community $r$ in the graph will be $n_r$, where $n_r/n \to \pi_r$ with $\pi_r>0$ for all $r\in [K]$.
Then to generate a sequence of graphs, one graph for each $n \in \bbN$, we first fix a $K \times K$ matrix of probabilities $B_0$.
Let $(\rho_n)_{n\geq 1} \subset \bbR$ be a sequence such that $0 \leq \rho_n \leq 1$. The sequence $\rho_n$ will control the sparsity of the graphs as a function of $n$.  
The matrix~$B$ of block density parameters for the graph on $[n]$ is then given by 
\[		B = \rho_n B_0.	\]
We make the following assumption about $B_0$:
\begin{assumption}
	There exists constants $c_L$ and $c_U$ such that $0 < c_L \leq (B_0)_{ij} \leq c_U \leq 1$ and $\rank(B) = k$.
\end{assumption}

In our set up, the membership matrix $\Theta_0$ is the unknown and we want to estimate it by observing one realization of $A$.
Next, we discuss how to evaluate the accuracy of the predicted community assignments.
We will assume throughout that $K$ is known. Let $\hat{\Theta}$ be an estimator of the community assignments. 
Note that the communities are identifiable only upto a permutation of the community labels. Taking this into account, we measure the prediction error by
\begin{align} \label{defn:error-rate}
	\err(\hat{\Theta}, \Theta_0) := \frac{1}{n} \min_{J \in S_K}	\sum_{i \in [n]} \ind \big\{  (\hat{\Theta}J)_{i \star} \neq
	(\Theta_0)_{i \star}  \big\},
\end{align}
where the minimum is over all $K \times K$ permutation matrices.
If $J$ is the minimizer in \eqref{defn:error-rate}, then we call a node to be \emph{misclassified} under $\hat{\Theta}$ when $(\hat{\Theta}J)_{i \star} \neq (\Theta_0)_{i \star}$. 
Thus, we can note that $\err(\hat{\Theta}, \Theta_0)$ just computes the proportion of misclassified nodes.

We conclude our description of the SBM by defining some notation used in the context of random-walks. 
Towards this, for a graph with adjacency matrix $A$, let $\DA$ denote the diagonal matrix with the diagonal entries as the degrees of the nodes $[n]$, i.e., $(\DA)_{ii} =  \sum_{j=1}^{n} A_{ij}$. 
Let $\WA$ denote the one-step transition matrix of a simple symmetric random walk given by $A \DA^{-1}$. Similarly, we define $\WP = P \DP^{-1}$ where $\DP$ is the diagonal matrix containing row sums of $P$.

\subsection{DeepWalk and node2vec} \label{sec:embedding-algorithms}
We describe the two random-walk based network embedding algorithms, namely DeepWalk due to Perozzi, Al-Rfou and Skiena~\cite{perozzi2014deepwalk} and node2vec due to Grover and Leskovec~\cite{grover2016node2vec}. 
	Let $G = (V,E)$ be an undirected, connected (and possibly weighted) graph with $V = [n]$, and let $\AG$ be the adjacency matrix of $G$ if $G$ is unweighted. If $G$ is weighted, we can take $\AG$ to be the matrix of edge weights. 
	Both the algorithms have two key steps. 
	In step 1, the algorithm generates multiple random walks on~$G$. 
	In step 2, one uses a ``word embedding'' algorithm from the natural language processing literature by interpreting these random walks as sequences of words. 
	
	\paragraph*{Step 1: Generating walks.}
	For both the methods, we generate $r$ random walks, each of length $l$ on $G$. In the case of DeepWalk, a simple random walk is performed which has the one-step transition probability given as
	\begin{align*}
		p(v_{i+1} | v_i) &=
		\begin{cases}
		 \frac{1}{|(\AG)_{i\star}|}, &\quad \text{ if }  (v_i, v_{i+1}) \in E,\\ \numberthis \label{eqn:deepwalk transition probs}
		 0 , &\quad \text{ otherwise.}
		\end{cases}
	\end{align*}
	In the case of node2vec, a second-order random walk with two parameters $\alpha$ and $\beta$ is performed. If $\cN(v)$ denotes the neighborhood of $v$, then the one-step transition probability in this case is given by
	\begin{align*}
		p(v_{i+1} | v_i,v_{i-1}) &\propto
		\begin{cases}
		    \alpha  &\quad  \text{ if } v_{i+1} = v_{i-1}, \\
		 1  &\quad \text{ if } v_{i+1} \in  \cN(v_{i-1}) \cap \cN(v_{i}), \\ 
         \beta  &\quad \text{ if } v_{i+1} \in  \cN(v_{i-1})^c \cap \cN(v_{i}), \\
		0 &\quad \text{ otherwise.} \numberthis \label{eqn:node2vec transition probs}
		\end{cases}
	\end{align*} 
	Following Qiu, Dong, Ma, Li,  Wang, and  Tang~\cite{qiu2018network}, we initialize using the stationary distribution of the random walks. For DeepWalk, the initialization is given by
	\begin{align*}
		p(v_i) = \frac{|(\AG)_{v_i\star}|}{|\AG|}, \quad \forall v_i \in V. \numberthis \label{eqn:deepwalk init dist}
	\end{align*}
	And for node2vec, we initialize by
	\begin{align*}
	p(v_1,v_2) = \frac{(\AG)_{v_1 v_2}}{|\AG|}, \quad \forall v_1, v_2 \in V. \numberthis \label{eqn:node2vec init dist}
	\end{align*}
	The initial distribution in  \eqref{eqn:node2vec init dist} places equal probability on each ordered pair of nodes in the edge set~$E$. This initial distribution also happens to be the invariant distribution for the second-order random walk if $\beta = 1$.

\paragraph*{Step 2: Implementing word embedding algorithm.} 
	Next, the random walks are input to word2vec algorithm due to Mikolov, Sutskever, Chen, Corrado, and Dean~\cite{mikolov2013distributed}.
    This generates two node representations for each of the nodes. The latter is described, in the context of graphs, in steps 2a and 2b below.

	\paragraph*{Step 2a: Computing the co-occurence matrix $C$.} 
	In this step, we construct an $n \times n$ matrix $C$ which counts how often two nodes appear in a certain distance of each other with respect to the observed random walks. Formally, let $t_L, t_U \in \bbN$ be such that $t_L \leq t_U$. Let $\blw^{(m)}$ for $1 \leq m \leq r$ be the random walks from Step 1, each represented as a sequence of $l$ nodes. Then
	\begin{align*}
		C_{ij} =& \sum_{m=1}^{r} \sum_{t = t_L}^{t_U}  \sum_{k=1}^{l-t} \big(\ind\{\blw^{(m)}_{k} = i, \blw^{(m)}_{k+t} = j \} + \ind\{\blw^{(m)}_{k} = j, \blw^{(m)}_{k+t} = i \} \big).
	\end{align*}
	DeepWalk and node2vec was proposed to have $t_L =1$. However, we will allow $t_L$ to vary for our theoretical results. Also, set up explained intuitively in the introduction corresponds to the particular case $\tL = \tU = t$.

	\paragraph*{Step 2b: Optimizing for node representations.} 
	The node representations are then computed using the Skip-gram model with negative sampl,ing then (SGNS). 
	Let $d$ be the embedding dimension. The algorithm takes the co-occurence matrix $C$ as an input, and then it outputs two node vectors $\blf_i, \blf_i' \in \bbR^d$ for each node $i \in [n]$. The vector $\blf_i$ is the ``input" representation of the node and the vector $\blf_i'$ is the ``output" representation of the node. We collect these node vectors in two $d \times n$ matrices $F$ and $F'$.
	Although \cite{grover2016node2vec} initially proposed to have $F = F'$, in practice this requirement is often dropped.
	So we will not consider the assumption $F = F'$, which was also done in the theoretical analysis of \cite{zhang2021consistency}. 
	The objective function of the optimization problem is described as follows. Let 
	\begin{align*}
		P_{\sss C} (j) := \frac{\sum_{i=1}^{n} C_{ij} }{\sum_{i=1}^{n} \sum_{j=1}^{n} C_{ij}} = \frac{|C_{\star j}|}{|C|},
	\end{align*}
	be the empirical distribution on $[n]$ constructed using the column sums of $C$. We note that in \cite{mikolov2013distributed}, a distribution proportional to  $P_{\sss C}^{3/4}$ was used instead, which performs better in practice. 
	However, in theoretical analysis \cite{zhang2021consistency,qiu2018network,levy2014neural}, one always considers $	P_{\sss C}$ for simplicity. 
	Then one computes $(F,F')$ by maximizing the objective function 
	\begin{equation} \label{eq:loss1}
		L'_{\sss C}(F,F') = \sum_{i,j=1}^{n}  \bigg( C_{ij} \log \sigma(\langle \blf_i, \blf_{j}'\rangle) + 
		\sum_{ \{	l_m \stackrel{iid}{\sim} P_{\sss C} | 1 \leq m \leq b C(i,j)	\} } \log \sigma(-\langle \blf_i, \blf_{l_m}'\rangle ) \bigg), 
	\end{equation}
	where $b$ samples are taken from $P_{\sss C}$ for each pair of nodes counted as a co-occurrence in $C$, and $\sigma (x) = (1+\e^{-x})^{-1}$ denotes the sigmoid function. 
	For our theoretical analysis, we again follow
	simplifications considered in earlier works \cite{levy2014neural,qiu2018network,zhang2021consistency}. First, instead of \eqref{eq:loss1}, we look at a related maximization problem 
	\begin{align} \label{eq:loss2}
		L_{\sss C}(F,F') = \sum_{i,j=1}^{n} 
		C_{ij} \big( \log \sigma(\langle \blf_i, \blf_{j}'\rangle) + b \E_{l\sim P_{\sss C}} [ \log \sigma(-\langle \blf_i, \blf_{l}'\rangle ) ] \big). 
	\end{align} 
Once we have an optimizer $(F,F')$, we embed node $i$ in $\R^d$ using $F_{i\star}$. 
The following result  due to Levy and Goldberg~\cite{levy2014neural} gives us a way to compute the optimizer in \eqref{eq:loss2} using factorization of a certain matrix:
\begin{prop} \label{lem:factorization}
Given a matrix $C\in \R^{n\times n}$, let $\bar{M}_{\sss C}$ be a matrix with $(i,j)$-th entry given by 
\begin{eq}
(\bar{M}_{\sss C})_{ij}:= \log \bigg(\frac{C_{ij} \cdot  |C|}{ |C_{i \star}| |C_{\star j}|}  \bigg) - \log b.
\end{eq}
Let $F, F' \in \R^{n\times d}$
be such that  $\bar{M}_{\sss C} = F F'^T$. Then $(F,F')$ minimizes \eqref{eq:loss2}.
\end{prop}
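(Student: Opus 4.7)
The plan follows Levy and Goldberg~\cite{levy2014neural}. First, I would rewrite $L_{\sss C}(F,F')$ so that its dependence on $(F,F')$ factors through the inner products $x_{ij}=\langle \blf_i,\blf_j'\rangle$. Substituting $P_{\sss C}(l)=|C_{\star l}|/|C|$ into the expectation term and swapping the order of summation yields
$$L_{\sss C}(F,F') \;=\; \sum_{i,j=1}^{n}\Big[\,C_{ij}\log\sigma(x_{ij}) \;+\; \frac{b\,|C_{i\star}|\,|C_{\star j}|}{|C|}\log\sigma(-x_{ij})\,\Big],$$
so $L_{\sss C}$ depends on $(F,F')$ only through the matrix $X=FF'^T$ of inner products.

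Second, provided $d$ is large enough that every $n\times n$ matrix is realizable as $FF'^T$ (e.g.\ $d\geq n$), the entries $x_{ij}$ can be treated as free scalars, and the problem decouples into $n^2$ independent one-dimensional optimizations of $g_{ij}(x)=C_{ij}\log\sigma(x)+\gamma_{ij}\log\sigma(-x)$ with $\gamma_{ij}=b|C_{i\star}||C_{\star j}|/|C|$. Using $\tfrac{d}{dx}\log\sigma(x)=\sigma(-x)$, the stationarity condition $g_{ij}'(x)=C_{ij}\sigma(-x)-\gamma_{ij}\sigma(x)=0$ gives $\e^{-x}=\gamma_{ij}/C_{ij}$, i.e.
$$x \;=\; \log\!\Big(\frac{C_{ij}\,|C|}{|C_{i\star}|\,|C_{\star j}|}\Big)-\log b \;=\; (\bar{M}_{\sss C})_{ij}.$$
Since $\log\sigma$ is concave, each $g_{ij}$ is concave, so this stationary point is the unique optimum; hence any $(F,F')$ with $FF'^T=\bar{M}_{\sss C}$ attains the optimum of $L_{\sss C}$.

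The main obstacle is really only interpretational: if $d<n$ then $X=FF'^T$ is constrained to have rank at most $d$ and the entrywise decoupling breaks down, so the statement must be read as asserting optimality of $(F,F')$ whenever the factorization $FF'^T=\bar{M}_{\sss C}$ is feasible for the chosen $d$. A secondary nuisance is that $C_{ij}=0$ forces $(\bar{M}_{\sss C})_{ij}=-\infty$; this is handled by taking $x_{ij}\to -\infty$ as a limit and noting that the vanishing coefficient $C_{ij}$ kills the offending term while $\log\sigma(-x_{ij})\to 0$. Beyond these minor points the argument reduces to the one-line scalar calculation above.
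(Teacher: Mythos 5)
Your proof is correct and takes essentially the same route as the paper's: substitute the explicit $P_{\sss C}$ into the expectation, reduce $L_{\sss C}$ to a sum of independent scalar objectives in the inner products $\langle \blf_i,\blf_j'\rangle$, and solve the first-order condition to obtain $(\bar{M}_{\sss C})_{ij}$. Your additional observations --- that concavity of each scalar term makes the stationary point the global optimizer, and that the entrywise decoupling implicitly assumes $d$ is large enough to realize $\bar{M}_{\sss C}$ as $FF'^T$ --- are refinements the paper leaves implicit but do not change the substance of the argument.
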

We give a proof in Appendix~\ref{appendix-1}.
To simplify the form of the optimizers, one either takes $r \to \infty$ or $l \to \infty$ \cite{qiu2018network,zhang2021consistency}. 
These assumptions ensure that the co-occurence of $(i,j)$ is observed sufficiently many times for all $(i,j)$ and also simplify the form of $\bar{M}_{\sss C}$. 
For our results we will take $r \to \infty$ in which case we have the following result:
\begin{prop}\label{lem:M for node2vec}
	Let $A$ be the adjacency matrix of a graph. For node2vec, we will assume $A$ to be unweighted.
	Let $\blw^{(m)}$ for $1 \leq m \leq r$ be the random walks generated for DeepWalk or for node2vec with the parameter $\beta$ set to be equal to $1$.
	If $\sum_{t=t_L}^{t_U} A^{(t)}_{ij} > 0$ then 
	\begin{align*}
		(\bar{M}_{\sss C})_{ij} \xrightarrow[r \to \infty]{a.s.}  (M_{\sss C})_{ij} := \log\bigg(	\frac{\sum_{t = t_L}^{t_U}	(l-t) \big(  \Prob( \blw^{(1)}_1 =i ,  \blw^{(1)}_{1+t} = j )
    	+ \Prob( \blw^{(1)}_1 = j , \blw^{(1)}_{1+t} = i )	\big)	}{2 b \gamma(l,t_L, t_U) \Prob(\blw^{(1)}_1 = i) \Prob(\blw^{(1)}_1 = j)}	\bigg),
	\end{align*}
	where $\gamma(l,t_L, t_U) = \frac{(2l - t_L - t_U)(t_U - t_L + 1)}{2}$.
	The limiting term is well-defined if  $\sum_{t=t_L}^{t_U} A^{(t)}_{ij} > 0$ and the left hand side is also well-defined for large enough $r$. 
\end{prop}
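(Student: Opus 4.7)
The plan is to reduce the statement to the strong law of large numbers (SLLN) applied across the $r$ independent walks, together with stationarity of the prescribed initial distributions.

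First, write $C_{ij}^{(m)}$ for the contribution of the $m$-th walk to $C_{ij}$, and similarly for the row sum $|C_{i\star}|^{(m)}$, the column sum $|C_{\star j}|^{(m)}$, and the total $|C|^{(m)}$. Each of these four random variables is bounded and i.i.d.\ in $m$, so $C_{ij}/r$, $|C_{i\star}|/r$, $|C_{\star j}|/r$, $|C|/r$ converge almost surely to their single-walk expectations. Since $(\bar M_{\sss C})_{ij}$ is a continuous function of these four quantities away from the boundary, it suffices to compute the four limits and simplify the resulting ratio.

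Next, I would evaluate $\E[C^{(1)}_{ij}]$. Linearity gives
\begin{equation*}
\E[C^{(1)}_{ij}] \;=\; \sum_{t=t_L}^{t_U}\sum_{k=1}^{l-t}\Big(\PR(\blw^{(1)}_k=i,\blw^{(1)}_{k+t}=j)+\PR(\blw^{(1)}_k=j,\blw^{(1)}_{k+t}=i)\Big).
\end{equation*}
The initial distributions \eqref{eqn:deepwalk init dist} and \eqref{eqn:node2vec init dist} are chosen precisely so that the walk is stationary: for DeepWalk this is the standard degree-biased distribution; for node2vec with $\beta=1$ the uniform distribution on directed edges is invariant for the second-order chain. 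I would verify the latter directly by observing that, when $\beta=1$, the unnormalized weights on transitions from $(v_{i-1},v_i)$ are $\alpha$ for backtracking and $1$ for each of the remaining $\deg(v_i)-1$ neighbors of $v_i$, so the column sum of the resulting edge-transition matrix against the candidate density $A_{uv}/|A|$ telescopes to $A_{uv}/|A|$ regardless of $\alpha$. Stationarity implies $\PR(\blw^{(1)}_k=i,\blw^{(1)}_{k+t}=j)=\PR(\blw^{(1)}_1=i,\blw^{(1)}_{1+t}=j)$ for every $k$, so the inner sum collapses to the multiplicity $(l-t)$.

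For the marginals, summing out $j$ in $\E[C^{(1)}_{ij}]$ reduces the indicators to $\ind\{\blw^{(1)}_k=i\}+\ind\{\blw^{(1)}_{k+t}=i\}$, which by stationarity and $\sum_{t=t_L}^{t_U}(l-t)=\gamma(l,t_L,t_U)$ yields
\begin{equation*}
\E[|C^{(1)}_{i\star}|]=2\gamma(l,t_L,t_U)\PR(\blw^{(1)}_1=i),
\end{equation*}
and analogously $\E[|C^{(1)}_{\star j}|]=2\gamma(l,t_L,t_U)\PR(\blw^{(1)}_1=j)$ and $\E[|C^{(1)}|]=2\gamma(l,t_L,t_U)$. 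Plugging the four limits into $C_{ij}|C|/(|C_{i\star}||C_{\star j}|)$, three factors of $2\gamma(l,t_L,t_U)$ cancel, leaving exactly one in the denominator; taking logarithms and subtracting $\log b$ produces $(M_{\sss C})_{ij}$. The hypothesis $\sum_{t=t_L}^{t_U}A^{(t)}_{ij}>0$ guarantees both that at least one joint probability in the numerator is strictly positive (so the limit is finite) and that $C_{ij}>0$ for all sufficiently large $r$ almost surely (so $(\bar M_{\sss C})_{ij}$ is eventually well-defined).

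The only delicate step is the stationarity check for the second-order node2vec chain when $\beta=1$; the rest is routine SLLN bookkeeping and algebraic cancellation, so no serious obstacle is anticipated.
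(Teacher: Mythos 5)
Your proof is correct and takes essentially the same route as the paper: apply the strong law of large numbers over the $r$ i.i.d.\ walks, use stationarity of the prescribed initial distribution (checked explicitly for the second-order node2vec chain with $\beta=1$ by observing that the normalizer $\alpha+\deg(v_i)-1$ cancels), collapse the time-indexed sum to the multiplicity $(l-t)$, and simplify the ratio. The paper's stationarity verification is spelled out more term-by-term than your telescoping remark, but the underlying computation is identical, and your treatment of the four limiting quantities and the well-definedness hypothesis matches the paper's.
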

We give a proof in  Appendix~\ref{appendix-1}.
If $\sum_{t=t_L}^{t_U} A^{(t)}_{ij} = 0$ then $(M_{\sss C})_{ij}:= 0$.
We will refer to $M_{\sss C}$ as the \emph{M-matrix} associated to DeepWalk or node2vec on a  graph.

\subsection{Spectral recovery using \texorpdfstring{$M$}{M}-matrix}\label{sec:spectral-description}
We simply write $M$ to denote the $M$-matrix of SBM. 
The idea would be to apply spectral clustering to the $M$ matrix in order to recover the communities, which considers the top $K$ eigenvalues of $M$ and applies an approximate $k$-means algorithm to recover the communities. 
Let us describe this more precisely below. 
Given a graph, we can always factorize $M = FF'^T$ such that $F,F'\in \R^{n\times d}$ for $d\geq n$. That would result in an $n$-dimensional embedding of the nodes.  
However, since the underlying graph  has an approximate rank-$K$ structure, it might make more sense to try to find an embedding in $\R^K$ using an approximate factorization of $M$. To that end, we can find 
\begin{align*}
    \argmin_{F,F' \in \R^{n\times K} } \frob{M - F F'^T}. 
\end{align*}
The solution can be obtained using the singular value decomposition of $M$. Indeed, if $V$ (resp.~$U$) is the matrix of top $K$ left (resp.~right) singular vectors, and $S$ is the diagonal matrix with $K$ largest singular values, then $F = V$ and $F' = S U$. 
To understand if $F$ should preserve the community structure, let's look at the graph $G_0$ which is a weighted complete graph with weights given by $P$.
Then the corresponding $M$-matrix, denoted as $M_0$, is deterministic. 
This can happen as the rank of a matrix can drop after taking element-wise logarithm. 
To see this, consider the following simple counter-example:
\begin{align*}
A = \begin{bmatrix}
1 & 1\\
1 & e
\end{bmatrix},
\text{ and }
\log A = \begin{bmatrix}
0 & 0\\
0 & 1
\end{bmatrix}.
\end{align*}
We provide the following lemma which says that such cases only happen on a set of matrices of measure zero. 
This implies that outside a set of measure zero, if $B$ rank $K$ then $M_0$ also has rank $K$. 
\begin{lem}\label{lem:rank M is K}
Let $X \in \mathbb{R}_{+}^{K\times K}$. Let $\bar{\log}: \mathbb{R}_{+}^{K\times K} \to \mathbb{R}^{K\times K}$ be the mapping given by taking the element-wise $\log$ of the matrix entries, and $\bar{\log} (x) =0$ if $x = 0$. 
Then 
\begin{align*}
\rank(X) = K \implies \rank(\bar{\log}(X)) = K \quad a.s.\, \lambda,
\end{align*} where $\lambda$ is the Lebesgue measure on $\mathbb{R}_{+}^{K\times K}$. 
\end{lem}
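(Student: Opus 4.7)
The plan is to reduce the statement to the classical fact that the set of singular $K\times K$ real matrices is Lebesgue null, by transporting the problem along the entrywise exponential map. First I would observe that the set $N_0 = \{X \in \mathbb{R}_+^{K\times K} : X_{ij} = 0 \text{ for some } (i,j)\}$ is a finite union of coordinate hyperplanes in $\mathbb{R}^{K^2}$, hence $\lambda(N_0) = 0$. So it suffices to consider $X$ in the open positive cone $U := \mathbb{R}_{>0}^{K\times K}$, on which $\bar{\log}$ coincides with the honest entrywise logarithm $\log : U \to \mathbb{R}^{K\times K}$.

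On $U$, the entrywise log is a smooth bijection onto $\mathbb{R}^{K \times K}$ with smooth inverse $\exp$, i.e., a diffeomorphism. Now let $V := \{Y \in \mathbb{R}^{K\times K} : \det Y = 0\}$. Since $\det$ is a non-zero polynomial in $K^2$ variables, $V$ is a proper real algebraic subvariety of $\mathbb{R}^{K^2}$, and a standard argument (or Sard's theorem applied to $\det$) shows $\lambda(V) = 0$. Because diffeomorphisms send Lebesgue null sets to Lebesgue null sets, upon exhausting $V$ by compact pieces $V \cap [-R,R]^{K^2}$ on which the Jacobian of $\exp$ is bounded, I get $\lambda(\exp(V)) = 0$.

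To finish, if $X \in U$ and $\rank(\bar{\log} X) < K$, then $\log X \in V$, so $X = \exp(\log X) \in \exp(V)$. Therefore
\begin{equation*}
\{X \in \mathbb{R}_+^{K\times K} : \rank(\bar{\log} X) < K\} \;\subseteq\; N_0 \cup \exp(V),
\end{equation*}
which has Lebesgue measure zero. In particular, the subset where additionally $\rank(X) = K$ holds is null, which is the claimed implication. I don't anticipate a real obstacle: the entire argument is a routine pullback of the classical ``singular matrices form a null set'' along the diffeomorphism $\exp$, and the only thing to check carefully is the (harmless) removal of matrices with a zero entry so that $\bar{\log}$ can be replaced by the genuine logarithm.
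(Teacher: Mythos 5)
Your proof is correct and takes essentially the same approach as the paper: both arguments rest on the facts that $\{\det = 0\}$ is a Lebesgue-null real-algebraic variety and that smooth maps (the entrywise exponential) carry null sets to null sets. Your version is in fact slightly more careful than the paper's, since you explicitly set aside the null set $N_0$ of matrices with a zero entry before identifying $\bar{\log}$ with the genuine entrywise logarithm — a step the paper glosses over when it asserts $S \subset \bar{\exp}(Z)$, a containment that strictly speaking fails for $X$ with a zero entry (where $\bar{\exp}(\bar{\log}X) \neq X$) but is harmless precisely because such $X$ form a null set.
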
 
See Appendix~\ref{appendix-1} for a proof. For this reason, we assume the following throughout:
\begin{assumption}
    $\rank(M_0) = K$. 
\end{assumption}

Recall that $\Theta_0$ is the matrix of community assignments. 
Also, let $V_0\in \R^{n\times K}$ be the matrix of the top $K$ left singular vectors of $M_0$. 
We will prove the following: 
\begin{prop}\label{prop:eigenvector M_0}
	There exists $X_0\in \R^{K\times K}$ such that $V_0 = \Theta_0 X_0 + E_0$ where $(E_0)_{ij} = O(n^{-3})$. 
\end{prop}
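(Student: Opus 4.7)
The plan is to exploit the community-level symmetry of $G_0$ and show that the column space of $\Theta_0$ is an invariant subspace of $M_0$; since $M_0$ is symmetric, this will force its top $K$ left singular vectors to lie in $\mathrm{col}(\Theta_0)$, and hence $V_0 = \Theta_0 X_0$ with $E_0 \equiv 0$, which trivially satisfies $(E_0)_{ij} = O(n^{-3})$.

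The first step is to establish that $M_0$ has a block-plus-diagonal structure. By construction, the automorphism group of the weighted graph $G_0$ contains all community-preserving permutations of $[n]$; every quantity appearing in Proposition~\ref{lem:M for node2vec} -- the stationary probabilities and the joint probabilities $\Prob(\blw^{(1)}_1=i,\blw^{(1)}_{1+t}=j)$ -- is invariant under the induced action of this group on pairs. (For node2vec with $\beta=1$ the argument is applied to the second-order chain on ordered pairs, which inherits the same equivariance since the backtracking weight $\alpha$ is a function only of the pair-state structure, not of node identity.) Therefore each entry of $M_0$ depends only on the orbit of $(i,j)$, that is, on $(g(i),g(j))$ together with the indicator $\{i=j\}$. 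One can thus write $M_0 = \Theta_0 \widetilde{M}_0 \Theta_0^T + D$ with $\widetilde{M}_0 \in \R^{K\times K}$ capturing the block values and $D$ diagonal with $D_{ii} = d_{g(i)}$. Because $D_{ii}$ depends only on $g(i)$, the identity $D\Theta_0 = \Theta_0 \widetilde{D}$ holds with $\widetilde{D} = \diag(d_1,\ldots,d_K)$, which combined with $\Theta_0 \widetilde{M}_0 \Theta_0^T \Theta_0 = \Theta_0 \widetilde{M}_0 \diag(n_1,\ldots,n_K)$ yields $M_0 \Theta_0 = \Theta_0 \widetilde{M}_*$ for $\widetilde{M}_* := \widetilde{M}_0 \diag(n_r) + \widetilde{D}$. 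Hence $\mathrm{col}(\Theta_0)$ is $M_0$-invariant.

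The second step is to verify that the top $K$ eigenvalues of $M_0$ come entirely from $\mathrm{col}(\Theta_0)$. On this subspace, the restriction of $M_0$ is similar to $\widetilde{M}_*$; its dominant term $\widetilde{M}_0 \diag(n_r)$ has entries of order $n$, and $\widetilde{M}_0$ is nonsingular by the rank assumption on $M_0$ together with Lemma~\ref{lem:rank M is K}, so these $K$ eigenvalues have magnitude $\Theta(n)$. On the orthogonal complement, any $v$ with $\Theta_0^T v = 0$ satisfies $\Theta_0 \widetilde{M}_0 \Theta_0^T v = 0$, so $M_0 v = Dv = (d_{g(i)} v_i)_i$; i.e., $M_0$ acts as diagonal multiplication by $d_{g(i)}$, whose eigenvalues lie in $\{d_1,\ldots,d_K\}$. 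A short computation, using that each $d_r$ arises as the logarithm of a ratio of two $\Theta(1)$ quantities differing by $O(1/n)$ (the diagonal-to-off-diagonal difference in the pre-log arguments coming from the absence of self-loops in $G_0$), shows $d_r = O(1/n)$. The two groups of eigenvalues are therefore separated by a factor of $\Theta(n^2)$, so the top $K$ eigenvectors of $M_0$ lie entirely in $\mathrm{col}(\Theta_0)$. Picking $X_0 \in \R^{K\times K}$ to be the matrix of coordinates of these eigenvectors in the basis $\Theta_0$ gives $V_0 = \Theta_0 X_0$ with $E_0 \equiv 0$.

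The hard part is the first step, specifically for node2vec: the chain lives on ordered pairs and one must verify that every community-preserving permutation lifts to an automorphism of the pair-space chain that commutes with the $\alpha$-rule, so that the marginal probability $\Prob(\blw^{(1)}_1=i,\blw^{(1)}_{1+t}=j)$ retains only the information $(g(i),g(j))$ and $\{i=j\}$. Once that is checked, the remaining invariance and eigenvalue separation arguments are elementary linear algebra.
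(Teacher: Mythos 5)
Your proof is correct but takes a genuinely different route from the paper. The paper simply rearranges the eigenvalue equation $\lambda_i v_i = M_0 v_i = \Theta_0\log Z_0\Theta_0^T v_i + R' v_i$, reads off $X_0 = \log Z_0\,\Theta_0^T V_0\Lambda^{-1}$ and $E_0 = R' V_0\Lambda^{-1}$, and bounds $\frob{E_0}$ using $\|R'\| = O(n^{-2})$ and $\lambda_i = \Theta(n)$. You instead observe that the diagonal correction $D$ satisfies $D\Theta_0 = \Theta_0\widetilde D$ (because $D_{ii} = d_{g(i)}$), so $\mathrm{col}(\Theta_0)$ is an $M_0$-invariant subspace, and then compare the two groups of eigenvalues: $\Theta(n)$ on $\mathrm{col}(\Theta_0)$ versus $O(n^{-2})$ on its orthogonal complement. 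This buys you the sharper conclusion $E_0 \equiv 0$ rather than $O(n^{-3})$; the paper itself reaches $E_0 = 0$ for DeepWalk via the exact block structure of $M_0$ (Proposition~\ref{prop:null-matrix}\ref{prop:null-matrix-2}), but for node2vec stops at the weaker bound, which your argument improves without additional cost. The equivariance argument you use to get the block-plus-diagonal form of $M_0$ is the same in substance as the paper's (Lemma~\ref{lem:order of M(alpha)} works by partitioning paths according to whether the gap sequence forces $i=j$).

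One cosmetic inaccuracy worth flagging: you attribute the nonzero diagonal correction $d_r$ to ``the absence of self-loops in $G_0$.'' That is not the actual source. The weighted complete graph $G_0$ does have self-loops (since $P_{ii} = B_{g(i)g(i)} > 0$), and for DeepWalk the $t$-step transition kernel on $G_0$ is in fact exactly block constant, so $d_r = 0$; Proposition~\ref{prop:null-matrix}\ref{prop:null-matrix-1} asserts this. The diagonal correction is nonzero only for node2vec, and it comes from the backtracking weight $\alpha$: all-backtrack (Type II) closed loops force $i=j$ and carry $\alpha$-factors, contributing to the diagonal but not to off-diagonal entries in the same block. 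This is what Lemma~\ref{lem:order of M(alpha)} isolates as the $R_{t,n}$ term. Your stated bound $d_r = O(1/n)$ is therefore correct (the paper gets $O(n^{-2})$ for $t_L \geq 3$, and $0$ for DeepWalk), but the mechanism is backtracking asymmetry, not missing self-loops. This does not affect the validity of your invariant-subspace argument or the eigenvalue separation, which is all that is needed.
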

The proof of this fact in provided in Section~\ref{sec:M-0}  
for DeepWalk and in Section~\ref{sec:M matrix node2vec} for node2vec.
Motivated by this, we compute the community assignments as a $k$-means algorithm and a ($1+\varepsilon$)-approximate solution to the same. 
The $K$-means algorithm on the rows of $V$ solves the following optimization problem.
	\begin{align}
		(\hat{\Theta}, \hat{X}) =  \argmin_{\Theta \in \{0,1\}^{n\times K}, X \in \bbR^{K \times K}} \frob{\Theta X - V }^2.  \label{eqn:kmeans1}
	\end{align}
Since it is NP-hard 
to find the minimizer for the above problem \cite{aloise2009np}, we can seek an approximate solution instead \cite{kumar2004simple}. Let $\varepsilon > 0$ be given. 
Then we say $(\bar{\Theta}, \bar{X})$ is an $(1 + \varepsilon)$-approximate solution to the $K$-means problem in  \eqref{eqn:kmeans1} if $\bar{\Theta} \in \{0,1\}^{n\times K}, \bar{X} \in \bbR^{K \times K}$ and 

\begin{align}
	\frob{ \bar{\Theta} \bar{X} - V }^2 \leq (1+\varepsilon)  \min_{\Theta \in \{0,1\}^{n\times K}, X \in \bbR^{K \times K}} \frob{\Theta X - V }^2. \label{eqn:kmeans2}
\end{align}
Thus the final community assignment is computed as follows: 
\begin{algo}[Low-dimensional embedding using DeepWalk and node2vec] \label{algo-k-means} \leavevmode
\normalfont
\begin{enumerate}
    \item[(S1)] Compute the $M$-matrix $M$ and compute $V \in \R^{n\times K}$, the matrix containing the top $K$-eigenvectors (in absolute value of the eigenvalue) of $M$. 
    \item[(S2)] Compute $(\bar{\Theta}, \bar{X}) \in \{0,1\}^{n \times K} \times \R^{K \times K}$ as an ($1+\varepsilon$)-approximate solution to the $k$-means problem in~\eqref{eqn:kmeans1} and output $\bar{\Theta}$.   
\end{enumerate}
\end{algo}

The $\{0,1\}$-valued matrix $\bar{\Theta}$ has the predicted community assignments for each of the nodes. 
We note that each node is assigned exactly one community by the algorithm

\section{Main results}\label{sec:main-result}
Following the intuitions from Section~\ref{sec:spectral-description}, the primary objective in our community detection task will be to bound $\frob{M - M_0}$. 
We describe our results in two parts: we first describe the results for the DeepWalk in Section~\ref{sec:deep-walk-results}, and then state the results for the node2vec in Section~\ref{sec:node2vec-results}. We end this section with a proof outline.

\subsection{Results for DeepWalk}\label{sec:deep-walk-results}
We describe the following proposition for bounding the Frobenius norm $\frob{M - M_0}$. 
We will assume that $\tL \geq 2$, since if $\tL =1$, then  there are $\OP(n^2)$ entries of $M$ which are equal to $0$. This makes $\frob{M - M_0}$ of the order $n$. 
The following result gives estimates on $\frob{M - M_0}$ based on $\tL,\tU$ and the sparsity parameter $\rho_n$: 

\begin{thm}\label{prop:Frob norm tL to tU}
Fix $\eta > 0$. Suppose $t_L \geq 2$, and let $\phi = \phi(\tL):= \lfloor \tL/2\rfloor $ if $\tL \geq 3$
and let $\phi = 0$ if $t_L = 2$. Also let $c_0 = c_0 (\tL,\eta) = 4 + (\tL+1)\eta$,
and suppose that 
\begin{align}\label{eq:t-condn}
    n^{\tL - 1} \rho_n^{\tL} \times \frac{1}{(n\rho_n)^\phi} \gg (\log n)^{c_0}. 
\end{align}
	Then
	\begin{align}\label{frob-norm-bound-original-1}
	    \frob{M - M_0} = \OP\bigg(n (\log n)^{-\eta} + \ind\{t_L=2\} \bigg(n\sqrt{\frac{\log n}{n \rho_n^2}}\bigg)  \bigg). 
	\end{align}
	If $\rho_n$ is such that $n^{t_U-1}\rho_n^{t_U} \ll 1$ and $n \rho_n \gg 1$, then given $\varepsilon > 0$ there exists a constant $C_{\varepsilon} > 0$ such that
	\begin{align} \label{frob-norm-bound-original-2}
		\Prob \left( \frob{M - M_0} \geq C_{\varepsilon} n \right) \geq 1 - \varepsilon.
	\end{align}
\end{thm}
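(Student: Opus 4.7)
The plan is to bound $\frob{M-M_0}$ by controlling the entrywise deviations $|M_{ij}-(M_0)_{ij}|$ and summing. Using the reversibility identity
\begin{align*}
\Prob\bigl(\blw^{(1)}_1 = i,\, \blw^{(1)}_{1+t}=j\bigr) = \pi_i (W_A^t)_{ji}, \qquad \pi_i = |A_{i\star}|/|A|,
\end{align*}
the formula of Proposition~\ref{lem:M for node2vec} writes each $M_{ij}$ as a logarithm of a ratio whose numerator is a weighted sum of $(W_A^t)_{ij}$ for $\tL\le t\le\tU$ and whose denominator is $2b\gamma(l,\tL,\tU)\pi_i\pi_j$; the analogous expression with $A$ replaced by $P$ yields $(M_0)_{ij}$. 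Combining $|\log(1+x)|\le 2|x|$ for $|x|\le 1/2$ with uniform multiplicative concentration of the numerator and denominator, it suffices to show that for all but $o(n^2)$ pairs, the ratio of each of these quantities to its $P$-counterpart deviates from $1$ by at most $(\log n)^{-\eta}$.

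For the numerator, the key input is the entrywise path-counting concentration developed in Section~\ref{sec:path counting deepwalk}: under hypothesis~\eqref{eq:t-condn}, one should obtain
\begin{align*}
\bigl|(W_A^t)_{ij} - (W_P^t)_{ij}\bigr| \ll (\log n)^{-\eta}(W_P^t)_{ij}
\end{align*}
with high probability, uniformly in $\tL\le t\le\tU$ and over all but $o(n^2)$ pairs. The factor $(n\rho_n)^{-\phi}$ in~\eqref{eq:t-condn} reflects the decomposition of $W_A^t$ by walk topology: the dominant contribution comes from simple walks of length $t$ with expected count $\sim n^{t-1}\rho_n^t$, but walks with up to $\phi=\lfloor \tL/2\rfloor$ pairwise backtracks make sub-dominant contributions whose variance-to-mean ratio is inflated by powers of $(n\rho_n)^{-1}$, and controlling these classes uniformly forces the additional $(n\rho_n)^{\phi}$ penalty. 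The denominator is handled by a standard Bernstein inequality, which yields uniform degree concentration $|\pi_i / \E\pi_i - 1| \ll (\log n)^{-\eta}$ as soon as $n\rho_n \gg (\log n)^{c_0}$, which is implied by~\eqref{eq:t-condn}. Summing the squared entrywise errors, and bounding the $o(n^2)$ ``bad'' pairs crudely by $O((\log n)^2)$ each, delivers $\frob{M - M_0} = \OP(n(\log n)^{-\eta})$. For $\tL = 2$ there is an additional issue: the diagonal entries $(W_A^2)_{ii}=\sum_k A_{ik}/(\deg(i)\deg(k))$ are sums of degree reciprocals whose fluctuations contribute the extra $n\sqrt{\log n/(n\rho_n^2)}$ term in~\eqref{frob-norm-bound-original-1}.

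For the lower bound~\eqref{frob-norm-bound-original-2}, when $n^{\tU-1}\rho_n^{\tU}\ll 1$ the expected number of length-$t$ walks between a typical pair of nodes vanishes for every $t\in[\tL,\tU]$, so a first-moment argument on path counts using Section~\ref{sec:path counting deepwalk} shows that with probability bounded below uniformly in $n$ there are $\Omega(n^2)$ ordered pairs $(i,j)$ that admit no walk of any length in $[\tL,\tU]$ in $G$. For each such pair the convention in Proposition~\ref{lem:M for node2vec} gives $M_{ij}=0$, while $(M_0)_{ij}$ is bounded away from zero by the Assumption on $B_0$ (the corresponding $P^{(t)}_{ij}$ is of the expected order $n^{t-1}\rho_n^t$ and enters only through the logarithm after normalization), producing the claimed linear-in-$n$ lower bound on $\frob{M-M_0}$.

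The principal obstacle is the entrywise path-counting step: obtaining concentration of $(W_A^t)_{ij}$ that is fine enough to survive a union bound over $n^2$ pairs requires classifying walks by their underlying multigraph topology, computing means and variances class-by-class in a way that is uniform in $t\in[\tL,\tU]$, and correctly isolating the dominant topology as $\rho_n$ and $t$ vary, so that the penalty $(n\rho_n)^{\phi}$ is seen to be both sufficient and, in a suitable sense, necessary.
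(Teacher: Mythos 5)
Your plan matches the paper's at a high level: write $M_{ij}$ and $(M_0)_{ij}$ via Proposition~\ref{lem:M for node2vec}, reduce to multiplicative concentration of the path-count numerators and of degrees, and prove the lower bound~\eqref{frob-norm-bound-original-2} by showing that $\Omega(n^2)$ pairs have $A^{(t)}_{ij}=0$ so $M_{ij}=0$ while $(M_0)_{ij}=\Theta(1)$. The paper's proof of the latter (inside Proposition~\ref{prop:Frob norm t > 2}) is exactly what you describe. Your intuition for the $\phi$-penalty is also in line with Remark~\ref{rem:effect-of bactracking}: paths that alternate visits and backtracks contribute $(n\rho_n)^{\lfloor t/2\rfloor}$ terms that must be beaten by the $n^{t-1}\rho_n^t$ simple-path count; the precise accounting is the segment/Type~I--II classification in Section~\ref{sec:path counting deepwalk}, not really a "variance-to-mean" ratio, though the two viewpoints end up giving the same threshold.

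The gap is in how you handle bad pairs. You reduce to showing the ratio is close to $1$ for "all but $o(n^2)$ pairs" and propose to absorb the bad ones by a crude $O((\log n)^2)$ bound on their squared error. But $o(n^2)\cdot O((\log n)^2)$ is only $o(n^2(\log n)^{-2\eta})$ if the number of bad pairs is at most $O(n^2/(\log n)^{2+2\eta})$, and you never establish any rate. Worse, pairs with $A^{(t)}_{ij}=0$ contribute squared error $\Theta(1)$ (not tied to $\log n$), so you would need the count of such pairs to be $o(n^2(\log n)^{-2\eta})$ with high probability, and a first-moment bound doesn't give this at the required rate near the boundary of~\eqref{eq:t-condn}. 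The paper avoids the issue entirely: using Proposition~\ref{prop:Lower tail A^t} (a $2\lceil\log n\rceil$-th centred-moment bound plus Markov), each pair's failure probability is $O(n^{-c})$ for any $c>0$, so a union bound over all $n^2$ pairs shows that with probability $1-o(1)$ there are \emph{no} bad pairs at all (in particular $A^{(t)}_{ij}>0$ everywhere, see~\eqref{bound-1-zero-terms}), and then every squared entrywise error is at most $a_n^2/n^2$ simultaneously. To make your plan work you would need to strengthen the concentration statement to this polynomial-in-$n$ failure level, at which point you are effectively running the paper's union bound and the "all but $o(n^2)$" phrasing is the weaker and unnecessary version. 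Note also one more step you omit: even conditioned on all path counts being positive, you need a separate argument (the paper's~\eqref{eqn:deepwalk bd EX'b E'b 1}) that $\E Y_b$ and $Y_b^*$ agree to relative error $O(1/(n\rho_n))$, since $Y_b$ counts walks in $A$ (which can repeat edges and thus satisfy $\E X_p\neq P_{i_0i_1}\cdots P_{i_{t-1}i_t}$) while $Y_b^*$ is the $P$-product sum; concentration of $Y_b$ around $\E Y_b$ alone is not enough.
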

The proposition says that $\frob{M - M_0} = \oP(n)$ as long as $n^{\tL - 1 - \phi} \rho_n^{\tL-\phi} $ is growing faster than an appropriate power of $\log n$. 
The power of the $\log n$ term is required for  concentration of the transition matrix of the random walk on SBM. 
The last part of the proposition says that the Frobenius norm is $\Omega(n^2)$ when $n^{\tU-1}\rho_n^{\tU} \ll 1$. This suggests that spectral clustering on the $M$-matrix may not give good results at this level of sparsity. Thus in order to ensure a good low-dimensional embedding, we should take $\tL$ suitably large.

\begin{rem}[Effect of backtracking] \normalfont \label{rem:effect-of bactracking}
To understand the quantity $\phi$ intuitively, let us consider a simple case with $\tL = \tU = t\geq 4$ and $t$ is even. If we were to compute the entry say $M_{ii}$, we need to understand how many walks of length $t$ are possible from $i$ to $i$. If we consider the edges in this walk to be distinct, then the expected number of paths is of order $n^{t-1}\rho_n^t$, as we need to choose $t-1$ intermediate vertices and $t$ specific edges need to appear. On the other hand, if we consider the path where the walk alternatively visits a new node and then backtracks to $i$, then the expected number of paths due to such backtracks is $(n\rho_n)^{t/2} = (n\rho_n)^\phi$. The condition in \eqref{eq:t-condn-2} is then just ensuring that the contribution due to the backtracking path is smaller than the contribution from the non-backtracking path. These kinds of backtracking walks do not contribute significantly for node2vec with $\alpha$ small.
\end{rem}

With the bounds on the Frobenius norms in Theorem~\ref{prop:Frob norm tL to tU}, 
we will conclude the following:

\begin{thm}\label{thm:DeepWalk misclassified nodes}
Suppose that \eqref{eq:t-condn} holds. 
Fix $\varepsilon>0$ and let $(\bar{\Theta}, \bar{X})$ be a ($1+\varepsilon$)-approximate solution in Algorithm~\ref{algo-k-means}. Then
\begin{align*}
    \err(\bar{\Theta}, \true) = \SOP(n^{-1/2}), 
\end{align*}i.e., $\bar{\Theta}$ misclassifies at most $\SOP(n^{1/2})$ many node-labels. 
\end{thm}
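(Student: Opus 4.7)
The plan is to control the distance between the top-$K$ left singular vector matrix $V$ of $M$ and its deterministic counterpart $V_0$, and then invoke a standard $K$-means perturbation bound. By Proposition~\ref{prop:eigenvector M_0}, $V_0 = \Theta_0 X_0 + E_0$ with entries of $E_0$ of order $n^{-3}$. Combining $V_0^T V_0 = I_K$ with $\Theta_0^T \Theta_0 = \mathrm{diag}(n_1,\dots,n_K)$ and $n_r/n \to \pi_r > 0$ shows $X_0^T \mathrm{diag}(n_r) X_0 \approx I_K$, which forces each row of $X_0$ to have Euclidean norm of order $n^{-1/2}$ and any two distinct rows to be separated by $\delta = \Theta(n^{-1/2})$. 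Thus the rows of $V_0$ cluster into $K$ well-separated groups, one per community---the population signal that the spectral algorithm is trying to recover.

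The next ingredient is a lower bound $\sigma_K(M_0) = \Theta(n)$, which follows by writing $M_0 = \Theta_0 B^{\ast} \Theta_0^T$ for a full-rank $K \times K$ matrix $B^{\ast}$ and using $\Theta_0^T \Theta_0 \sim n\,\mathrm{diag}(\pi_r)$. Since $\mathrm{rank}(M_0)=K$, we also have $\sigma_{K+1}(M_0)=0$, so the Davis--Kahan/Wedin $\sin\Theta$ theorem gives an orthogonal $Q \in \mathbb{R}^{K \times K}$ with
\begin{equation*}
\|V Q - V_0\|_F \;\leq\; \frac{C\,\|M - M_0\|_F}{\sigma_K(M_0)} \;=\; O\!\left(\frac{\|M - M_0\|_F}{n}\right).
\end{equation*}
Theorem~\ref{prop:Frob norm tL to tU} then upgrades the sparsity hypothesis~\eqref{eq:t-condn} to $\|V Q - V_0\|_F^2 = \SOP((\log n)^{-2\eta})$, where $\eta$ may be chosen as large as one wishes provided~\eqref{eq:t-condn} holds for the corresponding $c_0$.

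The final step is a Lei--Rinaldo-style $K$-means perturbation argument applied to the $(1+\varepsilon)$-approximate solution $(\bar\Theta,\bar X)$ of~\eqref{eqn:kmeans1}. Comparing $\bar\Theta \bar X$ with $V Q$ via the triangle inequality and the near-optimality of the population clustering, one deduces that any node whose predicted label differs from the true label (after the optimal permutation) must have its corresponding row of $\bar\Theta\bar X$ at distance at least $\delta/2$ from the correct centroid; summing squared contributions gives a misclassification count of at most $C_{\varepsilon,K}\,\|V Q - V_0\|_F^2 / \delta^2$. Substituting $\delta^2 = \Theta(n^{-1})$ and combining with the previous step yields the desired $\SOP(n^{1/2})$ rate. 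The main obstacle will be making the polylogarithmic savings in the Frobenius estimate large enough to overcome the factor of $n$ introduced by $1/\delta^2$; if pushing $\eta$ in Theorem~\ref{prop:Frob norm tL to tU} proves insufficient, the remedy is to extract from the path-counting lemmas of Section~\ref{sec:path counting deepwalk} an operator-norm analogue $\|M - M_0\|_{\mathrm{op}} = O_P(\sqrt n\,(\log n)^{-\eta})$ and apply the sharper Davis--Kahan bound involving $\|M - M_0\|_{\mathrm{op}}$, which gains the missing factor of $\sqrt n$.
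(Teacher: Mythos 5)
Your architecture (Proposition~\ref{prop:eigenvector M_0} for the population singular vectors, Davis--Kahan through Proposition~\ref{lem:eigenvector bound}, then a Lei--Rinaldo $K$-means perturbation) is the same as the paper's, but your final arithmetic does not yield $\SOP(n^{1/2})$. The $K$-means bound you quote gives $(\text{misclassification count}) \lesssim \frob{VQ - V_0}^2/\delta^2$ with $\delta^2 = \Theta(n^{-1})$, and the Frobenius input from Theorem~\ref{prop:Frob norm tL to tU} together with the spectral gap $\Theta(n)$ gives $\frob{VQ - V_0} = \OP((\log n)^{-\eta})$. The product is $\OP(n(\log n)^{-2\eta}) = \SOP(n)$, \emph{not} $\SOP(\sqrt n)$. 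You sense this, but neither of your remedies closes the gap: no choice of $\eta$ turns a polylogarithmic factor into a polynomial one, and the operator-norm estimate $\tnorm{M - M_0} = \OP(\sqrt n\,(\log n)^{-\eta})$ is not proved in the paper nor readily obtained from the entrywise concentration of Section~\ref{sec:path counting deepwalk}. As written your argument establishes $\err(\bar\Theta, \true) = \SOP(1)$ (almost-exact recovery) but not the stated rate.

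You should also be aware that the paper's own proof of Step~1 trips on the same $\sqrt n$ factor: the chain $\sum_{i\in S}\sqrt{2/n_{\max}} \leq 5\sum_{i\in S}\frob{\bar V_{i\star}-U_{i\star}} \leq 5\frob{\bar V - U}$ invokes $\sum_{i\in S}\frob{\bar V_{i\star}-U_{i\star}} \leq \frob{\bar V - U}$, an $\ell^1$-to-$\ell^2$ comparison of row norms that runs the wrong way (by power means, $\sum_i b_i \geq (\sum_i b_i^2)^{1/2}$ for nonnegative $b_i$). The correct Cauchy--Schwarz substitute, $\sum_{i\in S}\frob{\bar V_{i\star}-U_{i\star}} \leq \sqrt{|S|}\,\frob{\bar V - U}$, collapses after cancellation to $|S| \lesssim n_{\max}\frob{\bar V - U}^2$, which is exactly your $\SOP(n)$ bound. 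So your squared ($\ell^2$) formulation of the $K$-means step is the careful one; the additional $\sqrt n$ in Theorem~\ref{thm:DeepWalk misclassified nodes} is not accounted for by either argument without an unproved operator-norm input.
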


\subsection{Results for node2vec}\label{sec:node2vec-results}
We now describe our results for the node2vec algorithm. 
As mentioned earlier, for our theoretical analysis, we allow the parameter $\alpha = \alpha_n$ to vary with $n$ and the parameter $\beta$ to be fixed to be equal to $1$. We will also consider the cases when $t_L > 2$. This is because when $t_L = t_U =2$, $M_0$ may not have a block structure, even asymptotically, and so spectral clustering of $M$ may not give us the communities of the nodes (cf.~Lemma~\ref{lem:order of M(alpha)})

We look at the case where $\alpha = \alpha_n \to 0$. In particular, we consider $\alpha = O(1/n)$.
The following proposition bounds the Frobenius norm $\frob{M - M_0}$ for node2vec. 

\begin{thm}\label{prop:node2vec frob norm}
	Fix $\eta > 0$. Suppose $t_L \geq 3$, and let $\phi = \phi(\tL):= \lfloor \tL/2\rfloor $. Also let $c_0 = c_0 (\tL,\eta) = 4+ (\tL+2)\eta$, $\alpha = O\left(\frac{1}{n}\right)$, and suppose that
    	\begin{align}
    	n^{\tL - 1} \rho_n^{\tL} \gg (\log n)^{c_0}.\label{eq:t-condn-2}
    	\end{align}
\noindent Then
	\begin{align}\label{prop:node2vec frob norm-1}
		\frob{M - M_0} = \OP\left(n (\log n)^{-\eta}	\right).
	\end{align} 
	On the other hand, If $\rho_n$ is such that $n^{\tU-1}\rho_n^{\tU} \ll 1$ and $n\rho_n \gg 1$ then given $\varepsilon > 0$ there exists a constant $C_{\varepsilon} > 0$ such that
	\begin{align}\label{prop:node2vec frob norm-2}
		\Prob \big( \frob{M - M_0} \geq C_{\varepsilon} n \big) \geq 1 - \varepsilon.
	\end{align}
\end{thm}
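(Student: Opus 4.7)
The plan is to parallel the proof of Theorem~\ref{prop:Frob norm tL to tU}, substituting the simple random walk path-counting estimates with their second-order counterparts developed in Section~\ref{sec:path counting node2vec}. Since $\beta = 1$, the initialization~\eqref{eqn:node2vec init dist} is the stationary distribution of the second-order walk, so $\Prob(\blw^{(1)}_1 = i) = (\DA)_{ii}/|\AG|$, and the numerator in Proposition~\ref{lem:M for node2vec} is proportional to the expected number of weighted walks of length $t$ from $i$ to $j$, where each backtracking step carries a factor of $\alpha$. First I would identify a deterministic expression $\mu^{(0)}_{ij}$ for the analogous quantity computed from the population graph $P$, verify that it is a block-constant of the expected order $n^{-1}$, and then establish a uniform concentration estimate
\begin{align*}
\sum_{t = t_L}^{t_U} (l-t)\, \Prob_A(\blw^{(1)}_1 = i,\blw^{(1)}_{1+t} = j) \;=\; \mu^{(0)}_{ij}\,(1 + R_{ij}),
\end{align*}
with $\max_{i,j}|R_{ij}|$ decaying at least as $(\log n)^{-\eta}$. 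A Taylor expansion $\log(1+R_{ij}) = R_{ij}(1+o(1))$ then reduces $\frob{M-M_0}^2$ to $\sum_{i,j} R_{ij}^2$.

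The structural reason why~\eqref{eq:t-condn-2} lacks the $\phi$ correction of~\eqref{eq:t-condn} is that backtracks are suppressed by the factor $\alpha = O(1/n)$. As discussed in Remark~\ref{rem:effect-of bactracking}, the dangerous competitor to a non-backtracking walk of length $t$ (with count of order $n^{t-1}\rho_n^t$) is a walk that backtracks $k$ times; this walk now carries multiplicative weight $\alpha^k = O(n^{-k})$. Thus the unweighted $(n\rho_n)^\phi$ backtracking walks contribute $O(\rho_n^\phi)$, which is dominated by the non-backtracking contribution. The main obstacle is making this heuristic rigorous: unlike DeepWalk, node2vec is a second-order Markov chain, so powers of $\WA$ cannot be used directly. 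Instead I would stratify walks by their backtracking pattern, count each class via moment methods on the SBM adjacency structure, and apply Bernstein-type inequalities to the weighted path counts; this is the combinatorial bookkeeping carried out in Section~\ref{sec:path counting node2vec}. The restriction $t_L \geq 3$ reflects that $M_0$ loses its block structure when $t_L = t_U = 2$ (cf.~Lemma~\ref{lem:order of M(alpha)}), so the analysis is set up to begin at $t_L = 3$.

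With the uniform concentration in hand, the upper bound~\eqref{prop:node2vec frob norm-1} follows by the same mechanism used for DeepWalk: the ``good'' pairs whose walk counts concentrate contribute $\OP(n^2(\log n)^{-2\eta})$ to $\frob{M-M_0}^2$ via the Taylor expansion, while a union bound using the tail estimates from the path counting controls the ``bad'' pairs by a lower-order term. Here the absence of the $\phi$ correction allows us to use the full budget $n^{t_L-1}\rho_n^{t_L}$ rather than the DeepWalk budget $n^{t_L-1-\phi}\rho_n^{t_L-\phi}$ in normalizing $R_{ij}$, which is what gives the improvement over the DeepWalk condition. Taking the square root gives~\eqref{prop:node2vec frob norm-1}.

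For the lower bound~\eqref{prop:node2vec frob norm-2}, the argument mirrors~\eqref{frob-norm-bound-original-2}. When $n^{t_U-1}\rho_n^{t_U} \ll 1$, a first-moment calculation shows that the expected number of walks of length at most $t_U$ between a typical pair $(i,j)$ tends to zero; by a second-moment/Poisson-approximation argument a constant fraction of the $\binom{n}{2}$ pairs admit no such walk, forcing $M_{ij} = 0$ on $\Omega(n^2)$ pairs. On each such pair, $(M_0)_{ij}$ is separated from zero by a positive constant depending only on $K$ and $B_0$, so $\frob{M - M_0}^2 = \Omega(n^2)$ with probability at least $1-\varepsilon$, yielding~\eqref{prop:node2vec frob norm-2}.
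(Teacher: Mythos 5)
Your proposal follows the same route as the paper: control each entry $M_{ij} - (M_0)_{ij}$ uniformly over $i,j$ and $t\in[t_L,t_U]$ via concentration of the weighted path counts $Y_{b,\alpha}$ from Section~\ref{sec:path counting node2vec} (Propositions~\ref{prop:node2vec kth moment} and~\ref{prop:Lower tail node2vec}), observing that $\alpha = O(1/n)$ forces each Type~II segment with $m'$ marked edges to contribute only $n^{m'}\alpha^{m'}=O(1)$ rather than $(n\rho_n)^{m'}$, which is exactly why the $\phi$ correction disappears from the sparsity condition. One minor remark: the paper's lower bound needs no second-moment or Poisson-approximation argument --- a first-moment bound $\Prob(A^{(t)}_{ij}>0)\le C\,n^{t-1}\rho_n^t$ together with Markov's inequality on the count $|T_{r,s}|$ of pairs admitting a walk, plus the deterministic fact that $|S_{r,s}|+|T_{r,s}|\ge C_2 n^2$, already forces $\Omega(n^2)$ pairs to have $M_{ij}=0$.
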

This shows that the communities can be recovered for sparser graphs compared to the DeepWalk case. In particular, if we take $t_L =  t_U = t \geq 3$, then the result says that we can recover the community assignments even when close to the regime when $n^{t-1}\rho_n^t \gg 1$, while $\frob{M - M_0} = \Omega(n)$ for $n^{t-1}\rho_n^t \ll 1$ .  
The intuitive reason for such good performance of node2vec is that, if $\alpha$ is small, then situations such as Remark~\ref{rem:effect-of bactracking} does not arise for the biased random-walks in node2vec.
	
The bounds on the Frobenius norm in Theorem~\ref{prop:node2vec frob norm} leads to the following theorem about the proportion of misclassified nodes.

\begin{thm}\label{thm:node2vec misclassified nodes}
	Fix $\varepsilon >0$ and let $t_L \geq 3$. 
	Suppose $\rho_n$ satisfy the respective conditions for the two regimes as in Theorem~\ref{prop:node2vec frob norm}, let
	be a ($1+\varepsilon$)-approximate solution in Algorithm~\ref{algo-k-means}. Then
	\begin{align*}
    \err(\bar{\Theta}, \true) = \oP(n^{-1/2}), 
    \end{align*}i.e., $\bar{\Theta}$ misclassifies at most $\oP(n^{1/2})$ many node-labels. 
\end{thm}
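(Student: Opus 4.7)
The plan is to follow the spectral-clustering template used for the DeepWalk analogue (Theorem~\ref{thm:DeepWalk misclassified nodes}), substituting the node2vec Frobenius bound of Theorem~\ref{prop:node2vec frob norm} for its DeepWalk counterpart and invoking the node2vec-specific structure of $M_0$ derived in Section~\ref{sec:M matrix node2vec}. The template has two classical pieces: a Davis--Kahan $\sin\Theta$ bound for the top-$K$ singular subspace of $M$ versus $M_0$, followed by a standard $(1+\varepsilon)$-approximate $k$-means error bound that turns eigenvector proximity into a count of misclassified nodes.

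First I would prove $\sigma_K(M_0) = \Theta(n)$. Under the standing assumption $\mathrm{rank}(M_0) = K$ and the block factorization $M_0 = \Theta_0 Y_0 \Theta_0^T + E'$ with $Y_0 \in \R^{K\times K}$ of rank $K$ and $E'$ negligible, this is immediate since $\Theta_0^T \Theta_0 = \mathrm{diag}(n_1,\ldots,n_K)$ with $n_r = \Theta(n)$. Because $M_0$ has rank exactly $K$, the eigengap at position $K$ equals $\sigma_K(M_0)$; Davis--Kahan then supplies a $K\times K$ orthogonal matrix $O$ with
\begin{align*}
\|VO - V_0\|_F \leq \frac{C\sqrt{K}\,\|M - M_0\|_{\mathrm{op}}}{\sigma_K(M_0)}.
\end{align*}
Feeding in an operator-norm refinement of Theorem~\ref{prop:node2vec frob norm}, obtainable from the same path-counting machinery of Section~\ref{sec:path counting node2vec}, yields $\|VO - V_0\|_F^2 = \oP(n^{-1})$.

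Next, Proposition~\ref{prop:eigenvector M_0} gives $V_0 = \Theta_0 X_0 + E_0$ with $(E_0)_{ij} = O(n^{-3})$. The relation $V_0^T V_0 = I_K$ (modulo $E_0$) implies $X_0^T\,\mathrm{diag}(n_1,\ldots,n_K)\,X_0 = I_K + o(1)$, so the rows of $X_0$ associated to distinct communities are asymptotically orthogonal after rescaling by $\sqrt{n_r}$, and the minimum pairwise row separation of $X_0$ is $\delta_{\min}^2 = \Theta(n^{-1})$. A standard $k$-means error bound for $(1+\varepsilon)$-approximate minimizers applied to the output of Algorithm~\ref{algo-k-means} then produces a permutation $J \in S_K$ with
\begin{align*}
\sum_{i\in[n]} \1\{(\bar\Theta J)_{i\star} \neq (\Theta_0)_{i\star}\} \leq \frac{C(1+\varepsilon)\,\|VO - V_0\|_F^2}{\delta_{\min}^2} = \oP(\sqrt n),
\end{align*}
which is exactly the conclusion $\err(\bar\Theta, \true) = \oP(n^{-1/2})$.

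The main obstacle is Step~2: bridging the Frobenius concentration in Theorem~\ref{prop:node2vec frob norm} to an operator-norm bound $\|M - M_0\|_{\mathrm{op}}$ that is $\oP(\sqrt{n})$ with enough polynomial-in-$n$ headroom to give a $\oP(\sqrt n)$ misclassification count rather than only an $\oP(n)$ one. The assumption $\alpha = O(1/n)$ makes the node2vec walk effectively non-backtracking on short time scales, so the entry-level estimates of Section~\ref{sec:path counting node2vec} feed naturally into a matrix-Bernstein or non-commutative Khintchine argument for $M - M_0$; this is also the step where node2vec's tolerance for sparser graphs (relative to DeepWalk) really enters the proof, since no analogue of the $\phi(\tL)$-correction is required and one can run the argument down to $n^{\tL-1}\rho_n^{\tL}\gg (\log n)^{c_0}$.
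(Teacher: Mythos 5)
Your high-level template (Davis--Kahan, then a $k$-means error count) matches the paper, but you take a materially different route in the spectral step, and that route opens a gap the paper never has to fill.

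The paper's proof (Section~\ref{sec:prediction error node2vec}) never passes through an operator-norm estimate on $M-M_0$. It applies Proposition~\ref{lem:eigenvector bound}, the Yu--Wang--Samworth form of Davis--Kahan with $\frob{M-M_0}$ on the right-hand side, and feeds in the Frobenius bound $\frob{M-M_0}=\OP(n(\log n)^{-\eta})$ from Theorem~\ref{prop:node2vec frob norm}. This gives only $\frob{V-V_0O}=\OP((\log n)^{-\eta})=\oP(1)$, which is far weaker than the $\oP(n^{-1/2})$ you ask for. The reason the paper can get away with this is that its $k$-means counting argument is not the quadratic bound $|S|\lesssim\frob{\bar V-U}^2/\delta_{\min}^2$; it is the linear bound $|S|\lesssim\sqrt{n_{\max}}\,\frob{\bar V-U}$, obtained by summing the threshold $\frac15\sqrt{2/n_{\max}}$ over $i\in S$ and comparing with $\frob{\bar V-U}$. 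Linear versus quadratic is exactly what lets an $\oP(1)$ Frobenius bound on the eigenvectors turn into an $\oP(\sqrt n)$ bound on $|S|$; with the quadratic $k$-means inequality you quote, $\oP(1)$ would only give $\oP(n)$.

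In your plan, the missing ingredient is real: you ask for an ``operator-norm refinement of Theorem~\ref{prop:node2vec frob norm}'' and invoke matrix-Bernstein / non-commutative Khintchine, but nothing of this kind is proved in the paper, and the path-counting estimates of Section~\ref{sec:path counting node2vec} are set up to control scalar entrywise deviations and Frobenius norms, not the spectral norm of $M-M_0$. Because $\log$ is applied entrywise, $M$ is not a linear function of the adjacency matrix, so matrix-Bernstein does not apply off the shelf; you would have to build a new argument, and you would need it to beat $n^{3/4}$ to make your quadratic $k$-means bound close. Until that operator-norm bound is supplied, your proof has a gap that the paper's Frobenius-only route deliberately avoids. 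Also, a small arithmetic mismatch: with $\|VO-V_0\|_F^2=\oP(n^{-1})$ and $\delta_{\min}^2=\Theta(n^{-1})$, your displayed quotient is $\oP(1)$, not $\oP(\sqrt n)$; the claimed conclusion does not follow from the stated intermediate bounds as written. Finally, note that the paper uses the node2vec-specific Lemma~\ref{lem:Rows of V_0} (which carries $O(n^{-3})$ correction terms because $M_0$ is only approximately block-structured after the elementwise $\log$), rather than Proposition~\ref{prop:null-matrix}\ref{prop:null-matrix-3} directly; this is a small but necessary adjustment in the node2vec case.
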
	
\paragraph*{Proof outline.} 
Before going into the proofs of these results, let us give a brief outline.  
We will mainly provide a proof for the $\tL = \tU = t$ case and the general case can be reduced to this special case.
The main idea for proving the upper bounds in Theorems~\ref{prop:Frob norm tL to tU},~\ref{prop:node2vec frob norm} is to get good estimates on the moments of the total number of paths of length $t$.
We will count these paths for each of the possible community assignments of the intermediate vertices in the path. 
The goal is to show that the main contribution on the $k$-moments come from $k$ disjoint paths. Proving that the contribution due to rest of the paths is small requires a novel combinatorial analysis due to the possibilities of backtracks. We first develop these methodology for DeepWalk (cf.~Section~\ref{sec:path counting deepwalk}).
The path counting estimates allow us to bound $\frob{M-M_0}$ using Proposition~\ref{lem:M for node2vec}. To prove Theorem~\ref{thm:DeepWalk misclassified nodes}, we use perturbation analysis for the eigenspaces of $\frob{M-M_0}$. Due to Proposition~\ref{prop:eigenvector M_0}, the eigenvectors of $M_0$ are such that two of its rows are the same if the nodes are in the same community and two rows are different if they are in different community. Applying the perturbation analysis, the same property remains approximately true for the top eigenvectors of $M$ as well, which allows us to prove the success of Algorithm~\ref{algo-k-means} (cf.~Section~\ref{sec:deepwalk}). The proof for node2vec uses similar ideas though the different weights for the backtracking parameter of the random walk requires careful analysis (cf.~Sections~\ref{sec:path counting node2vec},~\ref{sec:node2vec}).

\section{Path counting for DeepWalk}\label{sec:path counting deepwalk}
In this section, we focus on computing the asymptotics for the number of paths having a some specified community assignments for the intermediate vertices. In Section~\ref{sec:counting-type-b-paths}, we bound its $k$-th moment and we end with a concentration cound in Section~\ref{sec:concent-deepwalk}.

\subsection{Bounding moments for paths of different type}\label{sec:counting-type-b-paths}
Let us first set up some notation. 
Recall that $g(u)$ denotes the community assignment for vertex~$u$.
We say a path $(i_0, i_1, i_2, \ldots, i_t)$ has composition $(b_0, b_1, \ldots, b_t)$ if $g(i_l) = b_l$. Define the collection of path compositions for paths between two nodes $i$ and $j$ as
		\begin{align}
			\cB_{ij} := \left\{ (b_0, b_1, \ldots, b_t): b_l \in [K] \text{ for } 0 \leq l \leq t,  b_0= g(i), b_t = g(j)	\right\}. \label{eqn:path types set def}
		\end{align}
For $b = (b_0, b_1, \ldots, b_t) \in \cB_{i,j}$, we define the collection of paths  between $i$ and $j$ with composition~$b$
in the complete graph $K_n$ as \begin{align}
	\cP_b := \left\{	(i_0, i_1, \ldots, i_t) : i_l \in [n] \text{ for } 0 \leq l \leq t,  i_0 = i, i_t = j, g(i_l) = b_l	\right\}.
			\label{eqn:P_b def}
\end{align}
	For any path $p = (i_0, i_1,  \ldots, i_t)  \in \cP_b$, we associate the random variable
		\begin{align}
			X_p := A_{i_0 i_1} A_{i_1 i_2} \cdots A_{i_{t-1} i_t},
			\label{eqn:X-p-def}
		\end{align}
		and  define
		\begin{align} \label{eq:Y-b-def}
			Y_b := \sum_{p \in \cP_b} X_p.
		\end{align}
		To each element $b = (b_0, b_1, \ldots, b_t)  \in \cB_{i,j}$
		we associate the term
		\begin{align}
			U_{(b_0, b_1, \ldots, b_t)} = U_b := \left(\prod_{i=1}^{t-1} n_{b_{i}} B_{b_{i-1} b_i} \right) B_{b_{t-1} b_t}.\label{eqn:U-b-def}
		\end{align}
		We will upper bound the moments of $Y_b$ in terms of $U_b$.
		Similarly 
        for the lower bound, we define
		\begin{align}
			L_{(b_0, b_1, \ldots, b_t)} = L_b := \left(\prod_{i=1}^{t-1} (n_{b_{i}} - (k(t-1) + 1)) B_{b_{i-1} b_i} \right) B_{b_{t-1} b_t}
			\label{eqn:L-b-def}
		\end{align}
		With this setup, we can state the following bounds on $\E Y_b^k$.
		\begin{prop}\label{prop:A^t kth moment}
			Let $\tL = \tU = t \geq 3$ be given and suppose that \eqref{eq:t-condn} holds.
			Then we have 
			\begin{align*}
				L_{b}^k \leq \E Y_b^k \leq U_{b}^k (1 +  o(1) ).
			\end{align*}
		\end{prop}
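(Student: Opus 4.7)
The plan is to bound $\E Y_b^k$ by expanding it as a sum over $k$-tuples of paths in $\cP_b$, handling the lower bound by restricting to vertex-disjoint tuples, and handling the upper bound by grouping all tuples according to a combinatorial ``shape'' and identifying a single dominant shape under the hypothesis~\eqref{eq:t-condn}.

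For the lower bound I would restrict the expansion
\[
    \E Y_b^k = \sum_{(p_1,\ldots,p_k)\in\cP_b^k} \E\bigg[\prod_{j=1}^k X_{p_j}\bigg]
\]
to $k$-tuples whose intermediate vertices are pairwise distinct across the $k$ paths and distinct from $i,j$. For such tuples all $kt$ edges are distinct, so by independence of the Bernoulli variables $A_{uv}$ the joint expectation factorizes as $\prod_j \E X_{p_j} = \prod_{l=1}^t B_{b_{l-1}b_l}^{\,k}$. Counting intermediate vertices greedily---at each position $l$ and path index $j$ there are at least $n_{b_l}-k(t-1)-1$ valid choices---yields the required lower bound $L_b^k$.

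For the upper bound I would use that $A_{uv}\in\{0,1\}$, so
\[
    \E\bigg[\prod_{j=1}^k X_{p_j}\bigg] = \prod_{e\in E(\mathbf{p})} P_e \leq (c_U \rho_n)^{|E(\mathbf{p})|},
\]
where $E(\mathbf{p})$ is the set of distinct edges in the multigraph union of $\mathbf{p}=(p_1,\ldots,p_k)$. I would then group tuples by their \emph{shape} $\cS$: the equivalence relation on position indices $(j,l)$ for $j\in[k]$, $0\leq l\leq t$, specifying which positions carry the same vertex, subject to $(j,0)=i$, $(j,t)=j$ for all $j$, and to a fixed community label on each equivalence class consistent with $b$. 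A shape $\cS$ has $v(\cS)$ distinct vertices and $e(\cS)$ distinct edges, and the number of tuples realizing $\cS$ is at most a $(t,k)$-dependent constant times $n^{v(\cS)-2}$, so the contribution of $\cS$ is bounded by $C_{t,k}\, n^{v(\cS)-2} \rho_n^{e(\cS)}$. The \emph{trivial shape} $\cS_0$, in which no positions are identified beyond the enforced endpoints, has $v(\cS_0)-2=k(t-1)$ and $e(\cS_0)=kt$; summing the exact community-weighted product over its tuples reproduces $U_b^k(1+o(1))$.

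The main obstacle will be showing that every nontrivial shape contributes $o(U_b^k)$. Any identification reduces both $v(\cS)$ and $e(\cS)$, and for shapes that merely glue together disjoint path substructures the edge drop is at least as large as the vertex drop, so these shapes contribute $o(U_b^k)$ as soon as $n^{t-1}\rho_n^t\gg 1$. The extreme case, corresponding to the $\phi=\lfloor t/2\rfloor$ flagged in Remark~\ref{rem:effect-of bactracking}, comes from walks that alternate between two vertices (e.g.\ $i\to v\to i\to v\to\cdots$), for which a length-$t$ walk collapses to $\phi$ distinct edges on at most $\phi+1$ vertices and contributes roughly $(n\rho_n)^\phi$ per walk. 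The hypothesis~\eqref{eq:t-condn}, namely $n^{t-1}\rho_n^t \gg (n\rho_n)^\phi (\log n)^{c_0}$, is exactly calibrated so that after raising to the $k$-th power even this all-backtracking shape contributes $o(U_b^k)$. Summing over the finitely many shape classes then yields $\E Y_b^k \leq U_b^k(1+o(1))$, completing the bound.
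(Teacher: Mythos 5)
Your lower-bound argument matches the paper's almost exactly: restrict to $k$-tuples of edge-disjoint paths and count intermediate vertices greedily, losing $O(kt)$ per slot. That part is fine.

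The upper bound, however, has a real gap in the ``shape'' argument. You propose partitioning the $k$-tuples of walks by the equivalence relation on the $\sim k(t-1)$ intermediate positions, then ``summing over the finitely many shape classes.'' For \emph{fixed} $k$ this is indeed a finite sum, but in this paper the bound is needed (and the paper's own proof explicitly sets) $k = \lceil \log n \rceil$ so that it can feed into the Markov/concentration step in Proposition~\ref{prop:Lower tail A^t}. When $k$ grows with $n$, the number of shapes grows like a Bell number $B_{k(t-1)}$, which is super-exponential in $k = \log n$, hence super-polynomial in $n$. Saying ``each nontrivial shape contributes $o(U_b^k)$'' does not suffice, because you must sum over a class whose cardinality itself blows up; you have to show the \emph{total} contribution is $o(U_b^k)$ after paying the enumeration cost. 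Your ``$(t,k)$-dependent constant $C_{t,k}$'' hides exactly this problem. The condition~\eqref{eq:t-condn} leaves only a $(\log n)^{c_0}$ safety margin, so the combinatorial overhead must be tracked as a polynomial in $k$, not swept into an unspecified constant.

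This is precisely what the paper's decomposition into marked edges, backtracks, segments, Type~I/II segments, and the $E_{m,r}$ ($m$ = number of distinct edges, $r$ = number of Type~I segments) bookkeeping is for: Lemmas~\ref{lem:m-rmin-arrangement},~\ref{lem:E-m-rmin-asymptotics deepwalk},~\ref{lem:num arrangement deepwalk},~\ref{lem:sum over r>rmin deepwalk} show that the number of relevant configurations at level $(m,r)$ is controlled by factors like $\binom{k}{\rmin}$, $k^{f(\rmin)-m}$, $3^{k-\rmin}$, $(Ck)^{(r-\rmin)(t+1)}$, i.e.\ polynomials and mild exponentials in $k$, and that the resulting geometric series over $m$ and $r$ still converges and is $o(U_b^k)$. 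Your proposal correctly identifies the extremal backtracking shape and why \eqref{eq:t-condn} is calibrated to kill it, so the intuition is right, but the argument as written collapses the combinatorics into an uncontrolled constant and would not survive the $k\to\infty$ regime the paper actually needs. You would need to replace ``sum over all shapes'' with a stratification that tracks how many shapes live at each $(v(\cS),e(\cS))$ level and bound that count polynomially in $k$ --- which is, in effect, reinventing the paper's segment machinery.
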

		The idea is to show that the leading term for $\E Y_b^k$ is due to $\E\big(	\prod_{\alpha=1}^{k}	X_{p_\alpha}	\big)$ of $k$ ordered paths having $kt$ distinct edges between them. 
		The contribution of the rest of the terms are of a smaller order.
		We summarize the second claim below. For any path $p = (i_0, i_1, i_2, \ldots, i_t) \in \cP_b$, let
		\begin{align*}
			e(p):= \left\{	\{i_l, i_{l+1}\} : 0 \leq l < t  \right\},
		\end{align*}
		be the set of edges in the path $p$. Let
		\begin{equation}\label{E-m-defn}
		    E_m := \sum_{(p_1, p_2, \ldots, p_k): p_i \in \cP_b, |\cup_{\alpha\in [k]} e(p_\alpha)| = m } \Prob (X_{p_1} X_{p_2} \cdots X_{p_k} = 1).
		\end{equation}
		We will show the following: 
		\begin{prop}\label{prop:A^t kth moment-2}
		Under identical conditions as Proposition~\ref{prop:A^t kth moment}, we have $\sum_{m<kt} E_m = o(U_b^k)$.  
		\end{prop}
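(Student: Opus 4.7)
The plan is to decompose $E_m$ according to the "shape" of a $k$-tuple of walks, bound the contribution of each shape, and show that every non-main shape is suppressed under the sparsity hypothesis.

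\medskip

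\noindent\textbf{Step 1: Classification by shape.} I would call two $k$-tuples $(p_1,\ldots,p_k),(p_1',\ldots,p_k')\in\cP_b^k$ equivalent if one is obtained from the other by a bijection on the intermediate vertex labels preserving community assignments and endpoints. Each equivalence class corresponds to a \emph{shape} $S$, encoded as a labeled graph $H_S$ with $v(S)$ distinct vertices and $m(S)$ distinct edges, together with $k$ designated walks of length $t$ from a distinguished $i_S$ to $j_S$. Since $k,t$ are fixed, the number of possible shapes is a finite constant $C(k,t)$.

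\medskip

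\noindent\textbf{Step 2: Per-shape contribution.} For a shape $S$, the number of $k$-tuples in $\cP_b^k$ realizing $S$ is at most $\prod_r n_r^{v_r(S)}\leq C' n^{v(S)-2}$ (for $i\neq j$; one adjusts by a factor of $n$ when $i=j$), where $v_r(S)$ counts the internal distinct vertices in community $r$. Using independence of the entries of $A$ and the SBM edge bound $P_{uv}\leq c_U\rho_n$, the probability that all $m(S)$ distinct edges are present is at most $(c_U\rho_n)^{m(S)}$. Hence the total contribution of shape $S$ to $E_{m(S)}$ is
\begin{equation*}
O\!\left(n^{v(S)-2}\rho_n^{m(S)}\right).
\end{equation*}

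\medskip

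\noindent\textbf{Step 3: Ratio to $U_b^k$.} Since $U_b^k \asymp n^{(t-1)k}\rho_n^{tk}$, writing $s=s(S):=(t-1)k+2-v(S)\geq 0$ and $r=r(S):=tk-m(S)\geq 1$ for shapes with $m(S)<tk$, the ratio of the shape's contribution to $U_b^k$ equals $n^{-s}\rho_n^{-r}$. So I must show $n^{s}\rho_n^{r}\to\infty$ for every feasible $(s,r)$ with $r\geq 1$. Connectedness of the union multigraph (all walks share endpoints) gives $v(S)\leq m(S)+1$, hence $s\geq r-k+1$. For shapes with mild identification this suffices, since $n^s\rho_n^r\geq n^{1-k}(n\rho_n)^r$ and the hypothesis forces $n\rho_n\to\infty$ for $t\geq 3$. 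The delicate case is shapes with heavy backtracking, where $s$ can be strictly less than $r$; here one has to use the full strength of the hypothesis $n^{t-1}\rho_n^t\gg (n\rho_n)^\phi(\log n)^{c_0}$.

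\medskip

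\noindent\textbf{Step 4: Handling backtracking shapes and conclusion.} The worst case is the ``oscillating'' shape(s) per Remark~\ref{rem:effect-of bactracking}, where each walk collapses to a pattern such as $i,u_1,i,u_2,\ldots$, contributing at most $(n\rho_n)^{k\phi}$ with $\phi=\lfloor t/2\rfloor$; the hypothesis is precisely calibrated so that $(n\rho_n)^{k\phi}=o(U_b^k)$. I would interpolate between this extreme and the main shape by parameterizing shapes through (i) the per-walk skeleton obtained after collapsing backtracks, and (ii) the pattern of cross-walk vertex/edge identifications, then verify that any $(s,r)$ achievable under the walk constraints satisfies $n^s\rho_n^r\to\infty$ under the hypothesis, by reducing to the ``worst'' configuration for each value of $r$. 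Summing over the finitely many shapes yields $\sum_{m<tk}E_m=o(U_b^k)$.

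\medskip

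\noindent\textbf{Main obstacle.} Step 3/4 — producing the uniform ratio bound across all shapes — is the hard step, because the connectedness bound $v\leq m+1$ alone is too weak for shapes with many edge identifications. One must track how vertex identifications cascade into edge identifications, in a way that respects the constraint that each of the $k$ pieces remains a walk of length $t$ from $i$ to $j$. The condition on $\rho_n$ is tailored to absorb precisely the extremal backtracking shapes, so the challenge is a careful combinatorial enumeration rather than a purely analytic estimate.
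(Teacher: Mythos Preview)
Your proposal has a critical gap: you assume $k$ is a fixed constant, but in this paper the moment bound must hold for $k=\lceil\log n\rceil$ (this is what Proposition~\ref{prop:Lower tail A^t} needs, and the proof of Lemma~\ref{lem:E-m-rmin-asymptotics deepwalk} explicitly invokes this choice). With $k$ growing, your Step~1 collapses: the number of shapes $C(k,t)$ is not a constant but grows super-polynomially in $n$ (there are at least $\binom{k}{\lfloor k/2\rfloor}$ ways to select which of the $k$ walks are ``backtracking'' versus ``non-backtracking''), so you cannot simply say ``sum over finitely many shapes'' at the end. Any valid argument must track the combinatorial explosion in $k$ and show it is beaten by the density condition; this is exactly why the paper's bounds in Lemmas~\ref{lem:m-rmin-arrangement} and~\ref{lem:num arrangement deepwalk} carry explicit factors like $\binom{k}{\rmin}$, $k^{f(\rmin)-m}$, $3^{k-\rmin}$, and $(Ck)^{(r-\rmin)(t+1)}$, and why the final estimates in Lemma~\ref{lem:E-m-rmin-asymptotics deepwalk} take the form $k^{1+t-2h}(n\rho_n)^h/(n^{t-1}\rho_n^t)$ with $k=\lceil\log n\rceil$. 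The logarithmic factor $(\log n)^{c_0}$ in condition~\eqref{eq:t-condn} is there precisely to absorb these polynomial-in-$k$ losses; a shape-counting argument that treats $k$ as fixed never sees where that factor is spent.

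Even setting aside the growth of $k$, your Step~4 is where all the content lives and it is only sketched. The paper's mechanism for ``interpolating between the oscillating extreme and the main shape'' is not a direct parametrization by $(s,r)$ but a structural decomposition of each walk into \emph{segments} (maximal runs of newly-seen edges and their immediate backtracks), classified as Type~I or Type~II, with the key combinatorial fact being Lemma~\ref{lem:choice-vertices}: a Type~I segment with $r$ marked edges allows at most $n^{r-1}$ vertex choices, while a Type~II (pure backtracking) segment allows $n^r$. One then stratifies $E_m$ by the number $r$ of Type~I segments across all $k$ walks, identifies the minimizer $\rmin$, bounds $E_{m,\rmin}$ directly, and controls $E_{m,r}$ for $r>\rmin$ via two explicit ``segment-splitting'' multi-maps (Lemma~\ref{lem:num arrangement deepwalk}) that cost only $(Ck)^{t+1}$ per extra segment. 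Your vertex/edge-count inequalities $v\le m+1$ and $s\ge r-k+1$ are too coarse to recover this; they do not distinguish Type~I from Type~II structure, which is exactly what determines whether the $(n\rho_n)^\phi$ factor appears.
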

\begin{proof}[Proof of Proposition~\ref{prop:A^t kth moment} using Proposition~\ref{prop:A^t kth moment-2}]
Note that, we can write 			\begin{align*}
	\E Y_b^k &=  \E \bigg(\sum_{p \in \cP_b} X_p \bigg)^k= \sum_{(p_1, p_2, \ldots, p_k): p_{\alpha} \in \cP_b} \Prob (X_{p_1} X_{p_2} \cdots X_{p_k} = 1). \numberthis\label{eqn:kth moment and k paths}
	\end{align*}
	For the upper bound, Proposition~\ref{prop:A^t kth moment-2} shows that it is enough to bound the summands  corresponding to sequences $(p_1, p_2, \ldots, p_k)$ that satisfy $\vert \cup_{\alpha =1}^k e(p_{\alpha})	\vert = kt$, i.e. sequences of paths consisting of $kt$ distinct edges. In this case, $ \Prob (X_{p_1} X_{p_2} \cdots X_{p_k} = 1) = \prod_{\alpha=1}^{k}  \Prob (X_{p_\alpha} = 1)$. 
	We note that each of the paths $p_r$ has path type $b$, and we bound
	\begin{align*}
		\Prob (X_{p_\alpha} = 1) \leq U_{b} = \bigg(\prod_{i=1}^{t-1} n_{g_{i}} B_{g_{i-1} g_i} \bigg) B_{g_{t-1} g_t},
	\end{align*}
	and thus the upper bound follows using Proposition~\ref{prop:A^t kth moment-2}.
	For the lower bound, we can simply restrict the summands in \eqref{eqn:kth moment and k paths} to the case $\vert \cup_{\alpha =1}^k e(p_{\alpha})	\vert = kt$. 
	We compute
	\begin{align*}
		\Prob (X_{p_\alpha} = 1) &\geq  \bigg(\prod_{i=1}^{t-1} (n_{b_{i}} - (2 + (\alpha-1)(t-1) + (i-1))) B_{b_{i-1} b_i} \bigg) B_{b_{t-1} b_t}\\
		&\geq  \bigg(\prod_{i=1}^{t-1} (n_{b_{i}} - (k(t-1) + 1)) B_{b_{i-1} b_i} \bigg) B_{b_{t-1} b_t}= L(b).
	\end{align*}
	Above for the marked vertices in path $p_\alpha$, the term $(\alpha-1)(t-1)$ is to account for not choosing vertices seen in the first $\alpha-1$ paths, the summand $2$ is for the nodes $i$ and $j$, and $(i-1)$ is for the nodes upto index $(i-1)$ in path $p_\alpha$. Hence the proof of the lower bound is also complete.

\end{proof}

The rest of this section is devoted to the proof of Proposition~\ref{prop:A^t kth moment-2}. Let us start by setting up some definitions that will be useful for us to count the contributions coming from intersecting paths. All these definitions are demonstrated in Figure~\ref{fig:path-defns}.

\begin{defn}[Marked edge and marked vertex] \normalfont 
Let $(p_1, p_2, \ldots, p_k)$ be an ordered sequence of $k$ paths in $\cP_b$. 
Fix one of the paths $p_\alpha = (i_0, i_1, i_2, \ldots, i_t)$ and consider the directed edge $(i_l, i_{l+1})$ appearing at the $l$-th step. 
We will call $(i_l, i_{l+1})$ to be a \emph{marked edge} if the undirected edge $\{i_l, i_{l+1}\}$ is not present in the paths $p_{\alpha'}, 1 \leq \alpha' < \alpha$ and also it is not equal to previous edges in the path $p_\alpha$, i.e., $\{i_{l}, i_{l+1}\} \neq \{i_{l'}, i_{l'+1}\}$ for $0 \leq l' < l$. 
Intuitively, we call $(i_l, i_{l+1})$ a marked edge if it is the first time we see it as we count the edges along the paths $p_1$ to $p_k$.
For a marked edge  $(i_l, i_{l+1})$, we will call $i_{l+1}$ to be a \emph{marked vertex} if it was not seen before in previous paths and also in $(i_0,\dots,i_l)$. 
\end{defn}
\begin{defn}[Backtrack] \normalfont 
A directed edge $(i_l, i_{l+1})$ in a path $p_\alpha = (i_0, i_1, i_2, \ldots, i_t)$ is called a \emph{backtrack} if $i_{l+1} = i_{l-1}$.
\end{defn}

\begin{defn}[Segment] \normalfont \label{defn:segment}
Let $0 \leq l < l' \leq t$. We will say that $(i_l, i_{l'})$ is a \emph{segment} in path $p_\alpha$ if the following conditions hold:
	\begin{enumerate}
		\item $(i_l, i_{l+1})$ is a marked edge, i.e., segments always start with a marked edge. 
		\item $(i_j, i_{j+1})$ is a marked edge or a backtrack for all $l \leq j < l'$.
		\item There does not exist $0 \leq l'' < l$ such that $(i_{j}, i_{j+1})$ is a marked edge or a backtrack for all $l{''} \leq j < l$ and $(i_{l''}, i_{l'' + 1})$ is a marked edge.
		\item Either $l' = t$ or if $l' < t$ then $(i_{l'}, i_{l' + 1})$ is neither a marked edge nor a backtrack.
	\end{enumerate}
Intuitively, segments are maximal parts of paths consisting  of marked edges and their backtracks. 
The last two conditions ensure that segments cannot be extended to the left and to the right. 
The edges outside the segments will often be referred to as \emph{unmarked} edges. 
Thus, an unmarked edge is an edge that was previously visited and it is not a backtrack of the last visited marked edge. In Figure~\ref{fig:path-defns}, the dotted lines are unmarked edges, and we can note that the corresponding undirected edges had appeared previously in the path.  
Notice also that any two segments are separated by one or more unmarked edges.
Finally, we remark that the edge preceding a segment may be a backtrack but it can only be a backtrack of an unmarked edge; see for example the second segment in Figure~\ref{fig:path-defns}.
\end{defn}

\begin{figure}
    \centering
    \includegraphics[scale = 0.05]{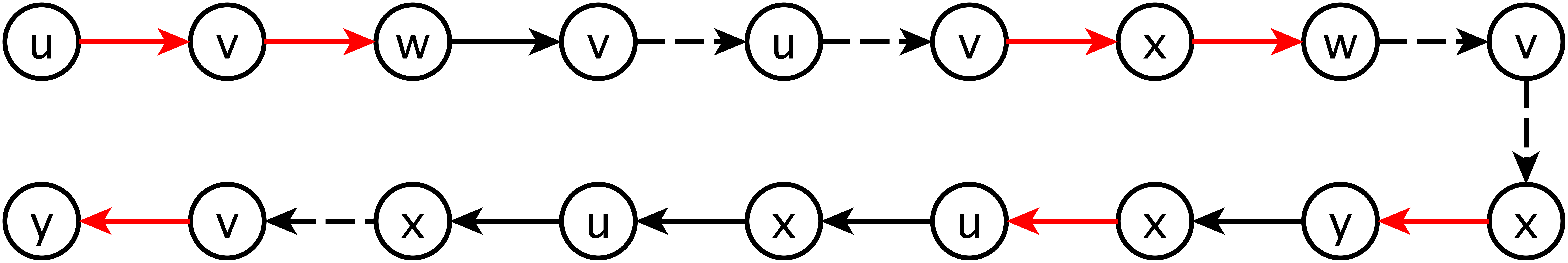}
    \caption{Illustrating marked edges (red), backtracks (black), and unmarked edges (dotted). The segments in this path are given by  $S_1 = (u, v, w, v)$, $S_2 = (v, x, w)$, $S_3 = (x,y,x,u,x,u,x)$, $S_4 = (v,y)$. Here $S_3$ is a Type II segment with $k_2 - k_2 = 2$ and $k_3 - k_2 = 4$. The rest are Type I segments. }
    \label{fig:path-defns}
\end{figure}

\begin{defn}[Type I/II Segments and Paths] \normalfont 
Let $(i_l,i_{l'})$ be a segment with $r$ marked edges. Suppose $l = k_1 < k_2 < \cdots k_{r+1} = l'$ be integers such that $(i_{k_{r'}}, i_{k_{r'} + 1})$ for $1 \leq {r'} \leq r$ constitute the set of all marked edge. 
Then $(i_l,i_{l'})$ is said to be a \emph{Type II segment} if $k_{r'+1} - k_{r'}$ is an even number for $1 \leq r' \leq r$. 
In all other cases, $(i_l,i_{l'})$ is said to be a \emph{Type I segment}. 
Intuitively, a Type II
segment just represents going back and forth on the same vertex. 
In Figure~\ref{fig:path-defns}, the third segment is Type II and the rest of the segments are Type I. 
We say that a path $p$ is a \emph{Type I path} if there is at least one Type I segment in it. Otherwise, we call it a \emph{Type II path}. Notice that Type II path may not have any segment.
\end{defn}

\begin{defn}\normalfont  
We call a path $p$ \emph{saturated} if all the edges in $p$ are part of some segment, i.e., there are no unmarked edges in a saturated path. 
\end{defn}
We now state the following elementary lemma which will be used throughout: 
\begin{lem} \label{lem:choice-vertices}
There are at most $n^{r-1}$ ways to choose marked vertices for a Type I path with $r$ marked edges, and there are at most $n^r$ ways to do the same for a Type II path with $r$ marked edges.
\end{lem}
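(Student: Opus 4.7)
The plan is to sweep along the path and count new-vertex choices edge by edge, exploiting the structural distinction between Type I and Type II segments. Observe first that for any marked edge $(i_l, i_{l+1})$ the starting vertex $i_l$ is already present in the walk (it is the current position), so only the endpoint $i_{l+1}$ can contribute a genuinely new vertex. Hence with $r$ marked edges at most $r$ marked vertices can be introduced, and picking their identities from $[n]$ trivially yields the bound $n^r$, which settles the Type II claim. For completeness I would also record why this bound is essentially tight for Type II: every difference $k_{r'+1}-k_{r'}$ being even forces the walk to return to the starting vertex $u_1$ of the segment after each marked edge and its subsequent backtracks, so all $r$ marked edges within a Type II segment emanate from the common vertex $u_1$ and can independently pick distinct new endpoints; the marked-edge subgraph of the segment is a star.

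For the Type I improvement to $n^{r-1}$, I would use the defining property that the path contains a Type I segment with at least one odd difference $k_{s+1}-k_s$, and show that exactly one marked vertex gets pinned, saving a factor of $n$. The argument splits into two subcases. If some Type I segment has odd last difference $k_{r+1}-k_r$, then that segment ends at the endpoint $v_r$ of its last marked edge $M_r$ (because the number of subsequent backtracks $k_{r+1}-k_r-1$ is even). Either the segment ends the whole path, forcing $v_r = j$, or the segment is followed by an unmarked edge; in the latter case a brand-new vertex $v_r$ would have no previously-seen incident edge other than $M_r$, and the only non-backtracking traversal from $v_r$ along that edge would retrace $M_r$, contradicting the unmarked-edge requirement in Definition~\ref{defn:segment}. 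Hence $v_r$ must already be present in the walk and does not count as a new marked vertex. If instead every Type I segment ends with an even last difference, the witnessing odd difference $k_{s+1}-k_s$ lies strictly inside some segment; recursively applying the starting-vertex rule $u_{r'+1}=v_{r'}$ (when $k_{r'+1}-k_{r'}$ is odd) versus $u_{r'+1}=u_{r'}$ (when even) shows that the segment terminates at a previously-introduced marked vertex $v_s$, which is then pinned by the subsequent unmarked edge or by the endpoint constraint $v_s = j$, exactly as in the first subcase.

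The main obstacle I expect is the second subcase: the forced coincidence is buried inside a Type I segment rather than sitting at its boundary, so disentangling the $u_{r'+1}\in\{u_{r'},v_{r'}\}$ recursion and the resulting constraint on the outgoing unmarked edge requires care. I would likely formalize it by growing the marked-edge subgraph of the segment one edge at a time and arguing that having an interior odd difference while having to return to $u_r$ at the end is inconsistent with the subgraph remaining a tree; the first cycle that must appear pinpoints the marked edge whose endpoint is forced to coincide with a previously-visited vertex, giving the required loss of a factor of $n$.
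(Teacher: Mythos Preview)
Your Type II bound and your Subcase 1 for Type I are correct. The gap is Subcase 2. When a Type I segment has even last difference but an interior odd difference at index $s$, it terminates at $v_s$; since $u_{s+1} = \cdots = u_{r'} = v_s$, the vertex $v_s$ is already incident to \emph{all} of $M_s, M_{s+1}, \ldots, M_{r'}$, so the outgoing unmarked step can retrace $M_s$ (or any of $M_{s+1}, \ldots, M_{r'-1}$) without being a backtrack, and $v_s$ is not pinned. Your tree/cycle fallback fails for the same reason: take $t = 4$, $i = j$, and the path $(i, v_1, v_2, v_1, i)$ with $v_1, v_2$ distinct from $i$ and from each other. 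There is one Type I segment (differences $1, 2$); the final step $v_1 \to i$ retraces $M_1 = \{i, v_1\}$ and is unmarked (the previous vertex is $v_2 \neq i$); the marked-edge graph $\{\{i, v_1\}, \{v_1, v_2\}\}$ is a tree with no forced cycle; and both $v_1, v_2$ are free marked vertices. That is already $\sim n^2$ choices for $r = 2$, so the $n^{r-1}$ bound cannot hold for this non-saturated Type I path, and no edge-by-edge pinning argument can close the gap.

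The paper handles this by restricting to \emph{saturated} paths from the outset. There the marked-edge graph $H_p$ is connected and contains both endpoints; if $i \neq j$ a tree on $r$ edges has at most $r - 1$ vertices besides $i, j$, and if $i = j$ an edge-parity count (each $M_{r'}$ is traversed exactly $k_{r'+1} - k_{r'}$ times in a saturated walk, so a closed walk on a tree would force all differences even, i.e.\ Type II) shows $H_p$ must contain a cycle and hence have at most $r - 1$ vertices besides $i$. Your counterexample shows that the paper's one-line justification for this restriction (``since we want to maximize the choice of marked vertices'') is not literally correct either; the saturated-path inequality is the version that the subsequent $E_m$ estimates actually use.
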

\begin{proof} 
Let us focus only on a path $p$ that is  saturated, since we want to maximize the choice of marked vertices and they can only be found by walking through a marked edge.
Consider the path $p$, and  note that the endpoints of $p$ are fixed at $i,j$, and all the edges are either marked or is a backtrack.  
Let $(i_{k_\alpha},i_{k_\alpha +1})$ with $1\leq \alpha \leq r$ be the marked edges. 
We construct a graph $H_p$ using only the marked edges (ignoring their orientation).  
Since $p$ is saturated, we have $H_p$ is connected, and also the vertices of $H_p$ (except possibly $i,j$) are marked vertices. If $p$ is Type II, then $H_p$ is a star centered at $i$ having $r$ edges, and $j$ can either be $i$ or be one of the leaves. 
Thus the vertices except $i,j$ can be chosen in at most $n^r$ ways. 
If $p$ is Type I, in order to get the maximum number of vertices in $H_p$, one can have $H_p$ to be a tree if $i\neq j$ or a unicyclic graph if $i = j$. In both cases, there are at most $r-1$ vertices (other than $i,j$) to choose, and these can be chosen in $n^{r-1}$ ways. 
\end{proof}
To complete the proof of Proposition~\ref{prop:A^t kth moment-2}, let $E_{m,r}$ denote the summands in \eqref{E-m-defn} restricted to the case that there are $r$ Type I segments, so that
\begin{align*}
    E_m = \sum_{r= \rmin}^{\rmax} E_{m,r}, 
\end{align*}
where, given $m$, $[\rmin,\rmax]$ denotes the possible range of $r$. The analysis will consist of two steps. In the first step, we analyze $E_{m,\rmin}$. Subsequently, we will show that $E_{m,r}$ is much smaller than $E_{m,\rmin}$ for $r>\rmin$. 

Intuitively, $E_{m,\rmin}$ is the largest term as minimizing the number of Type I segments leads to maximizing the number of choices for marked vertices. This can be seen from Lemma \ref{lem:choice-vertices} which says that there are at most $r-1$ marked vertices to choose for a Type I segment with $r$ marked edges. In contrast, for a Type II segment we have an upper bound of $r$ choices for $r$ marked edges.

\subsubsection{Computing \texorpdfstring{$E_{m,\rmin}$}{E{m,r*(m)}}.} 
We first find an expression of $\rmin$. Note the following:
\begin{obs} \normalfont 
 A Type I path has maximum number of marked edges if there is only one Type I segment having $t$ marked edges. 
 We refer to these paths \emph{maximal Type I} paths. (Thus,  maximal paths are saturated). 
\end{obs}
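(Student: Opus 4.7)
The plan is to establish a trivial upper bound and then characterize when it is attained. A path of length $t$ has only $t$ edges, so a Type I path can have at most $t$ marked edges; the content of the observation is in identifying the structure of the paths that achieve this bound.

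My key preliminary observation would be that a backtrack is never a marked edge. Indeed, if $(i_l,i_{l+1})$ is a backtrack then $i_{l+1}=i_{l-1}$, so the undirected edge $\{i_l,i_{l+1}\}$ coincides with $\{i_{l-1},i_l\}$, which was already traversed one step earlier in $p_\alpha$; this violates the ``not equal to previous edges in $p_\alpha$'' clause in the definition of marked edge. Consequently, if a Type I path attains the bound of $t$ marked edges, then every edge of the path must be marked, there are no backtracks, and there are no unmarked edges. In particular the path is saturated, which already gives the parenthetical remark.

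Finally I would invoke Definition~\ref{defn:segment} to check that such a path consists of a single Type I segment. Because every edge is marked (hence in particular ``marked-or-backtrack''), the maximality clauses of the segment definition force the segment beginning at index $0$ to extend all the way to index $t$, and no further segment can start elsewhere since any split would require an intervening unmarked edge, of which there are none. This unique segment has marked-edge positions $k_{r'}=r'-1$ for $1\leq r'\leq t$, so every gap $k_{r'+1}-k_{r'}$ equals $1$, which is odd, making the segment Type I. No step of the argument is subtle; the main care is in tracking the mutual exclusion between ``backtrack'' and ``marked'' and then reading off the segment structure from Definition~\ref{defn:segment}.
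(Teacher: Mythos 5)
The paper states this observation without proof, treating it as immediate from the definitions; your argument is a correct and careful filling-in of the implicit reasoning. In particular, you correctly isolate the key facts: a backtrack duplicates the immediately preceding undirected edge and so can never be marked, attaining the trivial bound of $t$ marked edges therefore forces a path with no backtracks and no unmarked edges (hence saturated), and in that case the segment definition collapses the whole path into a single segment whose gaps $k_{r'+1}-k_{r'}$ all equal $1$ (odd), making it Type I.
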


\begin{obs} \normalfont 
    For a Type II path, let $h = h(t,i,j)$ be the maximum number of marked edges. 
    Note that each marked edge in a Type II segment has at least one backtrack. We refer to such paths as \emph{maximal Type II} paths.
    Note that $h = t/2$ if $t$ is even and $i=j$ as each marked edge in a Type II segment has at least one backtracking edge. 
	If $t$ is even and $i \neq j$ then $h=(t-2)/2$ as Type II segments must start and end at the same vertex and Type II segments are of an even length.
	If $t$ is odd and $i \neq j$, then $h = (t-1)/2$ marked edges. 
	If $t$ is odd and $i = j$ then it may have a maximum of $(t-3)/2$ marked edges. In particular for the last case, we cannot have $(t-1)/2$ marked edges as if that were the case then the first or the last edge in the path would be a self-loop at node $i$ which has probability $0$ in our model. 
	We can also note that as a result, we cannot have a Type II path when $t=3$ and $i=j$ and we take $h = 0$ in the case. Note that $h\leq \phi(t) = \lfloor t/2\rfloor$ for $t\geq 3$, 
	where $\phi(t)$ is defined in Theorem~\ref{prop:Frob norm tL to tU}. 
\end{obs}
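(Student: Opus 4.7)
The plan is to prove $h \leq \lfloor t/2 \rfloor$ in all four cases by first establishing a structural lemma for Type II segments and then doing a short case analysis based on the parity of $t$ and whether $i = j$. The structural lemma I would establish is: any Type II segment with $r$ marked edges has length at least $2r$, all of its marked edges emanate from a common center vertex $c$, and the segment both starts and ends at $c$. To prove this, fix a Type II segment with marked-edge positions $l = k_1 < k_2 < \cdots < k_r < k_{r+1} = l'$, and recall that each $k_{\alpha+1}-k_\alpha$ is even. Between consecutive marked edges, the $k_{\alpha+1}-k_\alpha-1$ intermediate edges are backtracks, and successive backtracks oscillate between the two endpoints of the most recently traversed marked edge $(c_\alpha, u_\alpha)$. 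Since the number of such backtracks is odd, the walker is back at $c_\alpha$ at position $k_{\alpha+1}$, so the next marked edge starts at $c_\alpha$; inductively all $c_\alpha$ equal a common center $c$. Applying the same parity argument to the trailing backtrack stretch (also odd in length) shows that the segment ends at $c$. Since each marked edge requires at least one backtrack partner, the segment length is at least $2r$.

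With this structural lemma, any Type II path decomposes into Type II segments, each of even length and starting/ending at its center, separated by at least one unmarked edge. If $u$ denotes the total number of unmarked edges, then $t \geq 2h + u$, giving $h \leq (t-u)/2$, and $u$ has the same parity as $t$. Three of the cases follow quickly. If $t$ is even and $i = j$, the single segment $(i, v_1, i, v_2, i, \ldots, v_{t/2}, i)$ with distinct auxiliary vertices achieves $h = t/2$. If $t$ is even and $i \neq j$, then $u = 0$ would force a single Type II segment, whose starting and ending vertices coincide, contradicting $i \neq j$; by parity $u \geq 2$, which yields $h \leq (t-2)/2$. If $t$ is odd and $i \neq j$, parity forces $u \geq 1$, yielding $h \leq (t-1)/2$.

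The one delicate case is $t$ odd with $i = j$, where parity only gives $u \geq 1$, and I need to rule out $u = 1$. The unique unmarked edge must lie either between two consecutive Type II segments, at the very start of the path, or at the very end. In all three positions, the requirement $i = j$ combined with the fact that every Type II segment starts and ends at its center forces the unmarked edge to be a self-loop at $i$, which has probability $0$ under the SBM. Hence $u \geq 3$ and $h \leq (t-3)/2$, which collapses to $h = 0$ at $t = 3$; the only possible configuration with $t=3$ and $i=j$ would again demand a self-loop, so no nontrivial Type II path exists in that case. Combining all four bounds verifies $h \leq \lfloor t/2 \rfloor = \phi(t)$ for $t \geq 3$. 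The main obstacle is this self-loop case analysis for odd $t$ with $i=j$; the remaining cases are bookkeeping driven by the structural lemma and the parity of segment lengths.
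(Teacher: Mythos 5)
Your proof is correct and follows essentially the same reasoning as the paper's inline justification of the Observation: both rely on the facts that each marked edge in a Type II segment must be paired with at least one backtrack, that Type II segments have even length and start and end at the same ``center'' vertex, and that a self-loop would be forced when $t$ is odd and $i = j$. Your structural lemma makes the paper's prose reasoning explicit, and you additionally handle the sub-case where the single non-segment edge lies strictly between two Type II segments, which the paper's remark that ``the first or the last edge in the path would be a self-loop'' glosses over but which leads to the same conclusion.
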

We have that for the case of $r$ Type I segments, the number of marked edges satisfy $m \leq rt + (k-r)h$. Let 
	\[	f(r) := rt + (k-r)h, \quad 0 \leq r \leq k.	\]
Then $\rmin$ is obtained by inverting $f$ as described below: 
\begin{lem}
    $\rmin = \max\{0, \lfloor\frac{m-kh}{t-h}\rfloor\}$.
\end{lem}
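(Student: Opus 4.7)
The proof is a short inversion of the bound $m \leq f(r)$. By the observations immediately preceding the lemma, any Type I path contributes at most $t$ marked edges and any Type II path contributes at most $h \leq \lfloor t/2 \rfloor$ marked edges, with $h < t$ for $t \geq 3$. Hence, if exactly $r$ of the $k$ paths are Type I, the total marked-edge count $m$ must obey $m \leq f(r) = kh + (t-h)r$. Strict monotonicity of $f$ (since $t > h$) rearranges this to $r \geq (m-kh)/(t-h)$, which together with $r \in \{0,1,\dots,k\}$ forces
\[
    r \ \geq\ \max\Bigl\{0,\ \Bigl\lceil \tfrac{m-kh}{t-h} \Bigr\rceil \Bigr\},
\]
matching the claimed formula (up to the floor/ceiling convention in the statement).

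For achievability (so that $\rmin$ genuinely lies in the range of feasible $r$), I would exhibit a configuration attaining equality. If $m \leq kh$, take $r=0$ and choose $k$ Type II paths whose marked-edge counts (each in $\{0,\dots,h\}$) sum to $m$; this is possible by the definition of Type II segments and backtracks. If $m > kh$ and $r = \rmin \geq 1$, take $r-1$ maximal Type I paths (each with $t$ marked edges), one further Type I path whose marked-edge count is the residual $m - (r-1)t - (k-r)h$, and $k-r$ maximal Type II paths (each with $h$ marked edges). By the choice of $\rmin$, the residual lies in $(h,t]$, and a Type I path with any marked-edge count in $\{1,\dots,t\}$ can be assembled by inserting a single Type I segment of the corresponding length, with other steps as backtracks or revisits.

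There is no real obstacle; the lemma is essentially bookkeeping. The only mild care needed is to verify that every intermediate marked-edge count in $\{0,\dots,t\}$ (for Type I paths) and $\{0,\dots,h\}$ (for Type II paths) is indeed realizable, which is immediate from the definitions of Type I and Type II segments and from the freedom to intersperse backtracks and unmarked edges. The real work for the subsequent analysis lies not here but in controlling the summands $E_{m,r}$ themselves, which the expression for $\rmin$ then parametrizes.
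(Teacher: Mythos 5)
Your proof is correct and follows essentially the same route as the paper's: state the bound $m \le f(r) = kh + r(t-h)$, invert it using strict monotonicity of $f$ (which holds because $t > h$), and exhibit a configuration attaining the minimum. You also correctly flag a genuine typo in the lemma statement: the formula should read $\bigl\lceil (m-kh)/(t-h)\bigr\rceil$, not $\bigl\lfloor\cdot\bigr\rfloor$. The paper's own proof characterizes $\rmin$ as the unique $r_0$ with $f(r_0-1) < m \le f(r_0)$, and that is precisely the ceiling; for example with $k=2$, $t=4$, $h=2$, $m=7$ one has $f(1)=6<7\le 8=f(2)$ so $\rmin=2$, whereas the floor formula would give $1$. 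The discrepancy is harmless in the downstream asymptotics (the two differ by at most one), but the stated formula is off.

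One small imprecision worth noting: you write ``if exactly $r$ of the $k$ paths are Type I,'' but in the paper's convention $r$ counts Type I \emph{segments}, and a single path may host several. This does not affect your conclusion: if the $r$ segments sit in $k_1 \le \min(r,k)$ paths, then $m \le k_1 t + (k-k_1)h$, and since $t>h$ this is at most $rt + (k-r)h = f(r)$ (vacuously so when $r>k$, as then $f(r)\ge kt\ge m$). Your achievability construction is also right; the residual $m-(r-1)t-(k-r)h$ lying in $(h,t]$ is exactly the condition $f(r_0-1)<m\le f(r_0)$, matching the paper's ``$r_0-1$ maximal Type I paths plus one more Type I path.''
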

\begin{proof}
Let $f(r_0-1)<m \leq f(r_0)$ for some $r_0\geq 1$. We would like to put as many edges in Type~I segments as possible to minimize the number of Type~I segments. However, since each path has length $t$ and $m > (r_0-1)t$, we need $r_0$ Type I segments, so that $\rmin \geq r_0$. Also, this lower bound is attained by taking $r_0-1$ many maximal Type I paths and another Type I path that is not maximal. 
If $m\leq f(0)$, then we can get away with having no Type I segments. This completes the proof. 
\end{proof}
Next we count the possible ways of rearranging the Type I segments in $k$ paths.
\begin{lem}\label{lem:m-rmin-arrangement}
Given $m$ marked edges and $\rmin$ Type I segments, the number of configurations of Type~I,~II segments and unmarked edges $N_{m,\rmin}$ satisfies
\begin{align*}
N_{m,\rmin} \leq \binom{k}{\rmin} k^{f(\rmin) - m} 3^{k-\rmin} C^{f(\rmin) - m}.
\end{align*}
\end{lem}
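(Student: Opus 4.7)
The plan is to bound $N_{m,\rmin}$ by decomposing the count of configurations into three multiplicative contributions matching the three combinatorial factors in the claimed bound. The first contribution, $\binom{k}{\rmin}$, comes from choosing which $\rmin$ of the $k$ paths contain a Type I segment. In any configuration attaining the minimum number $\rmin$ of Type I segments for fixed marked-edge total $m$, these Type I segments must lie in $\rmin$ distinct paths: placing two Type I segments inside the same path can only decrease the maximum achievable marked-edge count relative to using two separate paths, which would in turn force more Type I segments to maintain $m$, contradicting the definition of $\rmin$.

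Next, for the factor $3^{k-\rmin}$, I would enumerate the structural configurations for each of the $k - \rmin$ Type II paths. Such a path consists solely of Type II segments (each performing a back-and-forth motion about a central vertex, by Definition~\ref{defn:segment}) interleaved with unmarked edges. For a maximal Type II path the structure is nearly canonical; the residual ambiguity comes only from the parity of $t$ and from whether the endpoints coincide, which are precisely the cases that define the quantity $h(t,i,j)$. A case-by-case check across these four cases shows that the number of admissible structural configurations per Type II path is bounded by the absolute constant $3$. For the last factor $k^{f(\rmin)-m}C^{f(\rmin)-m}$, I would regard the shortfall $f(\rmin) - m$ as the number of ``deletions'' of marked edges from the maximal baseline identified above. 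For each deletion, one specifies which of the $k$ paths is perturbed (at most $k$ choices) and the local structural alteration at the affected position---truncating a segment, converting a marked edge into an unmarked edge, or shrinking the interior of a Type II segment---which can be done in at most $C$ ways, where $C$ depends only on the fixed path length $t$.

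The main obstacle will be Step~2 together with its interface with Step~3: one must carry out the case analysis for Type II paths so that the bound $3$ genuinely covers all four parity cases (depending on whether $t$ is even or odd and whether $i = j$), while simultaneously ensuring that shortfalls below $h$ inside Type II paths are absorbed by the deletion bookkeeping of Step~3 rather than being double-counted. Precisely delineating which structural features are enumerated at each step, so that the product of the three bounds is an honest upper bound on $N_{m,\rmin}$ without overlap, is the chief combinatorial subtlety, and is likely what forces the somewhat loose constant $3^{k-\rmin}$ rather than a tighter parity-dependent factor.
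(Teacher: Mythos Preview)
Your proposal is correct and follows essentially the same approach as the paper. The paper organizes the argument as four explicit cases depending on whether $m$ equals some $f(r_0)$ or lies strictly between $f(r_0-1)$ and $f(r_0)$ (with separate treatment of $r_0=0$), but the content is identical to your three-factor decomposition: choose the $\rmin$ Type I paths, bound the placement of each maximal Type II segment by the constant $t-2h\le 3$, and absorb the shortfall $l=f(\rmin)-m$ by selecting which $\le l$ paths are non-maximal (the $k^l$ factor) and enumerating their $O(1)$ internal rearrangements (the $C^l$ factor).
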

\begin{proof}
We treat four cases separately: \\

\noindent \textbf{Case I:} $m = f(r_0)$ for some $1 \leq r_0 \leq k$. $\rmin = r_0$. If $m = f(r_0)$, then we must take $r_0$ paths to place the $r_0 = \rmin$ Type I segments which can be chosen in $\binom{k}{r_0}$ ways. The rest of the Type II segments are placed in the remaining $k-r_0$ paths.
The Type I, II paths have to be maximal in order to place these $m$ marked edges. 
We note that there are at most $t - 2 h$ unmarked edges in each of these Type II paths, which are not part of the segments, and these can be chosen in at most $m$ ways each. This is because each unmarked edge is chosen to be one of the $m$ marked edges.
We also note that for each Type II path, there are $t-2h \leq 3$ ways of placing the Type II segment in the path. So the overall bound for Case I is
\begin{align*}
    \binom{k}{\rmin} 3^{(k - \rmin)}.
\end{align*}

\noindent \textbf{Case II:} $f(r_0-1)< m <f(r_0)$ for some $1 \leq r_0 \leq k$. Again, we have $\rmin = r_0$.
Let $l = f(r_0) - m$. 
Then we need $r_0$ distinct paths to place the $r_0 = \rmin$ Type I segments, which again can be chosen in $\binom{k}{r_0}$. However, in this case, these paths might not be maximal. 
Take $l_1$ and $l_2$ such that $l_1 + l_2 \leq l$, all except $l_1$ of the $r_0$ Type I paths are maximal and and all except $l_2$ of the $k-r_0$ Type II paths are maximal. 
We can choose these $l_1 + l_2$ many non-maximal paths in at most $O(k^l)$ ways. 
Also, note that we can have at most $2l$ more unmarked edges compared to the case of $m = f(r_0)$. 
The total number of ways of arranging the segments and unmarked edges for the non-maximal paths is $O(C^l)$, where $C$ is a constant that might depend on $t$. Thus the bound for this case is 
\begin{align*}
    \binom{k}{\rmin} k^l 3^{(k - \rmin)} C^l.
\end{align*}

\noindent \textbf{Case III:} $m = f(0)$. The same argument and the bound as Case I holds here. We note that this case does not occur if $h=0$. \\

\noindent \textbf{Case IV:} $m < f(0)$.
Thus $m = f(0) - l$ for $l > 0$. Note that this case also does not occur if $h = 0$. Recall that for $m = f(0)$ case, all the paths have Type II segments containing $h$ marked edges. 
	We would like to count the configurations for $m = f(0) - l$ marked edges with no Type I segments.
	We note that then there exist $1 \leq u \leq l$ paths such that they have less than $h$ marked edges and the rest of the paths have $h$ marked edges.
	We can choose the $u$ paths in at most $O(k^l)$ ways. We can then arrange segments in these $l$ paths in $C^l$ ways. 
	Note that we can have at most $2l$ more unmarked edges for this case compared to the $m = f(0)$ case. The overall bound for this case is $(Ck)^l 3^k$.
\end{proof}
We are now ready to prove asymptotics of $E_{m,\rmin}$. 
\begin{lem}\label{lem:E-m-rmin-asymptotics deepwalk}
    $\sum_{m<kt}E_{m,\rmin} = o(U_b^k)$. 
\end{lem}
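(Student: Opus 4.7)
The plan is to upper bound $E_{m, r_*(m)}$ by the product of three factors: (i) the number of combinatorial configurations of segments and unmarked edges across the $k$ paths, (ii) the number of ways to select marked vertices under a fixed configuration, and (iii) the probability that all marked edges are present. Factor (i) is bounded by Lemma~\ref{lem:m-rmin-arrangement}, and factor (iii) is at most $(c_U \rho_n)^m$ since each of the $m$ distinct marked edges is independently present with probability $O(\rho_n)$.

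The central step is (ii). Applying Lemma~\ref{lem:choice-vertices} to each path and multiplying, a configuration with $p_1$ Type~I paths and $m$ total marked edges admits at most $n^{m - p_1}$ marked-vertex assignments. Crucially, $p_1 \geq r_*(m)$: since each Type~I path carries at most $t$ marked edges and each Type~II path at most $h$, we have $m \leq p_1 t + (k - p_1) h$, and rearranging (taking the integer ceiling) yields $p_1 \geq r_*(m)$. Hence the vertex count is at most $n^{m - r_*(m)}$.

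Combining the three bounds and using $U_b \asymp n^{t-1}\rho_n^t$, setting $r = r_*(m)$ and $l = f(r) - m \geq 0$, a direct algebraic manipulation gives
\begin{align*}
\frac{E_{m, r_*(m)}}{U_b^k} \lesssim \frac{(kC)^l\, 3^{k - r}}{n^{(k-r)(t - h - 1) + l}\, \rho_n^{(k-r)(t - h) + l}}.
\end{align*}
The most restrictive case is $l = 0$ and $r = k - 1$, requiring $n^{t - h - 1} \rho_n^{t - h} \gg 1$; since $h \leq \phi = \lfloor t/2 \rfloor$, this follows from the theorem's hypothesis $n^{t-1} \rho_n^t / (n \rho_n)^\phi \gg (\log n)^{c_0}$. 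All remaining cases ($l \geq 1$ or $r < k - 1$) pick up an extra factor of $n \rho_n \to \infty$ in the denominator, which dominates the constant $kC$. Since there are only $O(kt)$ values of $m < kt$, summing yields the desired bound.

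The main subtle point is the bound $p_1 \geq r_*(m)$ used in step (ii): one might worry that packing several Type~I segments into a single path would inflate vertex choices by reducing $p_1$, but the hard cap of $t$ marked edges per path forbids this regardless of how segments are distributed. A minor boundary case is $r_*(m) = 0$ (all paths Type~II), where $p_1 = 0$ and the same displayed bound applies with $r = 0$; this is precisely the regime that dictates the power $\phi$ appearing in the theorem's sparsity hypothesis, reflecting the Type~II backtracking paths identified in Remark~\ref{rem:effect-of bactracking}.
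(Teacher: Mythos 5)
Your three-factor decomposition is the right skeleton and your bound $n^{m-r_*(m)}$ on the marked-vertex count via $p_1 \geq r_*(m)$ is a clean way to get what the paper obtains by a slightly longer route (the paper's factor $U_b^{s(m)}\,n^{m-ts(m)-(r_*(m)-s(m))}\,\rho_n^{m-ts(m)}$ collapses to exactly $n^{m-r_*(m)}\rho_n^m$ after substituting $U_b \asymp n^{t-1}\rho_n^t$). However, there is a genuine gap in factor (i): the quantity $N_{m,r_*(m)}$ from Lemma~\ref{lem:m-rmin-arrangement} counts only the \emph{locations} of segments and unmarked edges, not the identities of the unmarked edges. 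Each unmarked edge, being a repetition of a previously seen edge, must be assigned to one of the $\leq m$ marked edges, and this choice is not determined by the configuration nor by the marked-vertex selection. The paper therefore multiplies by an additional factor $m^{u}$, where $u = (k - r_*(m))(t-2h) + 2(f(r_*(m)) - m)$ is the unmarked-edge count, giving the extra term $m^{(k-r_*(m))(t-2h)+2(f(r_*(m))-m)}$ in the displayed bound for $E_{m,r_*(m)}$. This factor is not negligible: for $r_*(m)=0$ it is of order $m^{k(t-2h)} = \exp(\Theta(\log n \cdot \log\log n))$, super-polynomial in $n$, and absorbing it is precisely one of the reasons the sparsity hypothesis carries the power $c_0 = 4 + (t_L+1)\eta$ rather than a smaller exponent. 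Because your displayed inequality omits this factor, the subsequent ``most restrictive case'' comparison, and the claim that the remaining cases merely lose a factor of $n\rho_n$, are not reliable as written — you should reinstate $m^u$, then verify term by term (as the paper does by summing over $l$ for fixed $r_0$, then over $r_0$) that $c_0 > 1 + t - 2h$ suffices to kill it.
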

\begin{proof}
Let us start by noting that $U_b = \Theta (n^{t-1} \rho_n^t)$. 
The number of choices of segments is given by  Lemma~\ref{lem:m-rmin-arrangement}. 
Since we have $\rmin$ Type I segments, we must have at least $s(m) = \max\left\{\rmin - (f(\rmin) - m), 0\right\}$  maximal Type I segments, with each of them having probability at most $U_b$. In the rest, we have one Type I segment. 
Thus by Lemma~\ref{lem:choice-vertices}, the vertices in the rest of the paths can be chosen in at most $n^{m - ts(m) - (\rmin - s(m))}$ ways. 
Also the $m-  t s(m)$ many remaining marked edges give us a contribution of at most $(C \rho_n)^{m- t s(m)}$.
The number of unmarked edges is specified in each of the cases in the proof of Lemma~\ref{lem:m-rmin-arrangement} above.
Combining all these, we get 
\begin{align*}
   E_{m,\rmin} &\leq \binom{k}{\rmin} k^{f(\rmin) - m} 3^{k-\rmin} C^{f(\rmin) - m}\\
   &\quad \times U_b^{s(m)} n^{m - ts(m) - (\rmin - s(m))} 
   \rho_n^{m- t s(m)}\\
   &\quad \times m^{(k-\rmin) \cdot (t - 2 h) + 2 (f(\rmin) - m)}.
\end{align*}
Using these bounds we see that with the choice of $k = \ceil{\log n}$
\begin{align*}
\sum_{l=0}^{t-h-1} E_{f(r_0)-l,r_{\star}(f(r_0)-l)}  &=  E_{f(r_0)-l,r_{\star}(f(r_0)-l)} \left( 1 + O\left( \frac{k^3}{n\rho_n} \right) \right), \quad r_0 \geq 1,\\ 
\sum_{l=0}^{f(0)-1} E_{f(0)-l,r_{\star}(f(0)-l)}  &=  E_{f(0)-l,r_{\star}(f(0)-l)} \left( 1 + O\left( \frac{k^3}{n\rho_n} \right) \right),\\
\sum_{r_0 = 0}^k E_{f(r_0),r_{\star}(f(r_0))}  &= E_{f(k),r_{\star}(f(k))} \left( 1 + O\left( \frac{k^{1 +  t - 2h} (n\rho_n)^{h}}{n^{t-1}\rho_n^t} \right) \right).
\end{align*}
These bounds in turn imply that
\begin{align*}
   \sum_{m=1}^{kt-1} E_{m, \rmin} =   o\left( U_b^k \right).
\end{align*}

\end{proof}

\subsubsection{Computing \texorpdfstring{$E_{m,r}$}{E{m,r}} for \texorpdfstring{$r>\rmin$}{r > r*(m) }.} 
We start by noting the additional number of configurations with $r$ Type I segments as compared to Lemma~\ref{lem:choice-vertices}. 
\begin{lem}\label{lem:num arrangement deepwalk}
Given $m$ marked edges and $r$ Type I segments, let $N_{m,r}$ be the number of configurations of Type~I,~II segments and unmarked edges. 
Then, for any $r>\rmin$,
\begin{align*}
    N_{m,r} \leq  N_{m,\rmin} \times (Ck)^{(r-\rmin) (t+1)}. 
\end{align*}
\end{lem}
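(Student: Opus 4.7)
The plan is to count $N_{m,r}$ by comparing it to $N_{m,\rmin}$ via an ``extra segment'' accounting. Starting from any configuration counted in $N_{m,r}$, I would designate $r - \rmin$ of its Type I segments as \emph{extra} (using, say, a canonical ordering based on path index and starting position within the path) and remove them one at a time. Removing an extra Type I segment means reassigning its marked edges either to neighbouring Type II segments (which, being back-and-forth walks on a single vertex, can absorb additional marked edges of the same shape) or to the ``non-maximal slots'' already accounted for in the bound of Lemma~\ref{lem:m-rmin-arrangement}. The total number $m$ of marked edges is preserved throughout, and the resulting configuration has exactly $\rmin$ Type I segments, so it is counted in $N_{m,\rmin}$.

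Next, I bound the number of ways to invert this removal for each extra Type I segment. For a single extra segment, the inverse requires specifying: (a) the path it belongs to, contributing a factor of $k$; (b) its starting position within that path, at most $t$ choices; and (c) its internal shape, namely the sequence of marked edges and backtracks of total length at most $t$, which ranges over at most $C^{t-1}$ possibilities for a constant $C = C(t)$ depending only on $t$ (since any such walk lives on a tree or unicyclic graph with at most $t$ vertices). The combined cost is $k \cdot t \cdot C^{t-1}$, which is bounded by $(Ck)^{t+1}$ after enlarging $C$. Iterating over all $r - \rmin$ extra segments yields the claimed factor $(Ck)^{(r-\rmin)(t+1)}$.

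The main obstacle is verifying that the removal/reassignment map is well-defined and that the inverse cost described above truly suffices to recover the original configuration. In particular, I need to check that reassigning the marked edges of a removed Type I segment does not accidentally create new Type I segments (which would prevent landing in $N_{m,\rmin}$), and that the target configuration fits within the slack already provided by the non-maximal cases in Lemma~\ref{lem:m-rmin-arrangement}. I would handle this by choosing the reassignment canonically -- for instance, by converting the removed segment's trajectory into a sequence of back-and-forth backtracks adjoined to the nearest existing Type II segment, which creates only additional Type II structure -- and by observing that the $k^{f(\rmin)-m}\,C^{f(\rmin)-m}$ factor in Lemma~\ref{lem:m-rmin-arrangement} already encodes the flexibility needed for placing these reassigned edges, so the redistribution can be absorbed without inflating $N_{m,\rmin}$ further.
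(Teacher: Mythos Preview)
Your map from $N_{m,r}$ down to $N_{m,\rmin}$ has a real gap at the ``reassignment'' step. A Type~I segment with $m'$ marked edges may occupy as few as $m'$ edge-slots in its path, but converting those marked edges to Type~II structure requires at least $2m'$ slots (every marked edge in a Type~II segment is followed by at least one backtrack). So the in-place conversion you describe does not fit, and the path in question need not contain any Type~II segment to absorb the overflow. Concretely, take $t=4$, $h=2$, $k=2$, $m=6$: a configuration with two Type~I paths of three marked edges each has $r=2$, while $\rmin=1$. To reach $\rmin=1$ you must move a marked edge \emph{across paths} (one path becomes saturated Type~I with four marked edges, the other becomes Type~II with two), and you must verify the receiving path has room and that no new Type~I segment is created. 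Your proposal notes this obstacle but does not construct the map; appealing to the slack in Lemma~\ref{lem:m-rmin-arrangement} is not a construction.

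This missing cross-path redistribution is also why your preimage count is incomplete. Your accounting of $(k)\cdot(t)\cdot(C^{t-1})$ selects only \emph{one} path for the reinserted segment and gives $O(k)$ per step, yet the exponent $t+1$ in the statement is not decorative: in the paper's proof it arises precisely because the inverse operation can touch up to $t+1$ different paths when redistributing marked edges. The paper avoids your difficulty by running the induction in the \emph{opposite} direction, building a surjective multi-map $T_r \twoheadrightarrow T_{r+1}$ with at most $O(k^{t+1})$ images per source: Case~I splits a well-splittable segment within one path (cost $O(k)$), while Case~II moves marked edges from up to $t$ donor paths into a fresh path to create a new Type~I segment there (cost $O(k^{t+1})$), with surjectivity checked explicitly. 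Going $T_r \to T_{r+1}$ rather than $T_r \to T_{\rmin}$ sidesteps exactly the feasibility issue that blocks your construction.
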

\begin{proof}
\begin{figure}
    \centering
    \includegraphics[scale = .05]{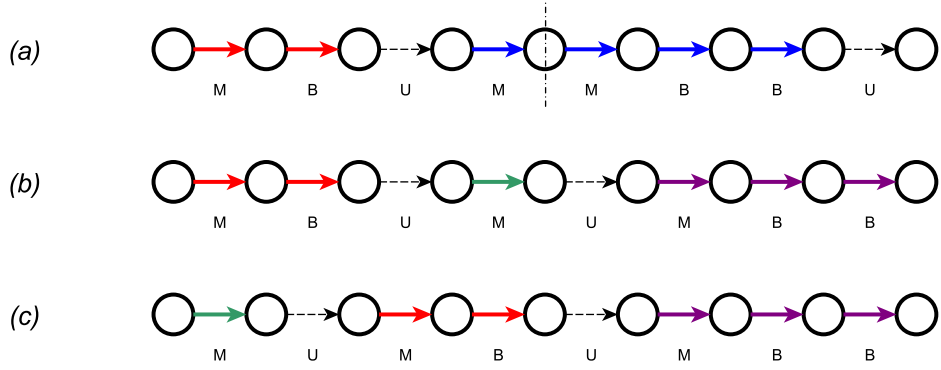}
    \caption{Example of a splitting a segment. $(a)$ The original configuration of the segments in the chosen path. The edges in the two segments are colored red and blue respectively. The edges are also labeled by M, B and U if the edge is a marked edge, a backtrack of a marked edge and an unmarked edge respectively. The dotted line indicates the location of the split.
    $(b), (c)$ Two examples of configurations after the splitting the segment. The two new segments are colored by green and violet.  
    }
    \label{fig:segment_split}
\end{figure}
Let $T_r$ be the set of all configurations of $m$ marked edges and $r$ Type I segments. Note that a ``configuration'' here only specifies location of marked edges, segments, and the unmarked edges. The backtracks of the marked edges are already specified.  
We will inductively bound $T_{r+1}$ in terms $T_r$. 
For that, we consider two cases depending on whether the elements of $T_{r+1}$ has a  Type I path with two Type I segments or not. 
In both cases, we will find a relation between  $T_r$ to $T_{r+1}$. 

\paragraph*{Case I. } Consider elements of  $T_{r+1}$ that has a Type I path with two Type I segments. Let us call this subset $T_{r+1}^{\sss I}$.
We consider the elements in $T_r$ which will be related to these elements in $T_{r+1}^{\sss I}$. 
Let $T_r^{\sss I} \subset T_r$ consisting of configuration such that there is at least one path $p$ so that the following hold: 
\begin{enumerate}
    \item \textbf{Extra unmarked edges.} $p$ has $l$ segments and $l'\geq l$ unmarked edges for some $l,l' \geq 1$. 
    \item \textbf{Well-splittable.} Suppose $p$ has a Type I segment $(i_l, i_{l+1}, \ldots, i_{l'})$ with marked edges given by $(i_{k_s} , i_{k_{s}+1})$ for $l \leq  k_s  < l'$, $1 \leq s \leq m'$, $m'\geq 1$,
    and the number of $s$ with $k_{s+1} - k_{s}$ being odd is at least $2$. 
\end{enumerate}
Note that a path with $l$ segments can be formed by just putting $l-1$ unmarked edges between segments. The first condition ensures that we have extra $l'-l+1$ of them. We will put these extra unmarked edges inside segments to split them. 
Regarding condition 2, notice that a Type I segment always has one of the $k_{s+1}-k_s$ being odd (by definition), but the well-splittable condition requires $k_{s+1}-k_s$ to be odd additionally in a second place.
This allows us to split a well-splittable Type I segment into two Type I segments.
For example, the blue segment in Figure~\ref{fig:construction 2}~(a) is well-splittable and it can be split into two parts with $k_{s+1} - k_s$ being odd.
We will split the segment by moving an unmarked edge in between as illustrated in Figure~\ref{fig:construction 2}. 

The general description for splitting is that given a path with extra marked edges and well-splittable property, we can think of unmarked edges as ``bars'', and segments as ``labelled balls''. The well-splittable segment is viewed as two Type I sub-segments and corresponds to two ``labelled balls''. Permute these bars and balls such that there is at least one bar between any two labelled balls (bars can be adjacent). 
Also, there may be multiple ways to split  well-splittable segment and we consider all possible such splits. 
This creates multiple elements in $T_{r+1}^{\sss I}$ from an element in $T_{r}^{\sss I}$. See Figure~\ref{fig:segment_split}~(b),~(c) for two possible elements. 
Moreover, we can get preimages of all the elements in $T_{r+1}^{\sss I}$,. 
To do this, we can first take a path with two Type I segments, rearranging the segments so that two Type I segments appear one after another. Then we can remove the unmarked edges between them, which joins the two Type I segments. The removed unmarked edges can be placed in other places adjacent to an unmarked edge. This is basically the inverse operation of the above splitting constructions. 

We can note there are $O(k)$ preimages in total of an element in $T_{r+1}^{\sss I}$, where the factor $k$ comes from choice of the path $p$ and the rest of the choices for permuting unmarked edges and segments are $O(1)$ as they are functions of only $t$ which is bounded.
Thus, 
\begin{align*}
    |T_{r+1}^{\sss I}| \leq O(k) \times |T_{r}^{\sss I}| \leq O(k) \times |T_{r}|.
\end{align*}

\paragraph*{Case II.} 
\begin{figure}[ht]
    \centering
    \includegraphics[scale = .05]{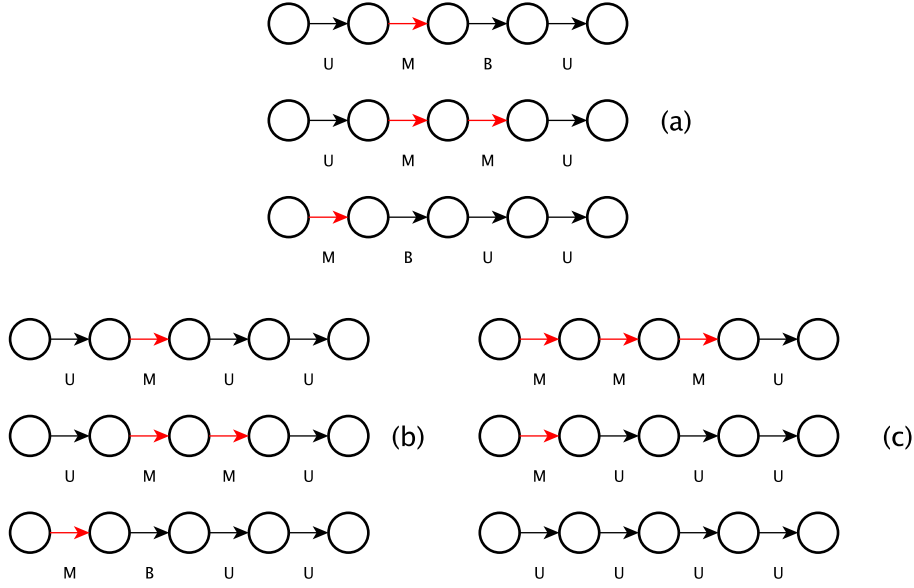}
    \caption{Example of the second construction for the case of $t=4$, $m=4$ and $s=1$. The marked edges are colored red. The letters M, B and U denote a marked edge, a backtrack of a marked edge and an unmarked edge respectively.
    $(a)$ The top path is the chosen path where we would place a Type I segment. The second path contains one Type I segment and the bottom path is a Type II path.
    $(b), (c)$ Two examples of new configurations from the construction. In both the cases the top path now has a Type I segment and the middle path continues to have a Type I segment.
    }
    \label{fig:construction 2}
\end{figure}
We next relate arrangements of marked edges where there is at most one Type I segment per path.
We denote such arrangements as $T_{r+1}^{\sss II} \subset T_{r+1}$. 
Let $T_{r}^{\sss II}$ be the set of all ways of specifying locations of segments such that there are a total of $r$ Type I segments and each of the Type I segments is placed on distinct paths.
Suppose that $r+1 \leq \min(k, m)$ and $r \geq r_{\star}(m)$.
Then we give a multi-map from $T_{r}^{\sss II}$ onto $T_{r+1}^{\sss II}$ using a construction. 
The condition $r+1 \leq m$ is necessary so that $T_{r+1}^{\sss II}$ is non-empty as we must have at least $r+1$ marked edges in order to have $r+1$ paths each having a Type I segment. 
The condition $r+1 \leq k$ is also necessary for $T_{r+1}^{\sss II}$ to be non-empty as we must have at least $r+1$ distinct paths to place the $r+1$ Type I segments.
The last condition $r \geq r_{\star}(m)$ is to ensure that $T_{r}^{\sss II}$ is non-empty. 
To fix notation, let $S_1(l)$ be the set of all ways of arranging $l$ marked edges in one path such that there is at least one Type~I segment in the path. 
Similarly, let $S_{2}(l)$ be the set of all ways of arranging $l$ marked edges in one path such that there are no Type I segments in the path.
We now describe the construction. 
Let $A \in T_{r}^{\sss II}$ and choose a path $p$ not containing a Type I segment. We can do so as $r+1 \leq k$.
Suppose $p$ has $l \geq 0$ marked edges. Choose $0 \leq u_1 + u_2 \leq t-l$ paths where $u_1$ of the paths are paths containing Type I segments with at least two marked edges and the rest $u_2$ paths are paths which are distinct from $p$, do not contain Type~I segments, and contain at least one marked edge.
If $l=0$ we require $u_1 + u_2 > 0$. It is feasible to choose such path(s) as $r+1 \leq m$.
Suppose the $u_1$ paths are labeled as $q_1, q_2, \ldots, q_{u_1}$ and the $u_2$ paths indices $q'_1, q'_2, \ldots, q'_{u_2}$. Suppose these paths have $l_{q_1}, l_{q_2}, \cdots, l_{q_{u_1}}$ and $l_{q'_1}, l_{q'_2}, \cdots, l_{q'_{u_2}}$ marked edges respectively.
Let $v_{q_1}, v_{q_2}, \cdots, v_{q_{u_1}}$ be such that $0 < v_{q_i} < l_{q_i}$.
Similarly let $v_{q'_1}, v_{q'_2}, \cdots, v_{q'_{u_2}}$ be such that $0 < v_{q'_i} \leq l_{q'_i}$. 
We require $\sum_i v_{q_i} + \sum_i v_{q'_i} \leq t-l$.
Then we modify the arrangements of marked edges in the paths so that the new arrangements for the sequence of paths $(p, q_1, q_2, \ldots, q_{u_1}, q'_1, q'_2, \ldots, q'_{u_2})$ is any element of
\begin{align*}
S_1\left(l + \sum_i v_{q_i} + \sum_i v_{q'_i}\right) \times \left(\prod_{i=1}^{u_1} S_1(l_{q_i} - v_{q_i})  \right) \times \left(\prod_{i=1}^{u_2} S_{2}(l_{q'_i} - v_{q'_i})  \right).
\end{align*}
We keep the arrangements of marked edges in the rest $k - (1 + u_1 + u_2)$ paths unchanged. 
This leads to multiple images of $A$ in $T_{r+1}^{\sss II}$.
We note that there are $O(k^{t+1})$ images of $A$ due to the choice of the paths and since the number of ways of choosing the new arrangements for the $1 + u_1 + u_2$ paths is $O(1)$ as $t$ is fixed.
We also note that since we modify at most $t+1$ paths in this construction, the number of unmarked edges increases by at most $t(t+1)$.

Now we show that the multi-map given by the construction above is surjective onto $T_{r+1}^{\sss II}$. For this, let $A' \in T_{r+1}^{\sss II}$. Let $p$ be a path containing a Type I segment. If $p$ has $l \leq h$ edges we choose an element in $x \in S_{2}(l)$. We can then see that $A'$ is in the image of the element $A \in T_{r}^{\sss II}$ where the path $p$ has the arrangement of marked edges given by $x$ and the rest of the paths have the same arrangement of marked edges as $A'$. We note that since we have replaced the Type I segment in the path $p$ with a Type II segment, the element $A$ has one less Type I segment. 
Now suppose that $p$ has $l > h$ edges. Choose $u_1 + u_2 \leq l - h$ paths so that $u_1$ paths contain Type I segments and these paths are not equal to $p$ and, $u_2$ paths do not contain Type I segments segments.  		
Suppose the $u_1$ paths are labeled as $q_1, q_2, \ldots, q_{u_1}$ and the $u_2$ paths are labeled as $q'_1, q'_2, \ldots, q'_{u_2}$. Suppose that these paths have $l_{q_1}, l_{q_2}, \cdots, l_{q_{u_1}}$ and $l_{q'_1}, l_{q'_2}, \cdots, l_{q'_{u_2}}$ marked edges respectively. We require that $l_{q_i} < t$ and $l_{q'_i} < h$ i.e. that these chosen paths are not saturated.
Let $v_{q_1}, v_{q_2}, \cdots, v_{q_{u_1}}$ be such that $0 < v_{q_i} \leq t - l_{q_i}$.
Similarly let $v_{q'_1}, v_{q'_2}, \cdots, v_{q'_{u_2}}$ be such that $0 < v_{q'_i} \leq h - l_{q'_i}$.  
We require that 
\begin{align*}
	\sum_{i=1}^{u_1} v_{q_i} + \sum_{i=1}^{u_2} v_{q'_i} = l -h.
\end{align*}
This is feasible as long as $r \geq r_{\star}(m)$. The above condition says that the chosen paths have enough spaces to move $l- h$ edges from path $p$ to the chosen paths. 
Choose a new arrangement $x$ of the marked edges for the sequence of paths $(p, q_1, q_2, \ldots, q_{u_1}, q'_1, q'_2, \ldots, q'_{u_2})$ from the set
\begin{align*}
	S_{2}\left(h\right) \times \left(\prod_{i=1}^{u_1} S_1(l_{q_i} + v_{q_i})  \right) \times \left(\prod_{i=1}^{u_2} S_{2}(l_{q'_i} + v_{q'_i})  \right).
\end{align*} 
Then keeping the arrangements of marked edges of the rest of the paths the same as in $A'$ and choosing the arrangements for the chosen paths as $x$, we have a preimage $A \in T_{r}^{\sss II}$ under the construction described above.

From the two constructions above we have
\begin{align*}
	\vert T_{r+1} \vert \leq  O(k) \vert T_r \vert + O(k^{t+1}) \vert T_r \vert. \numberthis \label{eqn:bound for segments}
\end{align*}
			
\end{proof}

We can now compute asymptotics for $E_{m,r}$. 
\begin{lem}\label{lem:sum over r>rmin deepwalk}
$\sum_{r= \rmin}^{\rmax} E_{m,r} =  E_{m, \rmin} (1 + o(1))$.
\end{lem}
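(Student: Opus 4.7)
The plan is to show that $E_{m,r}$ decreases geometrically in $r$ for $r>\rmin$, so that $E_{m,\rmin}$ dominates the sum. As in the proof of Lemma~\ref{lem:E-m-rmin-asymptotics deepwalk}, I would bound $E_{m,r}$ as a product of four pieces: the configuration count $N_{m,r}$; the probability factor $\rho_n^m$ coming from the $m$ marked edges; a power of $n$ counting the number of choices of marked vertices (which depends on the Type~I/II structure via Lemma~\ref{lem:choice-vertices}); and a $\mathrm{poly}(m)$ factor recording which previously-seen edges are re-used as unmarked edges.

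The heart of the argument is to compare this bound for consecutive values of $r$. Lemma~\ref{lem:num arrangement deepwalk} already gives $N_{m,r+1}\leq N_{m,r}\cdot (Ck)^{t+1}$. Separately, Lemma~\ref{lem:choice-vertices} tells us that a Type~I segment with $r'$ marked edges yields at most $n^{r'-1}$ marked-vertex choices while a Type~II segment with $r'$ marked edges yields at most $n^{r'}$. So in both Case~I (splitting a single Type~I segment into two) and Case~II (moving marked edges into a path with no Type~I segment, thereby converting a Type~II structure into a Type~I one) of Lemma~\ref{lem:num arrangement deepwalk}, the exponent of $n$ in the vertex-choice factor drops by exactly one. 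Putting the two ingredients together gives
\begin{align*}
\frac{E_{m,r+1}}{E_{m,r}} \leq \frac{(Ck)^{t+1}}{n}\cdot C',
\end{align*}
where $C'$ absorbs a bounded-in-$n$ polynomial-in-$m$ adjustment from the at most $t(t+1)$ extra unmarked edges introduced by the construction.

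Choosing $k=\lceil\log n\rceil$ as in Lemma~\ref{lem:E-m-rmin-asymptotics deepwalk}, the hypothesis \eqref{eq:t-condn} (which in particular forces $n\rho_n\to\infty$ fast enough to dominate poly-log factors) makes the ratio $(Ck)^{t+1}/n$ of order $(\log n)^{t+1}/n=o(1)$, uniformly in $m$ and $r$. Summing the resulting geometric series from $r=\rmin+1$ up to $\rmax\leq k$ yields $\sum_{r>\rmin}E_{m,r}=o(E_{m,\rmin})$, which is exactly the claimed $\sum_{r=\rmin}^{\rmax}E_{m,r}=E_{m,\rmin}(1+o(1))$.

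The main obstacle is confirming that each unit increase in $r$ genuinely costs a full factor of $n$ in vertex choices rather than in configurations alone. This reduces to a careful audit of the two constructions in Lemma~\ref{lem:num arrangement deepwalk}: in Case~I, removing an unmarked edge from between two blocks and inserting it inside a Type~I segment changes the associated graph $H_p$ from a tree/unicyclic graph with $r'$ marked edges to two smaller such graphs, losing one vertex overall; in Case~II, replacing a Type~II star by a Type~I tree or unicyclic structure on the same number of marked edges drops one vertex by the same accounting. Once this bookkeeping is done, the polynomial-in-$m$ corrections from the extra unmarked edges are dwarfed by the gained $n$ in the denominator, and the geometric decay goes through.
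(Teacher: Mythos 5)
Your proposal matches the paper's proof: both bound $E_{m,r}\leq N_{m,r}\,n^{m-r}\rho_n^{m}\,m^{u(m)+(r-\rmin)t(t+1)}$, use Lemma~\ref{lem:num arrangement deepwalk} (really its consecutive-step bound \eqref{eqn:bound for segments}, $|T_{r+1}|\leq O(k^{t+1})|T_r|$, rather than the cumulative statement) for the configuration count, observe that each additional Type~I segment drops the vertex power of $n$ by one, and sum a geometric series with ratio $O(k^{t+1}m^{t(t+1)}/n)=o(1)$ for $k=\lceil\log n\rceil$. One small inaccuracy: the ratio $(\log n)^{O(1)}/n=o(1)$ holds unconditionally and does not invoke \eqref{eq:t-condn}, but this does not affect the correctness of the argument.
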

\begin{proof}
We start by giving a bound for $E_{m,r}$.
The probability of the $m$ marked edges is bounded by $\rho_n^{m}$. 
By Lemma \ref{lem:choice-vertices}, the upper bound for the number of marked edges is $m-r$ as there are $r$ Type I segments.
Let 
\[u(m) = (k-\rmin) \cdot (t - 2 h) + 2 (f(\rmin) - m),\] 
be the upper bound for the number of unmarked edges for the case of $\rmin$ segments obtained from the proof of Lemma \ref{lem:E-m-rmin-asymptotics deepwalk}.
Then by the two constructions in Lemma \ref{lem:num arrangement deepwalk}, the number of unmarked edges for the case of $r$ Type I segments is at most $u(m) + (r- \rmin) \cdot (t(t+1))$.
Combining these we have
\begin{align*}
E_{m,r} \leq N_{m,r} n^{m-r} \rho_n^{m} m^{u(m) + (r- \rmin) \cdot t(t+1)}. 
\end{align*}
Then by using Lemma \ref{lem:num arrangement deepwalk} we have
\begin{align*}
\sum_{r= \rmin}^{\rmax} E_{m,r}  
= E_{m,\rmin} \sum_{r= \rmin}^{\rmax} \left(\frac{O(k^{t+1}) \cdot m^{t(t+1)}}{n}\right)^{r - \rmin}
= E_{m,\rmin} (1 + o(1)).
\end{align*}
\end{proof}
By Lemmas \ref{lem:E-m-rmin-asymptotics deepwalk} and \ref{lem:sum over r>rmin deepwalk} we then have
\begin{align*}
\sum_{m < kt} E_m \leq C \sum_{m < kt} E_{m, \rmin} = o(U_b^k).
\end{align*}
This completes the proof of Proposition \ref{prop:A^t kth moment-2}.

\subsection{Concentration of path counts}\label{sec:concent-deepwalk}
In this section, we prove concentration of $Y_b$ defined in \eqref{eq:Y-b-def}. Let us start with a general result which we will apply for indicator random variables appearing in paths: 
\begin{lem}\label{lem:expectation centered moment}
Let $\cE$ be a finite set and let $\xi_{e}$ be indicator random variables for $e \in \cE$ with $\pi_e = \PR(\xi_{e} = 1) > 0$. 
For a subset $S \subset \cE$, we define  $\xi_S := \prod_{e\in S} \xi_e$.  
Suppose $S_1, S_2, \ldots S_k$ be non-empty subsets of $\cE$ and let $n_e:= |\{j\in [k]: e\in S_j\}|$.
Let
$\cE_1 := \{	e \in \cup_{j=1}^k S_j : n_e = 1	\}$ and $\cE_2 = \left(\cup_{j=1}^k S_j \right) \backslash \cE_1$.  
Then we have
\begin{align} \label{eq:bound-kth-mom-expt}
\bigg| \E \prod_{j=1}^{k} (\xi_{S_j} - \E \xi_{S_j} ) \bigg| \leq   \bigg(\prod_{e \in \cE_1} \pi_e\bigg) \times \bigg(\prod_{e \in \cE_2} \pi_e (1 + \pi_e)\bigg).
\end{align}
\end{lem}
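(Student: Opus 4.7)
The plan is to proceed in two stages: first peel off the $\cE_1$ edges using independence, and then bound the reduced expectation over the $\cE_2$ edges via an explicit expansion of the centered products.

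For the first stage, I would fix any $e^* \in \cE_1$ and let $j_0$ be the unique index with $e^*\in S_{j_0}$; then decompose
\[
\xi_{S_{j_0}} - \pi_{S_{j_0}} \;=\; \pi_{e^*}\bigl(\xi_{S_{j_0}\setminus\{e^*\}} - \pi_{S_{j_0}\setminus\{e^*\}}\bigr) \;+\; (\xi_{e^*}-\pi_{e^*})\,\xi_{S_{j_0}\setminus\{e^*\}}.
\]
Multiplying by $\prod_{j\neq j_0}(\xi_{S_j}-\pi_{S_j})$ and taking expectation, the second contribution vanishes because $\xi_{e^*}$ is independent of every other variable in the product (as $e^* \in \cE_1$) and has mean $\pi_{e^*}$. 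Iterating over all $e\in\cE_1$ yields
\[
\E\prod_{j=1}^k (\xi_{S_j}-\pi_{S_j}) \;=\; \Bigl(\prod_{e\in\cE_1}\pi_e\Bigr)\cdot \E\prod_{j=1}^k\bigl(\xi_{S_j\cap\cE_2} - \pi_{S_j\cap\cE_2}\bigr),
\]
under the convention $\xi_\emptyset = \pi_\emptyset = 1$ (so if any $S_j\cap\cE_2 = \emptyset$, the expectation is $0$ and the bound is trivial).

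For the second stage, I would set $\eta_e := \xi_e - \pi_e$ and write
\[
\xi_{S_j\cap\cE_2} - \pi_{S_j\cap\cE_2} \;=\; \sum_{\emptyset\neq T_j\subseteq S_j\cap\cE_2}\;\prod_{e\in T_j}\eta_e\prod_{e\in(S_j\cap\cE_2)\setminus T_j}\pi_e.
\]
Expanding the product over $j$ and using independence of the $\eta_e$'s, one obtains a sum over tuples $(T_1,\ldots,T_k)$ of contributions $\prod_e(\E\eta_e^{m_e})\,\pi_e^{n_e-m_e}$, with $m_e := |\{j: e\in T_j\}|$. Since $\E\eta_e = 0$, only tuples with $m_e\neq 1$ for every $e\in\cE_2$ survive. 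For each such $e$ with $m_e\geq 2$ the inequalities $|\eta_e|\leq 1$ and $\E\eta_e^2 = \pi_e(1-\pi_e)$ give $|\E\eta_e^{m_e}|\leq \pi_e(1-\pi_e)\leq \pi_e$, so that edge contributes at most $\pi_e^{n_e-m_e+1}$ (and $\pi_e^{n_e}$ when $m_e=0$).

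Summing these per-tuple bounds over admissible $(T_j)$ and factorizing over edges should produce the claimed estimate $\prod_{e\in\cE_2}\pi_e(1+\pi_e)$. The main technical obstacle is precisely this combinatorial step: a naive factorization that drops the constraint $T_j\neq\emptyset$ gives only the weaker per-edge bound $\pi_e(1+\pi_e)^{n_e}$. To recover the sharper $\pi_e(1+\pi_e)$ per edge one must use the non-emptiness constraint carefully — for instance via inclusion–exclusion on the tuples $(T_j)$, or by inducting on $|\cE_2|$ and peeling off one $e\in\cE_2$ at a time, using $n_e\geq 2$ to ensure that each inductive step contributes exactly the factor $\pi_e(1+\pi_e)$.
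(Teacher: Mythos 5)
Your Stage 1, peeling off the $\cE_1$ edges by independence, is correct and is in substance the first branch of the paper's induction. The gap is in Stage 2: the sum of per-tuple absolute bounds, \emph{even restricted to non-empty} $T_j$, can strictly exceed $\prod_{e\in\cE_2}\pi_e(1+\pi_e)$, so no tuple bookkeeping (inclusion--exclusion, constrained factorization, or peeling one $e\in\cE_2$ at a time) that still replaces $\E[\eta_e^{m_e}]$ by an absolute bound can close it. Concrete failure: take $k=3$, $\cE_2=\{e,f\}$, $S_1=S_2=S_3=\{e,f\}$, $\pi_e=\pi_f=\pi$ small, and write $\eta_e=\xi_e-\pi_e$. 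The surviving tuples $(T_1,T_2,T_3)$ are indexed by $(a,b,c)$ with $a+b+c=3$ counting how many $T_j$ equal $\{f\},\{e\},\{e,f\}$; after imposing $m_e=b+c\neq 1$ and $m_f=a+c\neq 1$ one is left with $(3,0,0),(0,3,0),(0,0,3),(1,0,2),(0,1,2),(1,1,1)$ with multiplicities $1,1,1,3,3,6$. Using the sharpest uniform bound $|\E\eta^m|\leq\pi(1-\pi)$ for $m\geq 2$, the summed per-tuple bounds equal $\pi^2+4\pi^3-3\pi^4+O(\pi^5)$, whereas the target $\pi^2(1+\pi)^2=\pi^2+2\pi^3+\pi^4$ is smaller for small $\pi$. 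The lemma is nonetheless true: the exact value is $\E(\xi_e\xi_f-\pi^2)^3=\pi^2(1-\pi^2)(1-2\pi^2)$, which has \emph{no} $\pi^3$ term. That term is killed by cancellation across tuples: the $-6\pi^3$ inside the $(0,0,3)$ contribution $(\E\eta^3)^2$ (since $\E\eta^3=\pi(1-\pi)(1-2\pi)=\pi-3\pi^2+2\pi^3$ carries a $-3\pi^2$) exactly cancels the $+6\pi^3$ from the $(1,0,2)$ and $(0,1,2)$ tuples. Replacing $\E\eta^3$ by $\pm\pi(1-\pi)=\pm(\pi-\pi^2)$ discards the $-3\pi^2$ that does the work, and the estimate overshoots.

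The paper's proof avoids the $\eta$-expansion and its sign loss altogether. When $\cE_1=\emptyset$ it picks an $e$ with $n_e\geq 2$, conditions on the $\sigma$-algebra generated by $(\xi_f)_{f\neq e}$, and splits the conditional expectation of the $n_e$ factors containing $e$ into $\pi_e$ times a reduced centered product of the \emph{same} form with $e$ deleted (so induction applies) plus a remainder $(1-\pi_e)\prod_j(-\E\xi_{S_j^{(2)}})$ that supplies $\pi_e^{n_e}$ worth of factors and is bounded separately. Tracking nested families $S_j^{(1)}\subseteq S_j^{(2)}$ (the remaining indicator versus the fixed subtracted mean) lets the induction variable $|\cup_j S_j^{(1)}|$ decrease while the means stay fixed, and summing the two pieces is precisely what produces $\pi_e(1+\pi_e)$ per $\cE_2$-edge. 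That decomposition keeps the cancellation your term-by-term absolute bound throws away.
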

\begin{proof}
Let $S_{j}^{\sss (1)} \subseteq S_{j}^{\sss (2)} \subseteq S_j$ and $m = | \cup_{j=1}^k S_{j}^{\sss (1)} |$. 
We define  $n_{e}^{\sss (1)}$ and $\cE_1^{\sss (1)}$ similarly as $n_e$,  $\cE_1$ but now with sets $S_{j}^{\sss (1)} $. 
We prove by induction on $m$ that, for all possible choices of $S_{j}^{\sss (1)}, S_{j}^{\sss (2)}$, 
\begin{eq} \label{eq:gen-bound-kth-mom-expt}
\bigg| \E \prod_{j=1}^{k} \big(\xi_{S_j^{\sss (1)}} - \E \xi_{S_j^{\sss (2)}} \big) \bigg| \leq   \bigg(\prod_{e \in \cE_1^{\sss (1)}} \pi_e\bigg) \times \bigg(\prod_{e \in \cE_2^{\sss (1)}} \pi_e (1 + \pi_e)\bigg). 
\end{eq}
Throughout, we use the convention that a product over an empty index set is $1$. 
For the induction base case, suppose $m=1$. Let $e$ be the only element in $\cup_{i=1}^k S_j^{\sss (1)}$ and without loss of generality let $e \in S_j^{\sss (1)}$ for $1 \leq j \leq n_{e}^{\sss (1)}$. 
If $n_{e}^{\sss (1)}=1$, then 
\begin{align*}
	\E \big(\xi_{S_1^{\sss (1)}} - \E \xi_{S_1^{\sss (2)}} \big) = \pi_e \big(	1 - \E \xi_{S_1^{\sss (2)}}'	\big) \leq \pi_e,
\end{align*} 
where $\xi_{S_1^{\sss (2)}}' := \xi_{S_1^{\sss (2)} \setminus \{e\}}$. 
The terms with $S_j^{\sss (1)} = \varnothing$ can be bounded by 1.
This shows the base case for $m=1$ and $n_{e}^{\sss (1)}=1$. If $n_{e}^{\sss (1)} > 1$,  then 
\begin{align*}
	&\E \bigg[\prod_{j=1}^{n_{e}^{\sss (1)}} \big(\xi_{S_j^{\sss (1)}} - \E \xi_{S_j^{\sss (2)}} \big) \bigg] 	 = \E \bigg[\prod_{j=1}^{n_{e}^{\sss (1)}} \big(\xi_e - \E \xi_{S_j^{\sss (2)}} \big) \bigg]\\
	&= \pi_e \cdot \prod_{j=1}^{n_{e}^{\sss (1)}} \big(1 - \E \xi_{S_j^{\sss (2)}} \big)+ (1- \pi_e) \cdot \prod_{j=1}^{n_{e}^{\sss (1)}} \big( - \E \xi_{S_j^{\sss (2)}} \big)\leq \pi_e+ (\pi_e)^{n_{e}^{\sss (1)}} \leq \pi_e(1 + \pi_e).
\end{align*}
This completes the proof of the base case $m=1$. 
			
Next let $m > 1$. We will split in two cases depending on whether $\cE_{1}^{\sss(1)} \neq \varnothing$ and $\cE_{1}^{\sss(1)} = \varnothing$. First, if $\cE_{1}^{\sss(1)} \neq \varnothing$, then pick an element $e \in \cE_{1}^{\sss(1)}$. 
Note that $e$ can only be in one of $S_j^{\sss (1)}$'s, and 
without loss of generality let $e \in S_1^{\sss (1)}$. 
Let $\sF_e$ is the minimum sigma-algebra with respect to which $(\xi_f)_{f\neq e}$ are measurable, and analogously to before, define $\xi_{S_1^{\sss (1)}}' := \xi_{S_1^{\sss (1)} \setminus \{e\}}$ and $\xi_{S_1^{\sss (2)}}' := \xi_{S_1^{\sss (2)} \setminus \{e\}}$. Taking iterated conditional expectation with respect to $\cF_e$, we get
\begin{align*}
	\E \bigg[\prod_{j\in [k]: S_j^{\sss (2)} \neq \varnothing}  \big(\xi_{S_j^{\sss (1)}} - \E \xi_{S_j^{\sss (2)}} \big) \bigg] =  \pi_e \cdot \E \bigg[	\big(\xi_{S_1^{\sss (1)}}' - \E \xi_{S_1^{\sss (2)}}' \big) \prod_{j \in \{2, 3,\ldots k\} : S_j^{\sss (2)} \neq \varnothing } \big(\xi_{S_j^{\sss (1)}} - \E \xi_{S_j^{\sss (2)}} \big)	\bigg].
\end{align*}
Now we can conclude \eqref{eq:gen-bound-kth-mom-expt} by induction step using    $S_1^{\sss (1)} \setminus \{e\}$,  $S_1^{\sss (2)} \setminus \{e\}$ and $S_j^{\sss (1)}$, $S_j^{\sss (2)}$ for $j\geq 2$, and noting that $\cE_1^{\sss (2)}$ is unchanged in this new set up. 
			
Next suppose that $\cE_1^{\sss (1)} = \varnothing$. Pick any $e \in \cE_2^{\sss (1)}$. 
Then $n_e^{\sss (1)} > 1$ and without loss of generality let $e \in S^{\sss (1)}_j$ for $1 \leq j \leq n_e^{\sss (1)}$.
We again take iterated conditional expectation with respect to $\cF_e$ to get
\begin{eq}\label{eqn:condition on Y_a}
	&\E \bigg[\prod_{j\in [k]: S_j^{\sss (2)} \neq \varnothing}  \big(\xi_{S_j^{\sss (1)}} - \E \xi_{S_j^{\sss (2)}} \big) \bigg] \\
	&=\E\bigg[\prod_{j \in \{ n_e^{\sss (1)} + 1,\dots,k\} : S_j^{\sss (2)} \neq \varnothing } \big(\xi_{S_j^{\sss (1)}} - \E \xi_{S_j^{\sss (2)}} \big)  \E\bigg[ \prod_{j=1 }^{n_e^{\sss (1)}} \big(\xi_{S_j^{\sss (1)}} - \E \xi_{S_j^{\sss (2)}} \big) \ \Big| \cF_e  \bigg] \bigg]. 
\end{eq}
We simplify 
\begin{align}\label{eqn:integrate Y_a}
	&\E\bigg[ \prod_{j=1 }^{n_e^{\sss (1)}} \big(\xi_{S_j^{\sss (1)}} - \E \xi_{S_j^{\sss (2)}} \big) \ \Big| \cF_e  \bigg] = \pi_e \cdot \prod_{j=1 }^{n_e^{\sss (1)}} \big(\xi_{S_j^{\sss (1)}}' - \E \xi_{S_j^{\sss (2)}} \big)  + (1 - \pi_e) \prod_{j=1 }^{n_e^{\sss (1)}} \big(-\E \xi_{S_j^{\sss (2)}} \big). 
\end{align}
Suppose that $m-l = \lvert \cup_{j=n_e^{\sss (1)}+1}^k  S^{\sss (2)}_j \rvert$ for some $l \geq 1$ (we have $l \geq 1$ since $e$ is not in the union).
For the second term in \eqref{eqn:integrate Y_a}, we have 
\begin{align*}
	\bigg|  (1 - \pi_e) \prod_{j=1 }^{n_e^{\sss (1)}} (-\E \xi_{S_j^{\sss (2)}}) \bigg| \leq  \pi_e^{n_e^{\sss (1)}-1} \prod_{f \in \cup_{j=1}^{n_e^{\sss (1)}}  S^{\sss (2)}_j} \pi_{f} =:Z_1,
\end{align*}
as the term $\xi_e$ is repeated $r$ times. 
We bound the term in \eqref{eqn:condition on Y_a} outside conditional expectation by induction. 
If we consider the sets $S_j^{\sss (1)}$,  $S_j^{\sss (2)}$ for $j >n_e^{\sss (1)}$ with $S_j^{\sss (2)} \neq \varnothing$ and create the sets $\tilde{\cE}_{1}$ and $\tilde{\cE}_{2}'$ analogously to $\cE_{1}^{\sss (1)}$, $\cE_{1}^{\sss (2)}$ when restricted to this smaller class of subsets. 
Then
\begin{align*}
	\bigg|\E\bigg[\prod_{j \in \{ n_e^{\sss (1)} + 1,\dots,k\} : S_j^{\sss (2)} \neq \varnothing } \big(\xi_{S_j^{\sss (1)}} - \E \xi_{S_j^{\sss (2)}} \big) \bigg]\bigg| \leq \prod_{f \in \tilde{\cE}_{1}} \pi_f \prod_{f \in \tilde{\cE}_{2}} \pi_f (1 + \pi_f)=: Z_2.
\end{align*}
Hence, the term in \eqref{eqn:condition on Y_a} is at most 
\begin{align*}
	& \pi_e \bigg| \E\bigg[\prod_{j \in \{ n_e^{\sss (1)} + 1,\dots,k\} : S_j^{\sss (2)} \neq \varnothing } \big(\xi_{S_j^{\sss (1)}} - \E \xi_{S_j^{\sss (2)}} \big)   \prod_{j=1 }^{n_e^{\sss (1)}} \big(\xi_{S_j^{\sss (1)}}' - \E \xi_{S_j^{\sss (2)}}  \bigg]
	\bigg|  + Z_1Z_2\\
	&\leq \pi_e \cdot \left(\prod_{f \in \cE_1^{\sss (1)}} \pi_f \cdot \prod_{f \in \cE_2^{\sss (1)} \setminus \{e\}} \pi_e (1 + \pi_e) \right) + \pi_e^{n_e^{\sss (1)}-1} \prod_{f \in \cup_{j=1}^{n_e^{\sss (1)}}   S^{\sss (2)}_j} \pi_{f} \prod_{f \in \tilde{\cE}_{1}} \pi_f \prod_{f \in \tilde{\cE}_{2}'} \pi_f (1 + \pi_f). 
\end{align*}
The last term above is at most the bound in \eqref{eq:gen-bound-kth-mom-expt}, which follows by noting that the second term has a factor of $(\pi_e)^{n_e^{\sss (1)}}$ with  $n_e^{\sss (1)} \geq 2$, and for each of the variables $f \in \cE_1^{\sss(1)}, f \neq e$ the second term has a factor smaller or equal to $\pi_f$, and for each of the variables $f \in \cE_1^{\sss(1)}$ the second term has a factor smaller or equal to $\pi_f (1+\pi_f)$.
This completes the proof.
\end{proof}
We now prove the following concentration result for $Y_b$: 
\begin{prop}\label{prop:Lower tail A^t}
Let $\tL = \tU = t \geq 3$ be given and $k = \ceil{\log n}$. Suppose that \eqref{eq:t-condn} holds.
Then we have 
\begin{align*}
\Prob\left(|Y_b - \E Y_b| > \delta \E Y_b\right)  = O(n^{-c}),
\end{align*}
where $\delta = \Theta((\log n)^{-\eta})$ for some $\eta > 0$, and $c>0$ is any real number. 
\end{prop}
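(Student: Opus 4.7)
The plan is to use the high-moment method. Take $k = \lceil \log n \rceil$, rounded up to be even, so that $(Y_b - \E Y_b)^k \geq 0$. By Markov's inequality,
\begin{align*}
\Prob\bigl(|Y_b - \E Y_b| > \delta \E Y_b\bigr) \leq \frac{\E (Y_b - \E Y_b)^k}{\delta^k (\E Y_b)^k},
\end{align*}
and we will bound the numerator by combining Lemma~\ref{lem:expectation centered moment} with the combinatorial estimates used to prove Proposition~\ref{prop:A^t kth moment-2}. Using $\E Y_b^k \asymp U_b^k$ from Proposition~\ref{prop:A^t kth moment} in the denominator, it suffices to exhibit some $\eta > 0$ for which $\E (Y_b - \E Y_b)^k \leq U_b^k (\log n)^{-2\eta k}$; then choosing $\delta = (\log n)^{-\eta}$ yields the required ratio bound $(\log n)^{-\eta k} = e^{-\eta (\log n)(\log \log n)} = o(n^{-c})$ for every $c>0$.

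Expanding the numerator gives
\begin{align*}
\E (Y_b - \E Y_b)^k = \sum_{(p_1, \dots, p_k) : p_\alpha \in \cP_b} \E \prod_{\alpha=1}^k \bigl(X_{p_\alpha} - \E X_{p_\alpha}\bigr).
\end{align*}
The crucial simplification is that any $k$-tuple in which some $p_\alpha$ has edge set $e(p_\alpha)$ \emph{disjoint} from $\bigcup_{\beta \neq \alpha} e(p_\beta)$ contributes zero: in that case $X_{p_\alpha}$ is independent of the remaining factors and $\E(X_{p_\alpha} - \E X_{p_\alpha}) = 0$. Call the surviving $k$-tuples \emph{linked}; in every such configuration, each of the $k$ paths shares at least one edge with at least one other path. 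For linked configurations, apply Lemma~\ref{lem:expectation centered moment} with $S_\alpha = e(p_\alpha)$ and $\pi_e = P_{ij} = \Theta(\rho_n)$. Bounding $1 + \pi_e \leq 2$, each term is at most $2^{kt} \rho_n^{|\cup_\alpha e(p_\alpha)|}$.

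Now I re-run the enumeration of Section~\ref{sec:counting-type-b-paths} on linked configurations only. Being linked forces the total edge count $m = |\cup_\alpha e(p_\alpha)|$ to satisfy $m \leq kt - k/2$: since every path must share an edge with some other path, grouping paths by the equivalence relation ``connected via shared edges'' produces at most $k/2$ groups, each of which is strictly smaller in edges than the disjoint-union contribution by at least one edge. I expect the counting in Lemma~\ref{lem:m-rmin-arrangement}, Lemma~\ref{lem:E-m-rmin-asymptotics deepwalk}, and Lemma~\ref{lem:num arrangement deepwalk} to extend, showing that each pair of linked paths contributes at most $U_b^2 / (n \rho_n)^{1 - \phi/t}$ times polylogarithmic factors, so that summing over all linked configurations yields
\begin{align*}
\E (Y_b - \E Y_b)^k \leq U_b^k \cdot \bigl( C \, (n\rho_n)^{-1+\phi/t} \, (\log n)^{O(1)} \bigr)^{k/2}.
\end{align*}
Under hypothesis~\eqref{eq:t-condn}, $(n\rho_n)^{-(1-\phi/t)} \ll (\log n)^{-c_1}$ for some $c_1 = c_1(t, \eta) > 0$, which produces the required estimate $\E (Y_b - \E Y_b)^k \leq U_b^k (\log n)^{-2\eta k}$, completing the proof.

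The main obstacle is the quantitative extension of the path counting to the linked regime, in particular verifying that the ``per pair'' savings factor $(n\rho_n)^{-1+\phi/t}$ composes multiplicatively across the $k/2$ components rather than only once, as in the un-centered setting of Proposition~\ref{prop:A^t kth moment-2}. This requires a careful bookkeeping of how marked edges, backtracks, and cross-path identifications interact, generalizing the $T_r \to T_{r+1}$ inductive split of Lemma~\ref{lem:num arrangement deepwalk}; the constant $(\log n)^{O(1)}$ in the bound above swallows the $O(k^{t+1})$-style overhead which is admissible provided $k = \lceil \log n \rceil$ and $t$ is fixed.
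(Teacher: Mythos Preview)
Your framework matches the paper's proof exactly: Markov's inequality at a high even moment, the observation that only linked tuples survive (any path edge-disjoint from the rest makes the product vanish), Lemma~\ref{lem:expectation centered moment} to bound each centered product, and then re-running the path-counting machinery of Section~\ref{sec:counting-type-b-paths}. The paper takes moment order $2k$ with $k=\lceil\log n\rceil$, which is your $k$ after relabeling, and your edge-count constraint $m\le kt-k/2$ is the paper's $m\le 2kt-k$.

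Where you diverge is the closing step, and this is the gap you yourself flag. The ``per-pair savings'' heuristic---a factor $(n\rho_n)^{-1+\phi/t}$ composing multiplicatively across $k/2$ connected components---is not how the argument goes, and I do not see how to make that decomposition rigorous; the $E_m$ machinery is not organized component-by-component. The paper sidesteps this entirely. Having recorded $m\le 2kt-k$, it simply re-applies the bounds of Lemmas~\ref{lem:m-rmin-arrangement}--\ref{lem:sum over r>rmin deepwalk} \emph{verbatim} to the centered sum, with the sole change that on unmarked edges the weight $w(e)$ is replaced by $1+w(e)\le 2$ (as Lemma~\ref{lem:expectation centered moment} dictates). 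The same recursions hold, the sum is still dominated by the term $E'_{m_0}$ with $m_0=f(r_*(m_0))$, and the linkedness constraint forces $r_*(m_0)\le 2k-k/(t-h)$, so $2k-r_*(m_0)\ge k/(t-h)\asymp\log n$. The ratio is then
\[
\frac{\E(Y_b-\E Y_b)^{2k}}{\delta^{2k}L_b^{2k}}\le C\biggl(\frac{Ck^{1+t-2h}(n\rho_n)^h\,\delta^{-2k/(2k-r_*(m_0))}}{n^{t-1}\rho_n^t}\biggr)^{2k-r_*(m_0)},
\]
and under \eqref{eq:t-condn} with $\delta=(\log n)^{-\eta}$ the base is $o(1)$ while the exponent is of order $\log n$, giving $n^{-c}$ for every $c>0$. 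So no new combinatorics is needed: feeding the single constraint $m\le 2kt-k$ back into the existing $E_m$ estimate is what closes the argument, not a per-component product.
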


\begin{proof}
We will be using notation and terminology from proof of Proposition \ref{prop:A^t kth moment-2}. By Markov's inequality, and using Proposition~\ref{prop:A^t kth moment}, 
\begin{align} \label{eqn:lower tail bound 1}
	\Prob\left(|Y_b - \E Y_b| > \delta \E Y_b\right) 
	&\leq \frac{\E (Y_b - \E 	Y_b)^{2k}}{\delta^{2k} \left(	\E Y_b\right)^{2k}} \leq  \frac{\E (Y_b - \E 	Y_b)^{2k}}{\delta^{2k} (L_b)^{2k}},
\end{align}
and moreover, 
\begin{align*}
	\E (Y_b - \E 	Y_b)^{2k} 
	&= \E \bigg(\sum_{p \in \cP_b} (X_p - \E X_p)	\bigg)^{2k}= \sum_{(p_1, p_2, \ldots, p_{2k}): p_l \in \cP_b} \E \prod_{l=1}^{2k} (X_{p_l} - \E X_{p_l}). \numberthis\label{eqn:lower tail bound 2}
\end{align*}
Fix an ordered sequence $(p_1, p_2, \ldots, p_{2k})$. We can first note that $\E \prod_{l=1}^{2k} (X_{p_l} - \E X_{p_l})$ is equal to $0$ if there is a path $p_l$ which does not have any edges in common with the other $2k-1$ paths. 
Suppose now that each of the paths share at least one edge with some other path. 
Then the minimum number of unmarked edges or repeats of edges is at least $k$. This  minimum $k$ arises from the worst case where we have $k$ pairs of paths with each pair having one edge in common. 
Thus, the number of marked edges $m\leq 2kt-k$.

To bound $\E \prod_{l=1}^{2k} (X_{p_l} - \E X_{p_l})$, we use Lemma~\ref{lem:expectation centered moment}. 
For each marked edge $e$ with endpoints in block $b_i$ and $b_{i'}$ between two nodes blocks $b_i$ and $b_{i'}$, we assign a weight $w(e) = B_{b_i, b_{i'}}$. 
For each unmarked edge $e$, we assign a weight $w(e) = (1+ B_{b_i, b_{i'}})$ instead. 
Then we can see by Lemma~\ref{lem:expectation centered moment}, 
$\E \prod_{l=1}^{2k} (X_{p_l} - \E X_{p_l})$ is bounded by the product of the weights on the edges in the $2k$ paths, i.e., 
\begin{align}
\E \prod_{l=1}^{2k} (X_{p_l} - \E X_{p_l}) \leq \prod_{e: e\text{ is marked}}w(e) \times \prod_{e: e\text{ is unmarked}}(1+w(e)). \label{eqn:lower tail edge weights} 
\end{align}
We note that while bounding $E_m$ in the proof of Proposition~\ref{prop:A^t kth moment-2}, we bounded the probability of each of the marked edges by $w(e)$ as well. 
Let $E'_m$ be the sum of summands in \eqref{eqn:lower tail bound 2} corresponding to paths with $m$ marked edges.
We use the bound $1 + w(e) \leq 2$ for the weight on the unmarked edges and proceed as in the proof of Proposition~\ref{prop:A^t kth moment-2} to have for $m \leq 2kt-k$:
\begin{align*}
   E'_{m} &\leq C' \binom{2k}{\rmin} (2k)^{f(\rmin) - m} 3^{2k-\rmin} C^{f(\rmin) - m}\\
   &\quad \times U_b^{s(m)} n^{m - ts(m) - (\rmin - s(m))} 
   \rho_n^{m- t s(m)}\\
   &\quad \times (2m)^{(2k-\rmin) \cdot (t - 2 h) + 2 (f(\rmin) - m)}.
\end{align*}
Let $m_0$ be such that $m_0 = f(\rmin)$ and define $E'_{m_0}$ with the same expression as above. 
Then bounding as in the proof of Proposition~\ref{prop:A^t kth moment-2} we have from \eqref{eqn:lower tail bound 2}
\begin{align*}
\E (Y_b - \E 	Y_b)^{2k}  \leq C E'_{m_0}.
\end{align*}
Thus, \eqref{eqn:lower tail bound 1} together with the fact that $\frac{U_b}{L_b} = 1 + O(\frac{k}{n})$  yields
\begin{align*}
\Prob\left(|Y_b - \E Y_b| > \delta \E Y_b\right) 
&\leq \frac{\binom{2k}{\rmin} 3^{2k-\rmin}  U_b^{\rmin} (n \rho_n)^{m -t \rmin} (2m)^{(2k-\rmin) \cdot (t - 2 h)}}
{\delta^{2k} L_b^{2k}}\\
&\leq \left(\frac{U_b}{L_b}\right)^{\rmin} 
\cdot \left(
\frac{Ck^{1 + t -2h} (n \rho_n)^h \delta^{-\frac{2k}{2k - \rmin}} }{n^{t-1}\rho_n^t}
\right)^{2k-\rmin},\\
&\leq C \left(
\frac{C(\log n)^{1 + t -2h -2\eta(t-h)} (n \rho_n)^h }{n^{t-1}\rho_n^t}
\right)^{\ceil{\log n} (t-h)^{-1}} \leq n^{-c},
\end{align*}
for any $c > 0$ when \eqref{eq:t-condn} holds.
This completes the proof of Proposition~\ref{prop:Lower tail A^t}. 
			
\end{proof}

\section{Analysis of spectral clustering for DeepWalk}\label{sec:deepwalk}
In this section, we first prove properties of the $M_0$ matrix in Section~\ref{sec:M-0}. In Section~\ref{sec:frob-deepwalk} we bound $\frob{M-M_0}$ and complete the proof of Theorem~\ref{prop:Frob norm tL to tU}. Finally, we bound the number of misclassified nodes in Section~\ref{sec:miss-deepwalk} and complete the proof of Theorem~\ref{thm:DeepWalk misclassified nodes}.

\subsection{Analysis of noiseless \texorpdfstring{$M$}{M}-matrix}\label{sec:M-0}
Recall the definition of the matrix $M_0$ from Section~\ref{sec:spectral-description}. Also recall that $\true$ is the matrix of true community assignments. Let $(\lambda_i,v_i)$, $1 \leq i \leq K$ be the $K$ largest eigenvalue-eigenvector pairs of $M_0$, and let $V_0 =(v_1,\dots,v_K)$. 
We will prove the following collection of claims for $M_0$ and its eigenspace: 
\begin{prop}\label{prop:null-matrix}
\begin{enumerate}[(a)]
    \item \label{prop:null-matrix-1} There exists a full-rank 
    matrix $Z_0 \in \R^{K\times K}$ such that $M_0 = 	\true Z_0 \true^T $. 
     Moreover, $(M_0)_{ij} = \Theta(1)$ uniformly in $i,j\in [n]$.
     
     \item \label{prop:null-matrix-2} We have $\lambda_i = \Theta(n)$ and there exists a full rank matrix $X_0\in \R^{K\times K}$ such that $V_0 = \true X_0 $.
    
    \item \label{prop:null-matrix-3} If $i$ and $j$ are two nodes such that $g(i) \neq g(j)$, then we have
	\begin{align} \label{eq:dist-V-rows}
	    \frob{(V_0)_{i \star} - (V_0)_{j \star}} = \sqrt{\frac{1}{n_{g(i)}} + \frac{1}{n_{g(j)}}}.
	\end{align}  
\end{enumerate}
\end{prop}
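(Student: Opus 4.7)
The strategy is to reduce everything to a $K\times K$ problem via the block structure of the SBM. For part (a), note that $P=\Theta_0 B \Theta_0^T$ and $|P_{i\star}|$ depends only on $g(i)$, so the transition matrix factors as $W_P = \Theta_0 \bar W \Theta_0^T$ where $\bar W_{rs} = \rho_n (B_0)_{rs}/d(s)$ with $d(s) := \rho_n \sum_r n_r (B_0)_{rs}$. Since $\Theta_0^T \Theta_0 = N := \mathrm{diag}(n_1,\dots,n_K)$, iteration yields $W_P^t = \Theta_0 W^{(t)} \Theta_0^T$ with $W^{(t)} := (\bar W N)^{t-1}\bar W \in \R^{K\times K}$. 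Combined with $\Pr(w_1=i)=d(g(i))/|P|$ and reversibility of the simple random walk (which symmetrizes the numerator in Proposition~\ref{lem:M for node2vec}), this makes the argument of the logarithm defining $(M_0)_{ij}$ a function of $(g(i),g(j))$ alone, yielding $M_0 = \Theta_0 Z_0 \Theta_0^T$ for some $Z_0\in\R^{K\times K}$. $Z_0$ is full-rank by Assumption~2.6 together with the full column rank of $\Theta_0$. Order-tracking shows $d(s)=\Theta(\rho_n n)$, $|P|=\Theta(\rho_n n^2)$, so $\pi_r := d(r)/|P| = \Theta(1/n)$ and $\bar W N$ has $\Theta(1)$ entries; hence $W^{(t)}_{rs} = \Theta(1/n)$, making the log-argument uniformly bounded between two positive constants and $(M_0)_{ij}$ uniformly bounded.

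For part (b), symmetry and rank $K$ of $M_0$ force its nonzero-eigenvalue eigenvectors to lie in $\mathrm{Col}(\Theta_0)$. Writing $v=\Theta_0 u$, the equation $M_0 v=\lambda v$ reduces via the full column rank of $\Theta_0$ to $(Z_0 N) u =\lambda u$, so the $K$ nonzero eigenvalues of $M_0$ coincide with those of $Z_0 N$ and are $\Theta(n)$ (bounded entries of $Z_0$ and $N=\Theta(n)$). Symmetry of $M_0$ forces symmetry of $Z_0$ (comparing $\Theta_0 Z_0 \Theta_0^T$ with its transpose and again using full column rank of $\Theta_0$, so $N(Z_0 - Z_0^T)N = 0$), hence $N^{1/2}Z_0 N^{1/2}$ is symmetric; diagonalizing it by an orthogonal matrix $Q$ and setting $[u_1,\dots,u_K] := N^{-1/2}Q$ produces $N$-orthonormal eigenvectors of $Z_0 N$, i.e.\ $u_k^T N u_\ell = \delta_{k\ell}$. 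Then $\hat v_k := \Theta_0 u_k$ are orthonormal eigenvectors of $M_0$, and stacking gives $V_0 = \Theta_0 X_0$ with $X_0 := [u_1,\dots,u_K]\in\R^{K\times K}$ full-rank.

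For part (c), $V_0=\Theta_0 X_0$ gives $(V_0)_{i\star}=(X_0)_{g(i)\star}$, so the rows of $V_0$ take only $K$ distinct values. Orthonormality of $V_0$ reads $I_K = V_0^T V_0 = X_0^T N X_0$; since $X_0$ is square and full-rank, this inverts to $X_0 X_0^T = N^{-1}$, diagonal with entries $1/n_r$. For $g(i)\neq g(j)$,
\[
\frob{(V_0)_{i\star}-(V_0)_{j\star}}^2 = (X_0 X_0^T)_{g(i)g(i)} - 2(X_0 X_0^T)_{g(i)g(j)} + (X_0 X_0^T)_{g(j)g(j)} = \frac{1}{n_{g(i)}}+\frac{1}{n_{g(j)}},
\]
the cross term vanishing by diagonality of $N^{-1}$. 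The main obstacle is the symbolic work in part (a): establishing the block factorization $W_P^t=\Theta_0 W^{(t)}\Theta_0^T$ and using reversibility to simplify the log-argument so that $M_0$ inherits block structure. Once this reduction is in place, parts (b) and (c) are routine $K\times K$ linear algebra driven by the identity $X_0 X_0^T = N^{-1}$.
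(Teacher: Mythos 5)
Your proof is correct and reaches the same conclusions, but it packages the arguments differently in ways worth noting. For part (a) you establish the block structure by the clean algebraic factorization $W_P^t = \Theta_0 (\bar W N)^{t-1}\bar W\,\Theta_0^T$ and then push everything through the formula for $M_0$, whereas the paper unfolds $(P^{\sss(t)})_{ij}$ as an explicit sum over length-$t$ paths and reads off the block dependence from that expansion; your route is shorter and avoids the combinatorial display. You also invoke Assumption~2.6 directly for $\rank(Z_0)=K$, while the paper re-derives full rank through Lemma~\ref{lem:rank M is K}; since Assumption~2.6 is in force throughout, your shortcut is legitimate. For part (b) the paper simply sets $X_0 := Z_0\Theta_0^T V_0\Lambda^{-1}$ and gets the eigenvalue order from the similarity $M_0 = (\Theta_0 D^{-1})(DZ_0D)(\Theta_0 D^{-1})^T$; you instead diagonalize $N^{1/2}Z_0 N^{1/2}$ and construct $V_0$ explicitly. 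Both routes rest on the same heuristic (full-rank $K\times K$ matrix with $\Theta(n)$ entries has $\Theta(n)$ eigenvalues), so your argument is neither more nor less rigorous than the paper's on the lower eigenvalue bound. For part (c) your identity $X_0 X_0^T = N^{-1}$, obtained by inverting $X_0^T N X_0 = I_K$, is a more direct way to the same conclusion the paper reaches via orthogonality of the columns and rows of $DX_0$. One small redundancy: the reversibility observation in (a) is not needed — each of $\Pr(w_1=i,w_{1+t}=j)$ and $\Pr(w_1=j,w_{1+t}=i)$ is already a function of $(g(i),g(j))$ alone once $W_P^t=\Theta_0 W^{(t)}\Theta_0^T$ is in hand, so their sum is too without appealing to detailed balance.
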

\begin{proof}
\emph{Part~\ref{prop:null-matrix-1}.}
Let $P^{\sss (t)}$ be the transition matrix for a simple random walk on a complete graph with edge-weights $P$. 
Also, recall from \eqref{eqn:deepwalk init dist} that the random walks are initialized with distribution  $(|P_{i\star}|/|P|)_{i\in [n]}$. 
By Proposition~\ref{lem:M for node2vec},  we have
\begin{align*}
(M_0)_{ij} = \log \bigg(	\frac{\sum_{t = t_L}^{t_U} (l-t) \cdot (P^{\sss (t)}_{ij} + P^{\sss (t)}_{ji}) }{2b \gamma(l, t_L, t_U) \frac{|P_{i\star}|}{|P|}\times  \frac{|P_{j\star}|}{|P|} }	\bigg).
\end{align*}
So in order to show that $M_0 = \true Z_0 \true^T$, it is enough to show that $(P^{\sss (t)}+(P^{\sss (t)})^T) = \true Z \true^T$ for some matrix $Z$.
Towards this we have
\begin{align}
	(P^{\sss (t)})_{ij} = \frac{1}{\vert P \vert}  \sum_{(i_0, i_1, \ldots, i_t) : i_0 = i, i_t = j} P_{i_0 i_1} P_{i_1 i_2} \cdots P_{i_{t-1} i_t}  \bigg(	\prod_{l=1}^{t-1} |P_{i_l\star}|^{-1} \bigg). \label{eqn:deepwalk P(i,j,t) expansion}
\end{align}
Similarly, we can compute $((P^{\sss (t)})^T)_{ij}$, and it is clear from these expressions that $(P^{\sss (t)}+(P^{\sss (t)})^T)_{ij}$ only depends on $i$ and $j$ through their block labels $g(i)$ and $g(j)$. This shows that $(P^{\sss (t)}+(P^{\sss (t)})^T) = \true Z \true^T$ for some appropriate matrix $Z$, and hence $M_0 = \true Z_0 \true^T$. 
We now show that $Z_0$ has full rank. Since $B$ has rank $K$, $P^{\sss (t)}$ has rank $K$ and has the same block structure as $B$.
We note that $\diag{|P_{1\star}|^{-1},|P_{1\star}|^{-1}, \ldots, |P_{n\star}|^{-1}}$ is an invertible matrix.
This implies that $\bigg(	\frac{\sum_{t = t_L}^{t_U} (l-t) \cdot (P^{\sss (t)}_{ij} + P^{\sss (t)}_{ji}) }{2b \gamma(l, t_L, t_U) \frac{|P_{i\star}|}{|P|}\times  \frac{|P_{j\star}|}{|P|} }	\bigg)$ has rank $K$ and has the same block structure as $B$.
Then by Lemma~\ref{lem:rank M is K}, $Z_0$ has rank $K$ a.s.

Next, we estimate the order of the coefficients $(M_0)_{ij}$.
Note that 
\begin{align}
	\sum_{(i_0, i_1, \ldots, i_t) | i_0 = i, i_t = j} P_{i_0 i_1} P_{i_1 i_2} \cdots P_{i_{t-1} i_t} = \Theta(n^{t-1}\rho_n). \label{eqn:deepwalk P(i,j,t) expansion 2}
\end{align}	
Indeed, the leading contribution is due to paths with $t$ distinct edges and $t-1$ choices of intermediate vertices. 
If we have less distinct vertices among $(i_1,\dots,i_{t-1})$, then the number of choices is at most $O(n^{t-2})$. This proves \eqref{eqn:deepwalk P(i,j,t) expansion 2}. 
Also, $\vert P \vert = \Theta(n^2 \rho_n)$ and $|P_{i_l\star}| = \Theta(n \rho_n)$.
Combining these orders implies that $(M_0)_{ij} = \Theta(1)$. \\

\noindent \emph{Part~\ref{prop:null-matrix-2}.} Note that
\begin{align*}
	\lambda_i v_i  = M_0 v_i =   \true  Z_0 \true^T v_i 
	\quad \implies \quad  v_i = \true \left( \lambda_i^{-1}  Z_0 \true^T v_i \right).
\end{align*}
Taking $X_0 = Z_0 \true^T V_0 \Lambda^{-1}$ with $\Lambda = \diag{\lambda_1,\dots,\lambda_K}$, we get $V_0 = \true X_0$. 
Since $Z_0$ is full rank, $\rank(Z_0\true^T) = K$. Since $\rank(V_0 \Lambda^{-1}) = K$, we have $X_0$ is full rank. 
Next we establish the order of the eigenvalues. For this, let $D = \diag{\sqrt{n_1}, \sqrt{n_2}, \ldots, \sqrt{n_K}}$.  We write
\begin{align*}
	M_0 = \true D^{-1} \left(D Z_0 D \right) D^{-1} \true^T.
\end{align*}
This shows that $M_0$ and $Z_0$ have the same eigenvalues. 
Note that the entries of $(DZ_0D)_{ij} = \sqrt{n_in_j} (Z_0)_{ij}$ and $(Z_{0})_{ij} = O(1)$ as $(M_0)_{ij} = \Theta(1)$. Hence its non-zero eigenvalues of $DZ_0D$, and hence the non-zero eigenvalues of $M_0$ will be of order $\Theta(n)$. 
\\

\noindent \emph{Part (c).} 
We start by noting that $V_0 = \true X_0$,  $V_0$ has $K$ distinct rows, as
$X_0$ has $K$ distinct rows. Thus, $(V_0)_{i\star} = (V_0)_{j\star} $ whenever $g(i) = g(j)$ and $(V_0)_{i\star} \neq (V_0)_{j\star} $ whenever $g(i) \neq g(j)$. 
We now compute the inner products of rows of $X_0$.
For this, we first note that
\begin{align*}
\langle (V_0)_{\star i}, (V_0)_{\star j} \rangle &= \langle (\Theta_0 X_0)_{\star i}, (\Theta_0 X_0)_{\star j} \rangle,\\
&= \sum_r {n_r} ((X_0)_{ri} \cdot (X_0)_{si}),\\
&= \langle D (X_0)_{\star i}, D (X_0)_{\star j} \rangle.
\end{align*}
This shows that $D X_0$ has orthogonal columns and as a consequence, orthogonal rows.
This shows that 
\begin{align*}
  \langle (X_0)_{r \star}, (X_0)_{s \star} \rangle  = 0 \text{ if }r \neq s \quad \text{ and } 
  \quad \langle (X_0)_{r\star} , (X_0)_{s\star} \rangle = \frac{1}{n_r} \text{ if }r = s.
\end{align*}
Thus, we have 
$$\langle (V_0)_{i\star} , (V_0)_{j\star} \rangle = 0 \text{ if }g(i) \neq g(j) \quad \text{ and } \quad \langle (V_0)_{i\star} , (V_0)_{j\star} \rangle = \frac{1}{n_i} \text{ if }g(i) = g(j).  $$
Therefore, \eqref{eq:dist-V-rows} follows immediately. 
\end{proof}
We finish this section with the following perturbation result about the eigenspace of $M$, which is a direct consequence of Davis-Kahan $\sin \theta$ theorem \cite[Theorem 2]{yu2015useful}:

\begin{prop}\label{lem:eigenvector bound}
Let $V$ be the matrix of $K$ largest eigenvectors of $M$. 
There exists an orthonormal matrix $O \in \bbR^{K \times K}$ such that
\[  \frob{V  - V_0 O} \leq \frac{\sqrt{8K} \frob{M - M_0}}{\min_{1\leq r \leq K} \vert \lambda_r \vert}    .\]
\end{prop}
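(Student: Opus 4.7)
The bound is a bookkeeping application of the Davis--Kahan $\sin\Theta$ theorem in the form of Theorem~2 of Yu, Wang and Samworth (already cited in the statement). My plan is to check that the hypotheses of that theorem apply, identify the spectral gap, and read off the inequality.

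First I would note that $M$ and $M_0$ are real symmetric matrices: symmetry of $M_0$, and of the empirical $M$, is built into the definition of the $M$-matrix in Proposition~\ref{lem:M for node2vec}, where the numerator is explicitly symmetrized over forward and backward walks. Next, Proposition~\ref{prop:null-matrix}\ref{prop:null-matrix-2} shows that $M_0$ has rank exactly $K$, so its singular values are $|\lambda_1|\geq \cdots \geq |\lambda_K| > 0 = \sigma_{K+1} = \cdots = \sigma_n$. In particular, the gap between the $K$-th and $(K+1)$-th singular value of $M_0$ is precisely $\min_{1\leq r\leq K}|\lambda_r|$, which is the quantity that appears in the denominator of the claim.

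Now I would apply Theorem~2 of Yu--Wang--Samworth with the target block of indices $\{1,\dots,K\}$ (so their $d$ equals $K$) and with symmetric perturbation $M - M_0$. Using the $d^{1/2}\|\cdot\|_{\mathrm{op}}$ branch of the $\min$ in their bound together with $\|M-M_0\|_{\mathrm{op}}\leq \frob{M-M_0}$, one obtains an orthonormal matrix $O\in \R^{K\times K}$ with
\[
\frob{V - V_0 O}\ \leq\ \frac{2^{3/2}\sqrt{K}\,\frob{M-M_0}}{\min_{1\leq r\leq K}|\lambda_r|}\ =\ \frac{\sqrt{8K}\,\frob{M-M_0}}{\min_{1\leq r\leq K}|\lambda_r|},
\]
which is exactly the stated bound.

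The only genuinely nontrivial point is that, in Algorithm~\ref{algo-k-means}, $V$ is defined as the matrix of top $K$ eigenvectors of $M$ \emph{ordered by the absolute value} of the eigenvalue, while $M_0$ (and hence $M$) may well have eigenvalues of both signs, so signed eigenvalue orderings are not contiguous. I would resolve this by phrasing the statement in terms of singular vectors: for a symmetric matrix, the top $K$ singular vectors coincide (up to sign, which is absorbed into the orthogonal factor $O$) with the eigenvectors corresponding to the $K$ largest-in-absolute-value eigenvalues, and the Davis--Kahan--Wedin bound for singular subspaces is precisely the content of Yu--Wang--Samworth's Theorem~2. Equivalently, one may split $M_0$ into its positive and negative spectral parts, apply Davis--Kahan separately to each (whose spans are mutually orthogonal), and combine via $\frob{\cdot}^2$-additivity. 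Beyond this bookkeeping, there is no substantive obstacle.
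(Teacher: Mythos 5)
Your proposal is correct and follows the paper's own route: the paper offers no argument beyond citing Yu--Wang--Samworth Theorem~2. You go beyond the paper, and rightly so, by noticing that $V$ is built from eigenvectors ordered by \emph{absolute} eigenvalue while the cited Theorem~2 applies to a contiguous block in the \emph{signed} ordering; since nothing guarantees $M_0$ is positive semidefinite (the statement itself uses $\min_r|\lambda_r|$), this is a genuine gap in the paper's one-line justification, and both of your fixes---recasting the problem via singular subspaces, or splitting $M_0$ into its positive and negative spectral parts and summing the resulting Frobenius bounds---close it cleanly while preserving the gap $\min_r|\lambda_r|$ and the $\sqrt{K}$ factor. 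One small citation nit: the singular-subspace (Wedin-type) version in Yu--Wang--Samworth is their Theorem~3, not Theorem~2, though the quantitative form is the same.
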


\subsection{Bound on \texorpdfstring{$\frob{M-M_0}$}{Frobenius norm {M-M0}}.}\label{sec:frob-deepwalk}
In this section, we prove Theorem~\ref{prop:Frob norm tL to tU}. We start by proving it for a simple case $t= \tL = \tU \geq 2.$
\begin{prop}\label{prop:Frob norm t > 2}
The conclusion for Theorem~\ref{prop:Frob norm tL to tU} holds with $t= \tL = \tU \geq 2.$
\end{prop}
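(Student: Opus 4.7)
The plan is to establish a uniform entrywise concentration of $(M)_{ij}$ around $(M_0)_{ij}$ and then control the Frobenius norm by $n$ times the uniform deviation. Using Proposition~\ref{lem:M for node2vec} with $t_L=t_U=t$ (so $\gamma(l,t,t)=l-t$), together with the reversibility identity $d_i(W_A^t)_{ji}=d_j(W_A^t)_{ij}$ of the DeepWalk simple random walk, we can write
\[
(M)_{ij}=\log\!\Bigg(\frac{|A|}{b\,d_i d_j}\sum_{i_1,\ldots,i_{t-1}}\frac{A_{ii_1}A_{i_1i_2}\cdots A_{i_{t-1}j}}{d_{i_1}\cdots d_{i_{t-1}}}\Bigg),
\]
and $(M_0)_{ij}$ is the same expression with $A$ and $d_k$ replaced by $P$ and $D_{P,k}:=|P_{k\star}|$. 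Since $(M_0)_{ij}=\Theta(1)$ uniformly by Proposition~\ref{prop:null-matrix}\ref{prop:null-matrix-1} and $\log$ is Lipschitz on a bounded range, it suffices to show that the argument of the outer $\log$ in $(M)_{ij}$ agrees with its deterministic counterpart up to a multiplicative $1+O((\log n)^{-\eta})$ factor, uniformly over $(i,j)$ on an event of probability $1-o(1)$.

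For $t\geq 3$, Bernstein's inequality and a union bound provide $\max_k|d_k/D_{P,k}-1|=O(\sqrt{\log n/(n\rho_n)})$ and the analogous bound for $||A|/|P|-1|$, both $o((\log n)^{-\eta})$ under the hypothesis $n\rho_n\gg(\log n)^{c_0}$. This lets us replace each $d_\cdot$ in the argument of the log by $D_{P,\cdot}$ at the cost of a $1+o((\log n)^{-\eta})$ factor, and since $D_{P,i_k}$ depends only on $g(i_k)$, the inner sum decomposes as $\sum_b\prod_{k=1}^{t-1}D_{P,b_k}^{-1}\cdot Y_b$ indexed by the $K^{t-1}$ compositions $b$, with $Y_b$ as in \eqref{eq:Y-b-def}. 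Proposition~\ref{prop:Lower tail A^t} then furnishes an $O(n^{-c})$ tail bound (any $c>0$) for $\{|Y_b-\E Y_b|>(\log n)^{-\eta}\E Y_b\}$, and a union bound over the $O(n^2)$ pairs $(i,j)$ and $O(1)$ compositions $b$, combined with Proposition~\ref{prop:A^t kth moment} identifying $\E Y_b$ with the $P$-based counterpart up to $1+o(1)$, yields the required uniform agreement. Hence $\max_{i,j}|(M)_{ij}-(M_0)_{ij}|=O((\log n)^{-\eta})$ with high probability, so $\frob{M-M_0}=O(n(\log n)^{-\eta})$. For $t=2$, Propositions~\ref{prop:A^t kth moment}--\ref{prop:Lower tail A^t} do not apply, but the inner sum reduces to $\sum_k A_{ik}A_{kj}/d_k$; a direct Bernstein estimate for $\sum_k A_{ik}A_{kj}$ (mean $\Theta(n\rho_n^2)$, relative deviation $O(\sqrt{\log n/(n\rho_n^2)})$) plus degree concentration produces the extra $\ind\{t_L=2\}\,n\sqrt{\log n/(n\rho_n^2)}$ contribution in \eqref{frob-norm-bound-original-1}.

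The main technical obstacle is uniform concentration of the $Y_b$ in the backtracking-heavy regime, which is precisely what the Type-I/II segment decomposition of Section~\ref{sec:path counting deepwalk} is designed to handle: without it, backtracking walks (whose contribution can rival the non-backtracking dominant term when $t$ is small) would prevent uniform control. For the lower bound \eqref{frob-norm-bound-original-2}, assume $n^{t-1}\rho_n^t\ll 1$ and $n\rho_n\gg 1$. The expected number of walks of length $t$ between a generic pair $(i,j)$ is $O(n^{t-1}\rho_n^t)=o(1)$, so by linearity and Markov's inequality, with probability $\geq 1-\varepsilon$ there are $\Omega(n^2)$ pairs $(i,j)$ with $(A^t)_{ij}=0$, for each of which $(M)_{ij}=0$ while $(M_0)_{ij}=\Theta(1)$. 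Summing, $\frob{M-M_0}^2=\Omega(n^2)$, giving $\frob{M-M_0}=\Omega(n)$ as required.
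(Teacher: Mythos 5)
Your proposal is correct and follows essentially the same route as the paper: reduce to uniform entrywise concentration of $M_{ij}$ around $(M_0)_{ij}$ via degree concentration (Chernoff/Bernstein) plus the decomposition of $(D_A^{-1}W_A^t)_{ij}$ into the path-composition sums $Y_b$, apply Proposition~\ref{prop:Lower tail A^t} together with the $|\E Y_b - Y_b^*| = O((n\rho_n)^{-1})\E Y_b$ estimate, union bound over $O(n^2)$ pairs, and handle $t=2$ by a direct Bernstein argument; the lower bound is the same Markov-inequality count of pairs with $(A^t)_{ij}=0$. One small point worth tightening: for the lower bound you invoke $(M_0)_{ij}=\Theta(1)$ for every pair to conclude the zeroed entries each contribute $\Omega(1)$ to the Frobenius norm, whereas the paper is slightly more careful — it only guarantees, via $\rank(M_0)=K$, that \emph{some} block $(r,s)$ has $|(M_0)_{ij}| = C_1 > 0$, and then restricts the count to pairs in that block; since individual entries of $M_0$ are logarithms of ratios that could in principle equal one, restricting to a nonzero block is the safer argument.
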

\begin{proof}
Let us first give a proof for $t= \tL = \tU \geq 3$ using the estimates from Section~\ref{sec:path counting deepwalk}. The proof for $t= \tL = \tU = 2 $ is similar and we will give the required modifications at the end. 
Throughout, the constants term $C, C'$ may change from line to line in this proof.
Let $a_n = 4 n (\log n)^{-\eta}$.
Recall the notation $\WA$ and $\WP$ from Section~\ref{sec:SBM}.
Also, recall from Proposition~\ref{lem:M for node2vec} that 
\begin{align}\label{eq:M-0-M-defns-deepwalk}
	M_{ij} = \log \bigg[	\frac{2 \vert A \vert}{b \gamma(l, t_L, t_U)} (l-t) \cdot (\DA^{-1} \WA^t)_{ij}	\bigg] \1_{A^{(t)}_{ij} > 0}, \quad (M_0)_{ij} = \log \bigg[	\frac{2 \vert P \vert}{b \gamma(l, t_L, t_U)} (l-t) \cdot (\DP^{-1} \WP^t)_{ij}	\bigg].
	\end{align}
By Proposition \ref{prop:Lower tail A^t} with $k = \ceil{\log n}$ we have for any $1 \leq i,j, \leq n$
	\begin{align*}
		\Prob\left(	A^{(t)}_{ij} = 0 \right) \leq O(n^{-3}),
	\end{align*}
And this implies that 
	\begin{align*}
		&\Prob\left(	A^{(t)}_{ij}	= 0 \text{ for  some } 1 \leq i,j \leq n	\right) = o(1),\\
		\implies &\Prob\left(	M_{ij}	= 0 \text{ for  some } 1 \leq i,j \leq n	\right) = o(1).\numberthis\label{bound-1-zero-terms}
	\end{align*}
Then we have
\begin{align}\label{eqn: t > 2 overall bound 1}
	\Prob\left(	\frob{M-M_0} \geq a_n	\right) 
	&	\leq o(1) + \Prob\bigg(	\sum_{(i,j)} ( M - M_0)^2_{ij} \geq {a^2_n}		\bigg). 
\end{align}
Next we bound the second term in \eqref{eqn: t > 2 overall bound 1}.
By Chernoff bound, we have that for a sufficiently large constant $C_0>0$, 
\begin{eq} \label{degree-concentration}
\Prob\bigg( \bigg|\frac{\vert A \vert}{\vert P \vert} - 1\bigg| \geq C_0 n^{-1/2} \bigg) 
	&\leq 2\exp \{	- C' n^2 \rho_n\times C_0^2/n  \} =o( n^{-4}),\\
\PR\bigg(\exists i\in [n]: \bigg|\frac{|A_{i\star}|}{|P_{i\star}|} - 1\bigg| >C_0\sqrt{\frac{\log n}{n \rho_n}} \bigg) &\leq 2n\exp(- C' n\rho_n\times C_0^2\log n /n\rho_n)\leq  o(n^{-4}). 
\end{eq}
Using \eqref{degree-concentration}, we simplify the second term in \eqref{eqn: t > 2 overall bound 1} as
\begin{eq} \label{eqn: t > 2 overall bound 2}
	&\Prob\bigg(	\sum_{(i,j)} (M-M_0)^2_{ij} \geq a_n^2 	\bigg) 
	\leq \sum_{(i,j)} \Prob\bigg((M-M_0)^2_{ij} \geq  \frac{a^2_n}{n^2}  \bigg)\\
	&\leq  \sum_{(i,j)}\Prob\bigg( \max \bigg\{ \frac{(\DA^{-1} \WA^t)_{ij}}{(\DP^{-1} \WP^t)_{ij}}, \frac{(\DP^{-1} \WP^t)_{ij}}{(\DA^{-1} \WA^t)_{ij}}\bigg\} \geq (1+O(n^{-1/2}))\exp\Big(\frac{a_n}{n}\Big)  \bigg) + o(n^{-4}).
\end{eq}
To analyze this, recall  from \eqref{eqn:P_b def} $\cP_b$ is the set of paths with vertices having community assignment $b$ for $b\in \cB_{i,j}$. For $p = (i_0,\dots,i_t)\in \cP_b$, let 
\begin{gather*}
    \bar{X}_p = \frac{1}{|A_{i_0\star}|}\prod_{l=1}^t \frac{A_{i_{l-1}i_l}}{|A_{i_l\star}|},\quad \text{and} \quad  \bar{Y}_b = \sum_{p\in \cP_b} \bar{X}_p,\\ \bar{X}_{p}^* = \frac{1}{|P_{i_0\star}|}\prod_{l=1}^t \frac{P_{i_{l-1}i_l}}{|P_{i_l\star}|}, \quad \text{and} \quad \bar{Y}_{b}^* = \sum_{p\in \cP_b} \bar{X}_{p}^*.
\end{gather*}
Then we have $(\DA^{-1} \WA^t)_{ij} = \sum_{b \in \cB_{i,j}} \bar{Y}_b$ and $(\DP^{-1} \WP^t)_{ij} = \sum_{b \in \cB_{i,j}} \bar{Y}_{b}^*$. 
Now, for $b \in \cB_{i,j}$, $\Prob(\bar{Y}_b = 0) = o(n^{-4})$ by Proposition \ref{prop:Lower tail A^t}, 
and on the set $\{\bar{Y}_b \neq 0\}$, we have 
\begin{align*}
	\frac{(\DP^{-1} \WP^t)_{ij}}{(\DA^{-1} \WA^t)_{ij}} \leq \sum_{b \in \cB_{i,j}} \frac{\bar{Y}_b^*}{\bar{Y}_b},
	\qquad &\frac{(\DA^{-1} \WA^t)_{ij}}{(\DP^{-1} \WP^t)_{ij}} \leq \sum_{b \in \cB_{i,j}}\frac{\bar{Y}_b}{\bar{Y}_b^*}.
\end{align*}
Thus, in order to bound \ref{eqn: t > 2 overall bound 2}, it is enough to bound the probabilities for $\bar{Y}_b^*/\bar{Y}_b$ or $\bar{Y}_b/\bar{Y}_b^*$ being large. 
Since the row sums of $A$ are concentrated by \eqref{degree-concentration}, we will bound $Y_b^*/Y_b$, $Y_b/Y_b^*$ instead, where $Y_b^*$ is as defined below: 
\begin{gather*}
    {X}_{p}^* = \prod_{l=1}^t {P_{i_{l-1}i_l}}, \quad \text{and} \quad {Y}_{b}^* = \sum_{p\in \cP_b} {X}_{p}^*.
\end{gather*}
Fix $(i,j)$. 
We estimate the difference 
\begin{align}\label{eqn:deepwalk E X'_b - E'_b}
	\left| \E Y_b - Y_b^* \right| &= \bigg| \E \sum_{p \in \cP_b} (A_{i_0 i_1} \cdots A_{i_{t-1} i_t} -  P_{i_0 i_1} \cdots P_{i_{t-1} i_t} )\bigg|.
\end{align}
The summands in the equation above are equal to zero if the associated path $(i_0, i_1, \ldots, i_t)$ has $t$ distinct edges and there are no self-loops. 
Consider the first set of summands in \eqref{eqn:deepwalk E X'_b - E'_b}. By Proposition \ref{prop:A^t kth moment-2} the sum over summands with less than $t$ distinct edges is $O(1/n\rho_n) \E[Y_b]$.
We now give an upper bound on the summands $P_{i_0 i_1} \cdots P_{i_{t-1} i_t}$ as follows.
Suppose a path has less than $t$ distinct edges. If the path is a Type I path then by Lemma~\ref{lem:choice-vertices} the number of choices of distinct vertices along the path is less than $t-1$.
If the the path is a Type II path, then again by Lemma~\ref{lem:choice-vertices} the number of choices of distinct vertices along the path is at most $\floor{t/2}$. 
Finally, if there are self-loops then the number of choices of vertices is less than $t-1$.
This implies that the upper bound on the second set of summands is $O(1/n) \E[Y_b]$.
Thus in summary we have
\begin{align*}
	\left| \E Y_b - Y_b^* \right| = O\bigg(\frac{1}{n\rho_n}\bigg) \E Y_b .
	\numberthis \label{eqn:deepwalk bd EX'b E'b 1}
\end{align*}
These computations show that 
\begin{gather*}
\frac{Y_b}{Y_b^*}= \frac{Y_b}{\E Y_b \left( 1 + O((n \rho_n)^{-1}) \right)}, 
\quad \text{and} \quad
\frac{Y_b^*}{Y_b}= \frac{\E Y_b \left( 1 + O((n \rho_n)^{-1}) \right)}{Y_b}.
\numberthis \label{eqn: Y_b Y_b* bound deepwalk}
\end{gather*} 
Recall that  $a_n = 4 n (\log n)^{-\eta}$. To compute \eqref{eqn: t > 2 overall bound 2}, we now use Proposition \ref{prop:Lower tail A^t}, \eqref{degree-concentration} and \eqref{eqn: Y_b Y_b* bound deepwalk} to obtain 
\begin{align*}
	\Prob\bigg( \frac{(\DP^{-1} \WP^t)_{ij}}{(\DA^{-1} \WA^t)_{ij}} \geq (1+O(n^{-1/2}))\exp \Big(  \frac{a_n}{n}  \Big)  \bigg)
	&\leq \sum_{b \in \cB_{i,j}}\Prob\bigg(	\frac{\bar{Y}_b^*}{\bar{Y}_b} \geq \exp\{C (\log n)^{-\eta}\}\bigg) + o(n^{-3})\\
	&\leq \sum_{b \in \cB_{i,j}}\Prob\bigg(	\frac{Y_b^*}{Y_b} \geq 1 + (\log n)^{-\eta}\bigg) + o(n^{-3})
	= o(n^{-3}).
\end{align*}
A similar bound can be computed with  $\frac{(\DA^{-1} \WA^t)_{ij}}{(\DP^{-1} \WP^t)_{ij}}$ as well repeating the same computations. 
Hence we have established that the term in \eqref{eqn: t > 2 overall bound 2} is at most $o(n^{-3})$, and thus combining \eqref{eqn: t > 2 overall bound 1} and \eqref{bound-1-zero-terms}, we conclude that $\frob{M-M_0} = \OP(a_n)$, and thus \eqref{frob-norm-bound-original-1} follows for $\tL = \tU = t\geq 3$. 
For $\tU = t = 2$, the argument is exactly similar except that we use \cite[Theorem 2.8]{JLR00} for showing \eqref{bound-1-zero-terms}, 
we use Bernstein's inequality in place of the concentration inequality in Proposition \ref{prop:Lower tail A^t}, 
we don't need \eqref{eqn:deepwalk bd EX'b E'b 1} for $i \neq j$ case, and for the $i=j$ case we  use the bound $\frac{\E Y_b}{Y_b^{*}} = O(\rho_n^{-1})$.

Next, we prove \eqref{frob-norm-bound-original-2} in the case $\tL=\tU = t\geq 2$. Let $n^{t-1}\rho_n^t \ll 1$.
We will show that, for any $\varepsilon>0$,
\begin{align*}
	\Prob\big(\frob{M - M_0} \geq C_\varepsilon n^2 \big) \geq 
	 \Prob\bigg(	\sum_{(i,j)}	(M_0)^2_{ij} \ind\{ M_{ij} = 0\} \geq  C_\varepsilon n^2	\bigg) \geq 1 - \varepsilon,
\end{align*}
for some constant $C_\varepsilon>0$ depending on $\varepsilon$ and the last inequality holds for large enough $n$.

By Proposition~\ref{prop:null-matrix}, the entries of $(M_0)_{ij}$'s are constant over all $i,j$ pairs such that  $g(i) = r, g(j) = s$. 
Also, $M_0$ will have some non-zero entries since $\rank(M_0) = K$.
Let $r$ and $s$ be such that $g(i) =r$ and $g(j) =s$ and $\vert (M_0)_{ij} \vert = C_1 >0$ for all $i,j$ such that  $g(i) =r$ and $g(j) =s$, and the number of such pairs of $i,j$ is at least $C_2 n^2$ for some $0 < C_2 < 1$. 
Let 
\begin{align*}
	S_{r,s} := \{ (i,j): A^{t}_{ij} = 0, g_i =r, g_j =s		\}, \qquad 
	T_{r,s} := \{ (i,j): A^{t}_{ij} > 0, g_i =r, g_j =s		\}.
\end{align*}
Then
\begin{align*}
	\Prob\bigg(	\sum_{(i,j)}	(M_0)^2_{ij} \ind\{ M_{ij} = 0\}  \geq  C_\varepsilon n^2	\bigg)
	&\geq \Prob \bigg(\sum_{(i,j) \in S_{r,s}} C_1^2 \geq C_\varepsilon n^2 \bigg)= \Prob\bigg(\vert S_{r,s} \vert \geq \frac{C_\varepsilon n^2}{C_1^2}	\bigg).
\end{align*}
Next let $i$ and $j$ be any two nodes such that $g_i = r$, $g_j = s$, and $i \neq j$. Then by Proposition~\ref{prop:A^t kth moment}
we have
\begin{align*}
	\Prob\big(	A^{t}_{ij} > 0\big) \leq \sum_{b \in \cB_{i,j}} \Prob\left(Y_b >0\right) \leq C_3 n^{t-1} \rho_n^t,
\end{align*}
for some constant $C_3 > 0$. This along with Markov inequality implies that
\begin{align*}
	\Prob\left(	\vert T_{r,s} \vert \geq C_3  \varepsilon^{-1} n^{t+1} \rho_n^t	\right) 
	&\leq \frac{n^2 \cdot C_3 n^{t-1} \rho_n^t}{C_3  \varepsilon^{-1} n^{t+1} \rho_n^t} 
	\leq \varepsilon.
\end{align*}
This shows that for large enough $n$ we have
\begin{align*}
	\Prob \left(    \vert S_{r,s} \vert	\geq \vert S_{r,s} \vert +  \vert T_{r,s} \vert - C_3  \varepsilon^{-1} n^{t+1} \rho_n^t  \right) \geq 1 - \varepsilon 
	\implies \Prob \left(    \vert S_{r,s} \vert	\geq C_2 n^2 - C_3  \varepsilon^{-1} n^{t+1} \rho_n^t  \right) \geq 1 - \varepsilon.
\end{align*}
Taking $C_\varepsilon = C_1^2 C_2/2$ and noting that $n^{t+1} \rho_n^t = o(n^2)$ completes the proof.
\end{proof}	
Next we complete the proof of Theorem~\ref{prop:Frob norm tL to tU} for general $\tL,\tU$.
\begin{proof}[Proof of Theorem~\ref{prop:Frob norm tL to tU}] The general idea is to reduce the computations to the analogous computations for the $\tL = \tU$ case.  
If $t_L =2$ and $i \neq j$ then by \cite[Theorem 2.8]{JLR00}, 
			\[ \Prob\left(A^{(2)}_{ij} = 0\right) \leq \exp\{-\theta(n\rho_n^2)	\}.\]
			Similarly if $i = j$ (and $t_L =2$) we have 
			\[ \Prob\left(A^{(2)}_{ii} = 0\right) \leq \exp\{-\theta(n\rho_n)	\}.\]
			By the assumption of $n^{t_L-1}\rho_n^{t_L} \gg (\log n)$ when $t_L =2$, we have
			\begin{align*}
				\Prob\left(	A^{(2)}_{ij}	= 0 \text{ for  some } 1 \leq i,j \leq n	\right) = o(1).
			\end{align*}
			Next let $\max(3, t_L) \leq  t \leq t_U$. Then by Proposition \ref{prop:Lower tail A^t} with $k = \ceil{\log n}$ we have for any $1 \leq i,j, \leq n$
			\begin{align*}
				\Prob\left(	A^{(t)}_{ij} = 0 \right) \leq O(n^{-3}).
			\end{align*}
			And this implies that 
			\begin{align*}
				\Prob\left(	A^{(t)}_{ij}	= 0 \text{ for  some } 1 \leq i,j \leq n \text{ and for some } t \in \{t_L, t_L+1, \ldots, t_U\}	\right) = o(1).
			\end{align*}
Then proceeding as in proof of Proposition \ref{prop:Frob norm t > 2} we have 
\begin{align*}
	&\Prob\left(\frob{M_0 - M} \geq a_n\right)\\
	&\leq \Prob\left(\frob{M_0 - M} \geq a_n, A^{t}_{ij} > 0 \text{ for }1\leq i,j \leq n, t_L \leq t \leq t_U \right) + o(1),\\
	&\leq o(1) + \sum_{1 \leq i \neq j \leq n} \Prob\left(\frac{\sum_{t=t_L}^{t_U}(l-t) \left(	\DP^{-1} \WP^t\right)_{ij}}
	{\sum_{t=t_L}^{t_U}(l-t) \left(	\DA^{-1} \hat{\WP}^t\right)_{ij}}	 
	\geq \exp\left\{	\frac{a_n}{\sqrt{2}n} - \theta(n^{-1/2})	\right\} \right)\\
	&\quad + \sum_{1 \leq i \leq n} \Prob\left(\frac{\sum_{t=t_L}^{t_U}(l-t) \left(	\DP^{-1} \WP^t\right)_{ii}}
	{\sum_{t=t_L}^{t_U}(l-t) \left(	\DA^{-1} \hat{\WP}^t\right)_{ii}}	 
	\geq \exp\left\{	\frac{a_n}{\sqrt{2n}} - \theta(n^{-1/2})	\right\}\right)\\
	&\quad + \sum_{1 \leq i \neq j \leq n} \Prob\left(\frac{\sum_{t=t_L}^{t_U}(l-t) \left(	\DA^{-1} \hat{\WP}^t\right)_{ij}}
	{\sum_{t=t_L}^{t_U}(l-t) \left(	\DP^{-1} \WP^t\right)_{ij}}	 
	\geq \exp\left\{	\frac{a_n}{\sqrt{2}n} - \theta(n^{-1/2})	\right\} \right)\\
	&\quad + \sum_{1 \leq i \leq n} \Prob\left(\frac{\sum_{t=t_L}^{t_U}(l-t) \left(	\DA^{-1} \hat{\WP}^t\right)_{ii}}
	{\sum_{t=t_L}^{t_U}(l-t) \left(	\DP^{-1} \WP^t\right)_{ii}}	 
	\geq \exp\left\{	\frac{a_n}{\sqrt{2n}} - \theta(n^{-1/2})	\right\}\right).
	\numberthis \label{eqn:Frob norm t_L to t_U overall bound}
\end{align*}
			We show how to bound the second term in the equation \ref{eqn:Frob norm t_L to t_U overall bound}. For this we see that
			\begin{align*}
				&\Prob\left(\frac{\sum_{t=t_L}^{t_U}(l-t) \left(	\DP^{-1} \WP^t\right)_{ij}}
				{\sum_{t=t_L}^{t_U}(l-t) \left(	\DA^{-1} \hat{\WP}^t\right)_{ij}}	 
				\geq \exp\left\{	\frac{a_n}{\sqrt{2}n} - \theta(n^{-1/2})	\right\} \right)\\ 
				&\leq \sum_{t = t_L}^{t_U} \Prob\left(\frac{ \left(	\DP^{-1} \WP^t\right)_{ij}}
				{ \left(	\DA^{-1} \hat{\WP}^t\right)_{ij}}	 
				\geq \exp\left\{	\frac{a_n}{\sqrt{2}n} - \theta(n^{-1/2})	\right\} \right).
			\end{align*}
			Then each of the probabilities for fixed $t$ can be bounded as in the proof of Propositions \ref{prop:Frob norm t > 2}. Analogously the rest of the terms in \eqref{eqn:Frob norm t_L to t_U overall bound} can be bounded. This completes the proof.
\end{proof}

\subsection{Bounding the number of missclassified nodes}\label{sec:miss-deepwalk}

\begin{proof}[Proof of Theorem \ref{thm:DeepWalk misclassified nodes}]
This proof uses standard arguments to bound the proportion of misclassified nodes such as given in \cite{lei2015consistency}.
For this proof, we choose $O \in \bbR^{K \times K}$ obtained by an application of Proposition~\ref{lem:eigenvector bound} which satisfies
\begin{align}
	\frob{V - V_0 O} \leq \frac{\frob{M - M_0}}{Cn} = \SOP(1), \label{eqn:thm1 Davis Kahan}
\end{align}
where the last step follows using Theorem~\ref{prop:Frob norm tL to tU}.
To simplify notation, we denote $U =V_0O$ in the rest of the proof.
Let $n_{\max} = \max_{r \in [K]} n_r$. 
Recall $(\bar{\Theta},\bar{X})$ from \eqref{eqn:kmeans2} and let $\bar{V} = \bar{\Theta}\bar{X}$.
Then let 
\begin{align*}
	S = \bigg\{	i \in [n] : \frob{\bar{V}_{i \star}  - U_{i \star}} \geq \frac{1}{5} \sqrt{\frac{2}{n_{\max}}}	\bigg\}.
\end{align*}
We will show that for $i \notin S$ the community is predicted correctly using $\bar{\Theta}$.
The proof of the theorem is in two steps.
\paragraph*{Step 1: Bounding $\vert S \vert$.}
By the definition of $S$ we have
\begin{align*}
	\sum_{i \in S} \sqrt{\frac{2}{n_{\max}}} \leq \sum_{i \in S} 5 \frob{\bar{V}_{i \star}  - U_{i \star}}
	\leq 5 \frob{\bar{V}  - U}
	\implies  \vert S \vert \leq  \frac{5}{\sqrt{2}} \sqrt{n_{\max}}  \frob{\bar{V} - U}.
	\numberthis \label{eqn:|S| bound 1}
\end{align*}
Next, we recall the optimization problem in \eqref{eqn:kmeans2} is given as follows.
\begin{align*}
	\frob{ \bar{\Theta} \bar{X} - V }^2 \leq (1+\varepsilon)  \min_{\Theta \in \{0,1\}^{n\times K}, X \in \bbR^{K \times K}} \frob{\Theta X - V }^2. 
\end{align*}
We substitute $U$ for $\Theta X$ to get the following upper bound:
\begin{align}
	\frob{\bar{V} - V}^2 \leq (1 + \varepsilon) \frob{U- V}^2. \label{eqn:Vbar - V bound}
\end{align}
Then by \eqref{eqn:thm1 Davis Kahan} and \eqref{eqn:Vbar - V bound} we have 
\begin{align*}
	\frob{\bar{V} - U} &\leq \frob{\bar{V} - V} + \frob{V- U}\leq (1 + \sqrt{1+\varepsilon}) \frob{V- U} = \SOP(1).
\end{align*}
This combined with equation \ref{eqn:|S| bound 1} we have $\vert S \vert = \SOP(\sqrt{n})$.
			
\paragraph*{Step 2:  Bounding the prediction error.}
For any community $r$, there exists $i_r \in [n]$ such that $g_{i_r} =r$ and $i_r \notin S$ as $n_r = \theta(n)$ and $\vert S \vert = \SOP(\sqrt{n})$.
By Proposition \ref{prop:null-matrix}\ref{prop:null-matrix-3}, for $r\neq s$ we have
\begin{align*}
	\frob{\bar{V}_{i_r \star} - \bar{V}_{i_s \star}} &\geq \frob{ U_{i_r \star} - U_{i_s \star} } - \frob{\bar{V}_{i_r \star} - U_{i_r \star}}	- \frob{\bar{V}_{i_s \star} - U_{i_s \star}}\\
	&\geq \sqrt{\frac{1}{n_{r}} + \frac{1}{n_{s}}} - \frac{2}{5}  \sqrt{\frac{2}{n_{\max}}}\geq \frac{3}{5} \sqrt{\frac{2}{n_{\max}}}. \numberthis\label{eqn:step 2 bd 1}
\end{align*}
Next, let $i$ be such that $g_i = r$ and $i \notin S$. We show that $\bar{V}_{i \star} = \bar{V}_{i_r \star}$. 
For this we note that $U_{i \star} = U_{i_r \star}$ and
\begin{align*}
	\frob{\bar{V}_{i \star} - \bar{V}_{i_r \star}} &\leq \frob{\bar{V}_{i \star} - U_{i \star}} +
	\frob{U_{i_r \star} - \bar{V}_{i_r \star}}\\
	&< \frac{1}{5} \sqrt{\frac{2}{n_{\max}}} + \frac{1}{5} \sqrt{\frac{2}{n_{\max}}}  < \frac{2}{5} \sqrt{\frac{2}{n_{\max}}}.  \numberthis\label{eqn:step 2 bd 2}
\end{align*}
In view of \eqref{eqn:step 2 bd 1} and \eqref{eqn:step 2 bd 2} we must have $\bar{V}_{i \star} = \bar{V}_{i_r \star}$ as each node is assigned exactly one community by the $(1+\varepsilon)$-approximate $k$-means algorithm and there are exactly distinct $K$ rows in $\bar{V}$ (again by \eqref{eqn:step 2 bd 1}).
Let $C$ be a permutation matrix defined so that $\true C$ assigns community $r$ to node $i_r$, for $1 \leq r \leq K$.
Then we have, 
\begin{align*}
	\sum_i \ind\{	\bar{\Theta}_{i\star} \neq (\true C)_{i \star}	\} \leq \vert S \vert.
\end{align*}
From the bound on $\vert S \vert$ from Step 1, we have that $\err(\bar{\Theta},\true)$ is $\SOP\left(1/\sqrt{n}\right)$ and this completes the proof of Theorem~\ref{thm:DeepWalk misclassified nodes}.
\end{proof}

		\section{Path counting for node2vec}\label{sec:path counting node2vec}
		
		In this section, we focus on computing the asymptotics for the sum of weighted paths having  some specified community assignments for the intermediate vertices. In section~\ref{sec:counting-weighted-paths}, we bound its $k$-th moment and we end with a concentration inequality in section~\ref{sec:concent-node2vec}.  
		
		\subsection{Bounding moments of path counts}\label{sec:counting-weighted-paths}
		
		We compute upper bounds for the $k$-th moment and $k$-th centered moments for weighted paths between two nodes. 
		We first fix some notation.
		Let $\cB_{ij}$ and $\cP_b$ be defined as in \eqref{eqn:path types set def} and \eqref{eqn:P_b def}.

		For any path $p = (i_0, i_1, i_2, \ldots, i_t)  \in \cP_b$, let
		\begin{align*}
			\mathfrak{N}((i_0, i_1, \ldots, i_t)) := \left\{ l | 2 \leq l \leq t, i_{l - 2} = i_l 	\right\},
		\end{align*}
		be the set of locations of backtracks in the path. Let 
		\begin{align*}
		\mathfrak{n} = \mathfrak{n}((i_0, i_1, \ldots, i_t)) = \vert \mathfrak{N}((i_0, i_1, \ldots, i_t)) \vert,
		\end{align*}
		be the number of backtracks in $p$.
		For any path $p = (i_0, i_1, i_2, \ldots, i_t)  \in \cP_b$ and $\alpha > 0$, we associate the random variable 
		\begin{align*}
			X_{p,\alpha} := A_{i_0 i_1} A_{i_1 i_2} \cdots A_{i_{t-1} i_t}
			\alpha^{\mathfrak{n}((i_0, i_1, \ldots, i_t))}, \numberthis\label{eqn:node2vec X-p-def}
		\end{align*}
		and let 
		\begin{align*}
			Y_{b,\alpha} : =  \sum_{p \in \cP_b} X_{p,\alpha}. \numberthis\label{eqn:node2vec Y_b beta =1 }
		\end{align*}
		We note that when $\alpha =1$, $X_{p,1} = X_p$ and $Y_{b,1} = Y_b$ where $X_p$ and $Y_b$ are as defined in \eqref{eqn:X-p-def} and \eqref{eq:Y-b-def} respectively.
		To simplify notation in this section, we will drop the subscript $\alpha$ and simply write $X_p$ and $Y_b$ in place of $X_{p,\alpha}$ and $Y_{b,\alpha}$ respectively.
		Let $U_b$ and $L_b$ be the upper and lower bounds for path type $b \in \cB_{i,j}$ as defined in \eqref{eqn:U-b-def} and \eqref{eqn:L-b-def} respectively.
		Then we have the following bounds on $\E Y_b^k$
		
		\begin{prop}\label{prop:node2vec kth moment}
		Let $\tL = \tU = t \geq 3$ be given and suppose that $\alpha = O\left(\frac{1}{n}\right)$ and \eqref{eq:t-condn-2} holds.
		Then we have 
		\begin{align*}
				L_{b}^k \leq \E Y_b^k \leq U_{b}^k (1 +  o(1) ).
			\end{align*} 
		\end{prop}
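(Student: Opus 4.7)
The plan is to parallel the DeepWalk proof of Proposition~\ref{prop:A^t kth moment} closely, while exploiting the extra weight $\alpha^{\mathfrak{n}}$ to suppress contributions from backtracking configurations. As in the DeepWalk case, expand
\begin{align*}
\E Y_b^k \;=\; \sum_{(p_1,\dots,p_k)\,:\,p_\beta\in\cP_b} \alpha^{\sum_\beta \mathfrak{n}(p_\beta)}\, \Prob\bigl(X_{p_1}\cdots X_{p_k}=1\bigr),
\end{align*}
and classify the summands by $m:=|\cup_\beta e(p_\beta)|$. The reduction is to prove the analog of Proposition~\ref{prop:A^t kth moment-2}, namely $\sum_{m<kt} E_m = o(U_b^k)$, where $E_m$ is now the weighted sum including the $\alpha^{\mathfrak{n}}$ factor.

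For the lower bound I would simply restrict to $k$-tuples $(p_1,\dots,p_k)$ whose edges are all distinct (so $m=kt$). Such paths necessarily have $\mathfrak{n}(p_\beta)=0$, hence the $\alpha$-weight contributes exactly $1$, and the DeepWalk lower-bound argument (counting the number of marked-vertex choices with $k(t-1)+1$ vertices excluded) produces $L_b^k$ unchanged.

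For the upper bound I would reuse the Type I/Type II segment decomposition from Section~\ref{sec:counting-type-b-paths}, together with Lemmas~\ref{lem:choice-vertices},~\ref{lem:m-rmin-arrangement},~\ref{lem:num arrangement deepwalk}, all of which are purely combinatorial and carry over verbatim. The new ingredient is the following weight bookkeeping. By Definition~\ref{defn:segment}, every marked edge in a Type~II segment is paired with at least one backtracking edge, so a Type~II segment with $r$ marked edges contributes at least $r$ backtracks and hence a factor $\alpha^r$. Combined with Lemma~\ref{lem:choice-vertices}, which allows up to $n^r$ marked vertices for such a segment, the Type~II contribution becomes $\alpha^r n^r \rho_n^r = O(\rho_n^r)$ when $\alpha=O(1/n)$, in place of the DeepWalk bound $O((n\rho_n)^r)$. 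This cancellation is precisely what makes the ``maximal Type~II'' branch (which produced the $\phi=\lfloor t/2\rfloor$ loss in Theorem~\ref{prop:Frob norm tL to tU}) harmless for node2vec, and this is why the hypothesis \eqref{eq:t-condn-2} no longer needs the $\phi$ correction.

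Plugging the modified weights into the proof of Lemma~\ref{lem:E-m-rmin-asymptotics deepwalk}, one multiplies the DeepWalk bound on $E_{m,\rmin}$ by $\alpha^{\sum (\text{Type II marked edges})}\le \alpha^{(k-\rmin)h}$; because $\alpha n=O(1)$ the previously dominant $(n\rho_n)^h$ factor collapses to $O(\rho_n^h)$, and every non-saturated or extra-backtrack configuration is similarly suppressed by at least one additional $\alpha$. Summing then yields $\sum_{m<kt} E_{m,\rmin} = o(U_b^k)$ under \eqref{eq:t-condn-2}, and the argument of Lemma~\ref{lem:sum over r>rmin deepwalk} (whose geometric series is controlled by the same $1/n$ factor as before) extends the bound to $r>\rmin$. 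The main obstacle I anticipate is the careful accounting of backtracks that are \emph{not} of the obligatory Type~II kind, e.g., backtracks of unmarked edges noted in Definition~\ref{defn:segment} and optional backtracks inside Type~I segments; one has to make sure these extra $\alpha$-factors are never double-counted when applying Lemma~\ref{lem:expectation centered moment}-style bounds, and that the residual combinatorial factors from Lemma~\ref{lem:num arrangement deepwalk} (the $(Ck)^{(r-\rmin)(t+1)}$ term) are absorbed by the $\alpha$-savings rather than by a spurious $n\rho_n$ factor.
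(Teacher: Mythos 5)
Your core observation is exactly the one the paper uses: a Type~II segment with $m'$ marked edges carries at least $m'$ backtracks, so the factor $n^{m'}\rho_n^{m'}$ from marked-vertex choices and edge probabilities is multiplied by $\alpha^{m'}$, and with $\alpha = O(1/n)$ this collapses to $O(\rho_n^{m'})$. The lower-bound reduction to $kt$ distinct edges (no backtracks, so the $\alpha$-weight is trivially $1$) is also exactly what the paper does. So the proposal captures the new phenomenon.

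However, there is a genuine gap in the upper-bound bookkeeping, and it is precisely the point where the paper deviates from the DeepWalk argument rather than reusing it. You keep the DeepWalk decomposition in which $r$ counts \emph{Type~I} segments and $\rmin$ is the DeepWalk quantity, and then assert that Lemma~\ref{lem:sum over r>rmin deepwalk} extends with ``the same $1/n$ factor.'' That is false once the $\alpha$-weights are in play. In DeepWalk, increasing the number of Type~I segments trades a factor $(n\rho_n)^{m'}$ from a Type~II block for $(n\rho_n)^{m'}/n$ from a Type~I block, so $E_{m,r+1}/E_{m,r} = O(\mathrm{poly}(k)/n)$ and the series over $r$ decays geometrically. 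With $\alpha = O(1/n)$, a Type~II block only contributes $O(\rho_n^{m'})$ while a Type~I block still contributes $n^{m'-1}\rho_n^{m'}$, which is \emph{larger} for $m'\ge 2$. Concretely, for $k=2,\ t=4,\ h=2,\ m=6$, the DeepWalk minimizer is one maximal Type~I path plus one maximal Type~II path, giving $n^3\rho_n^4\cdot(n\rho_n\alpha)^2 \asymp n^3\rho_n^6$, whereas two Type~I paths with three marked edges each give $\asymp n^4\rho_n^6$ with no $\alpha$-suppression at all. So $E_{m,r}$ \emph{increases} in $r$ by a factor of roughly $n$, and the geometric-series argument of Lemma~\ref{lem:sum over r>rmin deepwalk} runs the wrong way.

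The paper resolves this by changing what $r$ indexes: for node2vec, $r$ counts \emph{all} segments (Type~I and Type~II together), so that $\rmin = \lceil m/t\rceil$ is attained by packing marked edges into as few saturated Type~I paths as possible (Lemma~\ref{lem:m-rmin-arrangement node2vec}), and $E_{m,\rmin}$ captures the genuinely dominant (Type~I--heavy) configuration. With this reindexing, adding a segment (Lemma~\ref{lem:num arrangement node2vec}) again costs a factor $O(k^{t+1}/n)$ and the analog of the geometric series (Lemma~\ref{lem:sum over r>rmin node2vec}) decays as desired. Your bound on $E_{m,\rmin}$ in the DeepWalk index is a \emph{valid lower bound on a single term}, but it is not the dominant one, and summing over $r$ from there does not control $E_m$. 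You would need to either re-derive the decomposition in the paper's form, or directly bound $\max_r E_{m,r}$ by observing that the maximum now sits at large $r$ where the Type~II $\alpha$-savings are irrelevant, and then control the combinatorial multiplicity separately — but as written the proposal would produce an incorrect conclusion at the step ``the geometric series is controlled by the same $1/n$ factor as before.''
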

		
		Again the idea, as in section \ref{sec:path counting deepwalk}, is to show that the leading term for $\E Y_b^k$ is due to $\E\big(	\prod_{\pi=1}^{k}	X_{p_{\pi}}	\big)$ of $k$ ordered paths $p_{\pi}$ having $kt$ distinct edges between them. 
		The contribution of the rest of the terms are of a smaller order.
		Similar to section \ref{sec:path counting deepwalk}, let
		\begin{equation}\label{E-m-defn node2vec}
		    E_m =  E_{m,\alpha} := \sum_{(p_1, p_2, \ldots, p_k): p_{\pi} \in \cP_b, |\cup_{i\in [k]} e(p_{\pi})| = m } \E (X_{p_1} X_{p_2} \cdots X_{p_k}).
		\end{equation}
		We will show the following: 
		\begin{prop}\label{prop:node2vec kth moment-2}
		Under identical conditions as in Proposition~\ref{prop:node2vec kth moment}, we have $\sum_{m<kt} E_m = o(U_b^k)$.  
		\end{prop}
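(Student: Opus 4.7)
The plan is to mimic the proof of Proposition~\ref{prop:A^t kth moment-2} and transcribe the combinatorial bookkeeping from Section~\ref{sec:counting-type-b-paths}, but now track carefully the extra multiplicative weight $\alpha^{\mathfrak{n}(p)}$ attached to each path. Following the same decomposition, I would write $E_m = \sum_{r=\rmin}^{\rmax} E_{m,r}$ where $E_{m,r}$ restricts the sum to path tuples producing exactly $r$ Type~I segments. The definitions of marked edges, backtracks, segments, and Type~I/II paths carry over unchanged, as does the identification $\rmin = \max\{0, \lfloor(m-kh)/(t-h)\rfloor\}$, since none of these definitions involve $\alpha$. Crucially, Lemma~\ref{lem:num arrangement deepwalk} is purely combinatorial (it counts arrangements of marked edges, segments, and unmarked edges), so the bound $N_{m,r} \leq N_{m,\rmin}(Ck)^{(r-\rmin)(t+1)}$ remains valid verbatim in the node2vec setting.

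The key quantitative improvement over DeepWalk comes from the following observation: every Type~II segment with $r'$ marked edges contains at least $r'$ backtracks, because every gap $k_{s+1}-k_{s}$ is even and hence $\geq 2$. Consequently, for a path tuple contributing to $E_{m,r}$, the number of backtracks is at least the total number of marked edges lying inside Type~II segments; the product of edge weights therefore picks up a factor of at least $\alpha^{h(k-\rmin)}\leq C/n^{h(k-\rmin)}$ from the $k-\rmin$ Type~II paths alone. This $1/n^h$ per Type~II path cancels precisely the $n^h$ vertex-choice factor from Lemma~\ref{lem:choice-vertices} for Type~II segments, replacing the $(n\rho_n)^h$ DeepWalk contribution by $\rho_n^h$. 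This is the mechanism by which the penalty $(n\rho_n)^\phi$ in Theorem~\ref{prop:Frob norm tL to tU} disappears, leaving only~\eqref{eq:t-condn-2}.

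With this in hand, I would rerun the proof of Lemma~\ref{lem:E-m-rmin-asymptotics deepwalk}, multiplying the estimate by the forced power of $\alpha$. Concretely, $E_{m,\rmin}$ gets bounded as before but with an additional factor accounting for the minimal number of backtracks, so that in the worst case $m=f(k)$ (all paths maximally Type~II) one obtains contribution $U_b^{\rmin}\,\rho_n^{h(k-\rmin)}$ instead of $U_b^{\rmin}(n\rho_n)^{h(k-\rmin)}$. The ratio with $U_b^k$ becomes a power of $\rho_n^h/(n^{t-1}\rho_n^t)$, which under \eqref{eq:t-condn-2} is $o(1)$; summing telescopically over $m$ then gives $\sum_{m<kt}E_{m,\rmin}=o(U_b^k)$. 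The step $\sum_{r>\rmin}E_{m,r} = E_{m,\rmin}(1+o(1))$ of Lemma~\ref{lem:sum over r>rmin deepwalk} transfers immediately, since the additional $\alpha$-factors only further dampen higher-$r$ contributions.

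The main obstacle will be a careful accounting of backtracks \emph{inside} Type~I segments: unlike Type~II, a Type~I segment with $r'$ marked edges may contain anywhere between $0$ and $r'-1$ backtracks, so one must match each $\alpha$-factor with the corresponding $n$-saving it generates in the vertex count. The cleanest bookkeeping is to group edges by segment and note that each additional backtrack inside a segment replaces choosing a fresh marked vertex (cost $n$) by returning to a previous one (cost $1$ but weighted by $\alpha = O(1/n)$); the two effects exactly offset up to constants, so the bound depends only on the \emph{minimum} backtrack count, which is $0$ for Type~I and $r'$ for Type~II as above. Once this accounting is made rigorous, the remaining estimates are a direct transcription of Section~\ref{sec:counting-type-b-paths} and yield Proposition~\ref{prop:node2vec kth moment-2}.
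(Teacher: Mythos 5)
Your key observation---that a Type~II segment with $m'$ marked edges carries at least $m'$ backtracks, so its vertex-count factor $n^{m'}$ is killed by the weight $\alpha^{m'}=O(n^{-m'})$---is exactly the mechanism the paper exploits. But the way you propose to deploy it has a genuine flaw: you keep the DeepWalk decomposition by the number of Type~I segments and claim that ``the additional $\alpha$-factors only further dampen higher-$r$ contributions,'' so that $\sum_{r>\rmin}E_{m,r}=E_{m,\rmin}(1+o(1))$ ``transfers immediately.'' This is backwards. The $\alpha$-saving is attached to Type~II marked edges; increasing $r$ (more Type~I segments) means \emph{fewer} Type~II marked edges, hence \emph{less} $\alpha$-saving. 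For DeepWalk the dominant term sits at small $r$ precisely because Type~II segments give the largest vertex count; once you multiply by $\alpha^{m'}$, Type~II segments become the \emph{cheapest} configuration (each Type~II marked edge contributes $O(\rho_n)$ instead of $O(n\rho_n)$), so the dominant term flips to \emph{large} $r$.

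A concrete instance: take $k=2$, $t=4$, $h=2$, $m=4$. The DeepWalk $\rmin = \lfloor(m-kh)/(t-h)\rfloor = 0$. At $r=0$, both paths are Type~II with two marked edges each, and the node2vec contribution is $O\big((n\alpha\rho_n)^4\big)=O(\rho_n^4)$. At $r=1$, one path is a maximal Type~I path, giving $O(n^3\rho_n^4\cdot m^4)$. So $E_{4,1}/E_{4,0}=\Theta(n^3 m^4)\to\infty$, and your anchoring at $r=\rmin$ misses the dominant term. The argument would not produce the needed estimate for $\sum_{m<kt}E_m$ as written.

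The paper avoids this by changing the decomposition: in Section~\ref{sec:path counting node2vec}, $E_{m,r}$ is indexed by the \emph{total} number of segments (Type~I or II), for which $\rmin=\lceil m/t\rceil$. With this index, every segment of either type with $m'$ marked edges contributes at most $Cn^{m'-1}$ once the $\alpha$-weights are accounted for, so the minimum-segment configurations genuinely dominate, and the inductive bound $N_{m,r}\le N_{m,\rmin}\cdot O(k^{(r-\rmin)(t+1)})$ (Lemma~\ref{lem:num arrangement node2vec}, a re-worked version of the splitting construction for this new $r$) yields $\sum_r E_{m,r}=E_{m,\rmin}(1+o(1))$. If you want to keep the Type~I index, you would need to rerun the argument with the dominant term at $\rmax$, and re-derive the analogue of Lemma~\ref{lem:num arrangement deepwalk} in the other direction; re-indexing by the total segment count is considerably cleaner.

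Your bookkeeping for $E_{m,\rmin}$ itself---including the handling of backtracks inside Type~I segments, where the worst case is zero backtracks---is sound, and the cancellation mechanism you describe is the right one. The gap is solely in the $r$-aggregation step.
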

		\begin{proof}[Proof of Proposition~\ref{prop:node2vec kth moment} using Proposition~\ref{prop:node2vec kth moment-2}]
Note that, we can write 			\begin{align*}
	\E Y_b^k &=  \E \bigg(\sum_{p \in \cP_b} X_p \bigg)^k= \sum_{(p_1, p_2, \ldots, p_k)| p_{\pi} \in \cP_b} \E (X_{p_1} X_{p_2} \cdots X_{p_k}). \numberthis\label{eqn:kth moment and k paths node2vec}
	\end{align*}
	For the upper bound, Proposition~\ref{prop:node2vec kth moment-2} shows that it is enough to bound the summands  corresponding to sequences $(p_1, p_2, \ldots, p_k)$ that satisfy $\vert \cup_{\pi =1}^k e(p_{\pi})	\vert = kt$, i.e. sequences of paths consisting of $kt$ distinct edges. 
	We note that there are no backtracks in this case and so, $ \E (X_{p_1} X_{p_2} \cdots X_{p_k}) = \Prob (X_{p_1} X_{p_2} \cdots X_{p_k} = 1) = \prod_{\pi=1}^{k}  \Prob (X_{p_{\pi}} = 1)$.
	Thus we have the same upper and lower bounds $U_b^k$ and $L_b^k$ as in Proposition \ref{prop:A^t kth moment}.
    
\end{proof}	
The rest of this section is devoted to the proof of Proposition~\ref{prop:node2vec kth moment-2}.
Towards this let $E_{m,r}$ denote the summands in \eqref{E-m-defn node2vec} restricted to the case that there are $r$ segments, Type I or Type II, so that
\begin{align*}
    E_m = \sum_{r= \rmin}^{\rmax} E_{m,r},
\end{align*}
where, given $m$, $[\rmin,\rmax]$ denotes the range of $r$. 
We note that this is in contrast to the proof of Proposition \ref{prop:A^t kth moment-2} where $r$ was the number of Type I segments. 
We also note that we are reusing the notation $\rmin$ and $\rmax$ from the proof of Proposition \ref{prop:A^t kth moment-2} to simplify the notation but the values of $\rmin$ and $\rmax$ will be different in this proof.

The analysis will again consist of two steps. In the first step, we analyze $E_{m,\rmin}$ and in the second step, we will show that $E_{m,r}$ is much smaller than $E_{m,\rmin}$ for $r>\rmin$.  

The following is the intuition for why $E_{m,\rmin}$ is the largest term.
Due to the presence of backtracks, the contribution from Type II segments is of the same or a smaller order than Type I segments. 
By Lemma~\ref{lem:choice-vertices} minimizing Type I segments leads to the maximum number of choices of marked edges.
Combining these two ideas, we see that we must minimize the number of segments.
A formal proof is provided in the rest of the section.

\subsubsection{Computing \texorpdfstring{$E_{m,\rmin}$}{E{m,r*(m)}}.} 
We begin by noting that $\rmin$ is given by
\begin{align*}
\rmin = \ceil*{\frac{m}{t}}.
\end{align*}
To see this we note that each path can have at most $t$ marked edges. 
So we can place $m$ marked edges in a minimum of $\ceil{\frac{m}{t}}$ paths. 
The next lemma counts the number of configurations of segments and unmarked edges.
For this, let $N_{m,\rmin}$ be the number of configurations of $m$ marked edges placed in $r$ segments, Type I or II. 
Again, we are reusing the notation $N_{m,\rmin}$ from section \ref{sec:path counting deepwalk}.

\begin{lem}\label{lem:m-rmin-arrangement node2vec}
\begin{align*}
N_{m,\rmin} \leq C \binom{k}{\rmin} k^{t \rmin - m}.
\end{align*}
\end{lem}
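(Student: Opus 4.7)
The plan is as follows. First, I would unpack why $\rmin=\lceil m/t\rceil$: since any single path of length $t$ carries at most $t$ marked edges (a single maximal Type~I segment saturates the entire path), placing $m$ marked edges requires at least $\lceil m/t\rceil$ of the $k$ paths to contain at least one segment. Choosing which $\rmin$ paths out of $k$ will host segments contributes the leading combinatorial factor $\binom{k}{\rmin}$, which explains the first piece of the claimed bound.

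Next, I would introduce the ``slack'' $s:=t\rmin-m\geq 0$, which measures how far we are from saturating all $\rmin$ chosen paths. If $s=0$, every chosen path is saturated with $t$ marked edges; a saturated path of length $t$ admits only $O(1)$ internal configurations of its segments (bounded by a constant $C=C(t)$), and in that case $N_{m,\rmin}\leq C\binom{k}{\rmin}$, matching the claim since $k^{0}=1$. When $s\geq 1$, the plan is to argue that at most $s$ of the chosen paths can be non-saturated, and to split the count as: (i) a distribution of the slack $s$ among the $\rmin$ hosting paths (a composition-type count), and (ii) the internal configuration of each individual path given its number of marked edges.

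For step (ii), since the path length $t$ is a fixed constant, the number of configurations of marked edges, backtracks, Type~I and Type~II segments, and unmarked edges within a single path of length $t$ holding $l\leq t$ marked edges is bounded by a constant $C(t)^{t-l}$; multiplying across the non-saturated paths contributes a total factor of at most $C(t)^{s}$, which is $O(1)^{s}$ and can be absorbed into the overall constant $C$ (or, looser, into $k^{s}$ times a constant, since eventually $s\leq kt$). For step (i), the number of compositions of $s$ into at most $\rmin$ nonnegative parts is at most $\binom{s+\rmin-1}{\rmin-1}\leq (\rmin+s)^{s}\leq (Ck)^{s}$, which provides the claimed $k^{t\rmin-m}=k^{s}$ factor. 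Combining (i) and (ii) with the initial $\binom{k}{\rmin}$, we obtain $N_{m,\rmin}\leq C\binom{k}{\rmin}k^{t\rmin-m}$ as asserted.

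The main obstacle, which is more bookkeeping than conceptual, is keeping track of the internal configurations of non-saturated paths: Type~II segments may have variable length $2,4,\ldots$, backtracks can be interleaved freely within segments, and unmarked edges can sit in several positions, including immediately before a segment when they are backtracks of a preceding unmarked edge (cf.\ Definition~\ref{defn:segment}). However, because $t$ is fixed and each ``extra'' unmarked edge or backtrack corresponds to one unit of slack, all of these choices are captured by a $C(t)^{s}$ factor, so no sharper enumeration is needed to obtain the stated upper bound.
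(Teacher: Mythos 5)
Your proposal is correct and follows essentially the same route as the paper: choose which $\rmin$ paths host segments ($\binom{k}{\rmin}$), then account for the non-saturated paths, with the factor $k^{t\rmin-m}$ coming from deciding how the slack $s=t\rmin-m$ is distributed among those paths. In fact you are slightly more explicit than the paper's own proof, which asserts that there are $0<l\le s$ non-saturated chosen paths admitting $C^l$ internal arrangements but leaves the $k^{s}$ factor (choice of which paths absorb the slack) implicit; your composition count supplies it. One small caution: absorbing $C(t)^{s}$ into the overall constant $C$ (your first suggestion) would fail, since $s$ can grow with $k=\lceil\log n\rceil$; your fallback of absorbing it into $k^{s}$ (because eventually $C(t)\le k$) is the correct justification and should be the one stated.
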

\begin{proof}
The proof is divided into two cases:

\noindent \textbf{Case I:} $\frac{m}{t} \in \bbN$. Note that $\rmin = \frac{m}{t}$ in this case. We can choose $\frac{m}{t}$ paths containing all the $m$ marked edges in $\binom{k}{\frac{m}{t}}$ ways.
Al the edges in the chosen $\frac{m}{t}$ paths are marked edges and all the edges in the rest of the paths are unmarked edges. 

\noindent \textbf{Case II:} $\frac{m}{t} \notin \bbN$. We can first choose $\ceil*{\frac{m}{t}}$ paths to place the marked edges.
By pigeonhole principle, there are $0 < l \leq t \ceil*{\frac{m}{t}} - m$ of the $\ceil*{\frac{m}{t}}$ chosen paths which are not saturated. We can choose arrangements for these $l$ paths in $C^l$ ways where $C$ is a constant that may depend on $t$.
The chosen paths can have at most $l$ unmarked edges.
The rest $k - \ceil*{\frac{m}{t}}$ all have only unmarked edges. 
\end{proof}

We now complete the proof with the following lemma.
\begin{lem}\label{lem:sum over m node2vec}
$\sum_{m<kt}E_{m,\rmin} = o(U_b^k)$. 
\end{lem}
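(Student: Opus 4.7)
The strategy is to adapt the proof of Lemma~\ref{lem:E-m-rmin-asymptotics deepwalk} to the node2vec setting, exploiting the crucial new feature that each backtrack contributes a weight $\alpha = O(1/n)$. The key observation is that this weight makes Type~II segments strictly dominated by Type~I segments in the $k$-th moment, so the bound on $E_{m,\rmin}$ is attained when all $\rmin$ segments are Type~I.

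First I would bound $E_{m,\rmin}$ by combining Lemma~\ref{lem:m-rmin-arrangement node2vec} with the vertex-counting Lemma~\ref{lem:choice-vertices}. Fix a configuration with $s$ Type~I segments containing $m_1$ marked edges in total and $\rmin - s$ Type~II segments containing $m - m_1$ marked edges. By Lemma~\ref{lem:choice-vertices} the vertex choices contribute at most $n^{m_1 - s}\cdot n^{m - m_1}$, and the probability of the marked edges is $\rho_n^m$. A Type~II segment with $r_j$ marked edges has all inter-marked-edge gaps even and at least $2$, hence has length at least $2r_j$ and contains at least $r_j$ backtracks; summing over all Type~II segments, the total backtrack weight is at most $\alpha^{m-m_1} = O(n^{-(m-m_1)})$. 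Multiplying everything together gives a per-configuration bound $O(n^{m_1 - s}\rho_n^m)$, and the constraint $s \geq \rmin - (m - m_1)$ (each Type~II segment has at least one marked edge) forces $m_1 - s \leq m - \rmin$. The maximum $n^{m-\rmin}\rho_n^m$ is attained by the all-Type-I configuration, so
\begin{align*}
E_{m,\rmin} \leq C\binom{k}{\rmin}\, k^{t\rmin - m}\, n^{m-\rmin}\rho_n^m.
\end{align*}

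Next I would parameterize by $d := k - \rmin$ and $j := t\rmin - m$. The constraints $m \leq t\rmin$ and $\rmin = \lceil m/t\rceil$ force $d \geq 0$ and $j \in \{0, 1, \dots, t-1\}$, and summing over $m < kt$ corresponds to summing over pairs $(d, j) \neq (0,0)$ in this range. Using $\binom{k}{\rmin} \leq k^d$ and $U_b = \Theta(n^{t-1}\rho_n^t)$, a direct computation yields the clean geometric form
\begin{align*}
\frac{E_{m,\rmin}}{U_b^k} \leq C\left(\frac{k}{U_b}\right)^{d}\left(\frac{k}{n\rho_n}\right)^{j},
\end{align*}
so the total sum factors as the product of two geometric series.

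Finally I would verify that both ratios are $o(1)$: the hypothesis $n^{t-1}\rho_n^t \gg (\log n)^{c_0}$ directly gives $k/U_b = \log n / (n^{t-1}\rho_n^t) \to 0$, and for $t \geq 3$ this same hypothesis implies $n\rho_n \gg n^{1/t}(\log n)^{c_0/t} \gg \log n$, hence $k/(n\rho_n) \to 0$ as well. The main technical obstacle lies in the first step: one must confirm carefully that every Type~II segment with $r_j$ marked edges contains at least $r_j$ backtracks (so that the factor $\alpha^{r_j} = O(n^{-r_j})$ precisely cancels the additional factor of $n^{r_j}$ in the Type~II vertex count relative to Type~I), and that the number of possible backtrack patterns within each segment contributes only a $t$-dependent constant, absorbable into $C$.
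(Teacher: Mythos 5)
Your overall strategy matches the paper's: you correctly identify that the key new feature for node2vec is that a Type~II segment with $r$ marked edges carries a backtrack weight $\alpha^r = O(n^{-r})$ which cancels the extra $n^r$ vertex count from Lemma~\ref{lem:choice-vertices}, so that the worst configurations are all-Type-I. The reparameterization by $(d, j) = (k - \rmin,\; t\rmin - m)$ and the resulting product-of-geometric-series bound is a clean restatement of the telescoping sums in the paper's proof, and the observation $\rmin = \lceil m/t\rceil$ with $j \in \{0,\dots,t-1\}$ is correct.

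However, there is a genuine gap in the per-configuration bound. When you write the per-configuration contribution as $O(n^{m_1-s}\rho_n^m)$ and conclude $E_{m,\rmin}\leq C\binom{k}{\rmin}k^{t\rmin - m}n^{m-\rmin}\rho_n^m$, you have counted the number of ways to place the segments (Lemma~\ref{lem:m-rmin-arrangement node2vec}), the number of ways to choose the marked vertices, the marked-edge probabilities, and the backtrack weights — but not the number of ways to fill in the \emph{unmarked} edges. Each unmarked edge must coincide with some previously-seen undirected edge, of which there are at most $m$, and the number of unmarked positions is at most $kt-m = td+j$. The paper keeps this as the factor $m^{kt-m}$, and it is not negligible: distributed over the $(d,j)$ parameterization it costs $(kt)^t$ per unit of $d$ and $kt$ per unit of $j$. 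The correct geometric ratios are therefore $O\!\left(k^{t+1}/U_b\right)$ and $O\!\left(k^2/(n\rho_n)\right)$ rather than your $k/U_b$ and $k/(n\rho_n)$. Your final verification step only checks the weaker bounds; with the missing factor restored you must instead check $k^{t+1}/U_b \to 0$ and $k^2/(n\rho_n) \to 0$, which is exactly what \eqref{eq:t-condn-2} is calibrated for (and which does hold via $n\rho_n \gg n^{1/t}$ being polynomial in $n$, as you argued). Also, a minor point: the cancellation $n^{r_j}\alpha^{r_j}=O(1)$ acquires a constant $C^{r_j}$ per segment when $\alpha \leq C/n$; this is innocuous only because with $\rmin$ segments the number of Type~II segments (and hence $m - m_1$) is bounded by a $t$-dependent constant, which you should state explicitly rather than absorb silently into the $O(n^{-(m-m_1)})$ claim.
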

\begin{proof}
We recall that $U_b = \Theta (n^{t-1} \rho_n^t)$. 
The number of choices of segments is given by  Lemma~\ref{lem:m-rmin-arrangement node2vec}.
Let $s(m)$ be the number of maximal (or saturated) Type I paths when placing $m$ marked edges in $\rmin$ paths.
When $\frac{m}{t} \in \bbN$, all the $\rmin$ paths containing marked edges are maximal Type I paths and each of them have probability at most $U_b$ as there are no backtracks.
When $\frac{m}{t} \notin \bbN$ we have $s \geq \max\left\{\ceil*{\frac{m}{t}} -  \left(t\ceil*{\frac{m}{t}} - m\right),0\right\}$ maximal Type I paths.
The number of unmarked edges is at most $kt - m$.
The number of marked edges in the non-maximal Type I paths is equal to $m - s(m)t$.

Choose a Type I segment with $m'$ marked edges in a non-maximal Type I path. By Lemma~\ref{lem:choice-vertices}, the vertices can be chosen in at most $n^{m'-1}$ ways. The number of backtracks in the Type I segment can be equal to $0$ or larger than $0$. So $\alpha^0 = 1$ is an upper bound for the factor coming from backtracks in \eqref{eqn:node2vec X-p-def}. 
And so the upper bound for the contribution coming from the choice of vertices and backtracks is $n^{m'-1} \alpha^0 = n^{m' - 1}$.
Now for a Type II segment (in a non-maximal Type I path) with $m'$ marked vertices, there must be at least ${m'}$ backtracks. Thus the corresponding upper bound for a Type II segment is $n^{m'} \alpha^{m'} = O(1)$. 
We can note that we can have a Type II segment in only a non-saturated path and so the number of Type II segments are $O(1)$.
Using this analysis, for any segment, Type I or II, with $m'$ marked edges in a non-maximal (Type I) path, we upper bound the choices of marked vertices and the factors from backtracks by $C n^{m'-1}$.

Combining all these, we get
\begin{align*}
   E_{m,\rmin} &\leq C \binom{k}{\rmin} k^{t \rmin - m}
     \times U_b^{s} n^{m - s(m)t - (\rmin - s(m))} \rho_n^{m - s(m)t}
    \times m^{kt-m}.
\end{align*}

Using these bounds we see that with the choice of $k = \ceil{\log n}$
\begin{align*}
\sum_{l=0}^{t-1} E_{r_0 t - l, r_{\star}(r_0 t-l)} &= E_{r_0 t, r_{\star}(r_0 t)}
\left( 1 + O\left( \frac{k^2}{n\rho_n} \right) \right), \quad r_0 \geq 1,\\
\sum_{r_0 = 1}^k E_{r_0 t, r_{\star}(r_0 t)}  &= E_{kt,r_{\star}(kt)} \left( 1 + O\left( \frac{k^{t+1}}{n^{t-1}\rho_n^t} \right) \right).
\end{align*}
These bounds in turn imply that
\begin{align*}
   \sum_{m=1}^{kt-1} E_{m, \rmin} =   o\left( U_b^k \right).
\end{align*}

\end{proof}

\subsubsection{Computing \texorpdfstring{$E_{m,r}$}{E{m,r}} for \texorpdfstring{$r > \rmin$}{r > r*(m)}.}
We start with a lemma to bound the number of configurations of segments and unmarked edges as in Lemma \ref{lem:m-rmin-arrangement node2vec}.
\begin{lem}\label{lem:num arrangement node2vec}
Given $m$ marked edges and $r$ segments, let $N_{m,r}$ be the number of configurations of segments and unmarked edges. 
Then, for any $r>\rmin$,
\begin{align*}
    N_{m,r} \leq  N_{m,\rmin} \times O(k^{(r-\rmin) (t+1)}). 
\end{align*}
\end{lem}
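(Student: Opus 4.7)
The plan is to prove this bound by induction on $r$, establishing the one-step estimate $N_{m,r+1} \leq O(k^{t+1}) \cdot N_{m,r}$ for every $\rmin \leq r < \rmax$; iterating $r - \rmin$ times then yields the claim. To make the induction go through I would work, exactly as in Lemma~\ref{lem:num arrangement deepwalk}, with the set $T_r$ of admissible configurations having $r$ segments (of either type) and build a surjective multi-map $\Phi \colon T_r \rightrightarrows T_{r+1}$ such that every element of $T_r$ has at most $O(k^{t+1})$ images while every element of $T_{r+1}$ has at least one preimage. Taking cardinalities then gives $|T_{r+1}| \leq O(k^{t+1}) \, |T_r|$.

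Following the template of Lemma~\ref{lem:num arrangement deepwalk}, I would split $T_{r+1}$ according to whether or not some path contains two or more segments. In the first case, every configuration $A' \in T_{r+1}$ with a path carrying two adjacent segments can be obtained from an $A \in T_r$ by a \emph{splitting} operation: pick a path of $A$ with a segment that has sufficient room (at least one extra unmarked edge available), and insert an unmarked edge into the interior of the segment so as to break it into two sub-segments. There are $O(k)$ choices for which path to act on and, since $t$ is fixed, only $O(1)$ ways to carry out the split, so this portion of the multi-map has image-size $O(k)$. In the second case, when every path of $A'$ has at most one segment, a preimage $A \in T_r$ is produced by the analogue of the Case~II construction in Lemma~\ref{lem:num arrangement deepwalk}: we select a path $p$ of $A$ that has no segment, import marked edges from one or more other paths of $A$ (which necessarily have spare capacity because $r \geq \rmin = \lceil m/t \rceil$), and reassemble the marked edges in $p$ to form the new segment. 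Choosing the target path costs $O(k)$, the source paths cost $O(k^t)$ since there are at most $t$ of them, and the arrangement of marked edges inside the modified paths is one of $O(1)$ configurations because $t$ is bounded; this yields the $O(k^{t+1})$ image-size.

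To complete the inductive step one verifies surjectivity of $\Phi$ onto $T_{r+1}$. For the splitting case, given $A' \in T_{r+1}$ having a path with two adjacent segments, we undo the split by deleting the unmarked edge that separates them and relocating it to a position adjacent to some pre-existing unmarked edge in the same path; this requires the path to have at least one unmarked edge outside the two segments, which is guaranteed since the total path length $t$ exceeds the combined segment lengths in this regime. For the second case, the inverse operation moves the new segment out of its host path back into the donor paths, and the feasibility of this redistribution is exactly the condition $r \geq \rmin$, guaranteeing enough room among the remaining paths.

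The main obstacle I anticipate is the careful bookkeeping of unmarked edges under the multi-map: each application of $\Phi$ introduces at most $O(t)$ additional unmarked edges, and one must ensure that these do not leak into the later moment estimates beyond the polylogarithmic factors already absorbed into the $O(\cdot)$. Since $t$ is constant and $k = \lceil \log n \rceil$, the cumulative effect over $r - \rmin \leq k$ iterations remains controlled, so the bound $N_{m,r} \leq N_{m,\rmin} \cdot O(k^{(r-\rmin)(t+1)})$ follows by iteration. The fact that the node2vec version needs only a single combined bound (rather than the two separate bounds of Lemma~\ref{lem:num arrangement deepwalk}) reflects the simpler setup here, where $r$ counts segments of both types uniformly.
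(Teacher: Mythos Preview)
Your proposal is correct and follows essentially the same approach as the paper: an inductive multi-map $T_r \rightrightarrows T_{r+1}$, split into the ``two segments in one path'' (splitting) case with $O(k)$ images and the ``at most one segment per path'' (redistribution) case with $O(k^{t+1})$ images, together with the surjectivity verification. The only minor imprecision is in your Case~I surjectivity reasoning---the unmarked edge you remove when merging two segments must be relocated adjacent to another unmarked edge in the same path, which is exactly the ``extra unmarked edges'' condition defining the relevant subset $T_r^{\sss I}$, not a consequence of total path length---but this does not affect the argument's validity.
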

\begin{proof}
Let $T_r$ be the set of all configurations of $m$ marked edges and $r$, Type I or Type II, segments. 
Note that we are reusing the notation $T_r$ from proof of Proposition \ref{lem:num arrangement deepwalk} but $T_r$ is defined differently here.
We will inductively bound $T_{r+1}$ in terms $T_r$. 
For that, we consider two cases depending on whether the elements of $T_{r+1}$ has a  path with two  segments or not. 
In both cases, we will find a relation between  $T_r$ to $T_{r+1}$. 

\paragraph*{Case I. } Suppose that $T_{r+1}$ has a path with two  segments. Let us call this subset $T_{r+1}^{\sss I}$.
We consider the elements in $T_r$ which will be related to these elements in $T_{r+1}^{\sss I}$. 
Let $T_r^{\sss I} \subset T_r$ consisting of configuration such that there is at least one path $p$ so that the following condition holds: 
\begin{itemize}
    \item \textbf{Extra unmarked edges.} $p$ has $l$ segments and $l'\geq l$ unmarked edges for some $l,l' \geq 1$. 
\end{itemize}
The first condition is the same as in the proof of Lemma \ref{lem:num arrangement deepwalk}.
The second condition condition is absent as in this construction we will split any segment, Type I or II, as long as it has at least two marked edges. 
Similar to the proof of Lemma \ref{lem:num arrangement deepwalk}, we split a segment only at a marked edge. This is to ensure that splitting creates two segments.
The rest of the details of this construction, and the proof that the relation given by the construction is surjective onto $T_{r+1}^{\sss I}$ are similar to Case I in the proof of Lemma~\ref{lem:num arrangement deepwalk}.
As in Lemma~\ref{lem:num arrangement deepwalk} we have
\begin{align*}
    |T_{r+1}^{\sss I}| \leq O(k) \times |T_{r}^{\sss I}| \leq O(k) \times |T_{r}|.
\end{align*}

\paragraph*{Case II. } 

We next relate arrangements of marked edges where there is at most one  segment per path.
We denote such arrangements as $T_{r+1}^{\sss II} \subset T_{r+1}$. 
Let $T_{r}^{\sss II}$ be the set of all ways of specifying locations of segments such that there are a total of $r$ segments and each of the  segments is placed on distinct paths.
We note that $T_{r}^{\sss II}$ is defined differently as compared to the construction for DeepWalk in Lemma~\ref{lem:num arrangement deepwalk}.
Suppose that $r+1 \leq \min(k, m)$ and $r \geq r_{\star}(m)$.
Then we give a multi-map from $T_{r}^{\sss II}$ onto $T_{r+1}^{\sss II}$ using a construction. 
The condition $r+1 \leq m$ is necessary so that $T_{r+1}^{\sss II}$ is non-empty as we must have at least $r+1$ marked edges in order to have $r+1$ paths each having a Type I segment. 
The condition $r+1 \leq k$ is also necessary for $T_{r+1}^{\sss II}$ to be non-empty as we must have at least $r+1$ distinct paths to place the $r+1$ Type I segments.
The last condition $r \geq r_{\star}(m)$ is to ensure that $T_{r}^{\sss II}$ is non-empty. 
To fix notation, let $S(l)$ be the set of all ways of arranging $l$ marked edges in one path. 
We now describe the construction. 
Let $A \in T_{r}^{\sss II}$ and choose a path $p$ not containing a segment. We can do so as $r+1 \leq k$.
Suppose $p$ has $l \geq 0$ marked edges. Choose $0 \leq u \leq t-l$ paths where the $u$ paths are distinct from $p$, and contain at least one marked edge.
If $l=0$ we require $u > 0$. It is feasible to choose such path(s) as $r+1 \leq m$.
Suppose the $u$ paths are labeled as $q_1, q_2, \ldots, q_{u}$. 
Suppose these paths have $l_{q_1}, l_{q_2}, \cdots, l_{q_{u}}$ marked edges respectively.
Let $v_{q_1}, v_{q_2}, \cdots, v_{q_{u}}$ be such that $0 < v_{q_i} < l_{q_i}$.
We require $\sum_i v_{q_i} \leq t-l$.
Then we modify the arrangements of marked edges in the paths so that the new arrangements for the sequence of paths $(p, q_1, q_2, \ldots, q_{u})$ is any element of
\begin{align*}
S\left(l + \sum_i v_{q_i} \right) \times \left(\prod_{i=1}^{u} S(l_{q_i} - v_{q_i})  \right).
\end{align*}
We keep the arrangements of marked edges in the rest $k - (1 + u)$ paths unchanged. 
This leads to multiple images of $A$ in $T_{r+1}^{\sss II}$.
We note that there are $O(k^{t+1})$ images of $A$ due to the choice of the paths and since the number of ways of choosing the new arrangements for the $1 + u$ paths is $O(1)$ as $t$ is fixed.
We also note that since we modify at most $t+1$ paths in this construction, the number of unmarked edges increases by at most $t(t+1)$.

Now we show that the multi-map given by the construction above is surjective onto $T_{r+1}^{\sss II}$. For this, let $A' \in T_{r+1}^{\sss II}$. Let $p$ be a path containing a Type I segment. 
Suppose $p$ has $l$ marked edges.
Choose $u \leq l$ paths so that $u$ paths contain marked edges and these paths are not equal to $p$.
Suppose the $u$ paths are labeled as $q_1, q_2, \ldots, q_{u}$. Suppose that these paths have $l_{q_1}, l_{q_2}, \cdots, l_{q_{u}}$ marked edges respectively. We require that $l_{q_i} < t$ i.e. that these chosen paths are not saturated.
Let $v_{q_1}, v_{q_2}, \cdots, v_{q_{u}}$ be such that $0 < v_{q_i} \leq t - l_{q_i}$.
We require that 
\begin{align*}
	\sum_{i=1}^{u} v_{q_i} = l.
\end{align*}
This is feasible as long as $r \geq r_{\star}(m)$. The above condition says that the chosen paths have enough spaces to move $l$ edges from path $p$ to the chosen paths. 
Choose a new arrangement $x$ of the marked edges for the sequence of paths $(p, q_1, q_2, \ldots, q_{u})$ from the set
\begin{align*}
	S\left(0\right) \times \left(\prod_{i=1}^{u} S(l_{q_i} + v_{q_i})  \right).
\end{align*} 
Then keeping the arrangements of marked edges of the rest of the paths the same as in $A'$ and choosing the arrangements for the chosen paths as $x$, we have a preimage $A \in T_{r}^{\sss II}$ under the construction described above.

From the two constructions above we have
\begin{align*}
	\vert T_{r+1} \vert \leq  O(k) \vert T_r \vert + O(k^{t+1}) \vert T_r \vert. \numberthis \label{eqn:bound for segments node2vec}
\end{align*}

\end{proof}
We can now compute asymptotics for $E_{m,r}$. 

\begin{lem}\label{lem:sum over r>rmin node2vec}
$\sum_{r= \rmin}^{\rmax} E_{m,r} =  E_{m, \rmin} (1 + o(1))$.
\end{lem}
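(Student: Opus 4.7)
The plan is to follow the same scheme as the DeepWalk analogue (Lemma~\ref{lem:sum over r>rmin deepwalk}), but with an additional bookkeeping step that absorbs the backtracking weights $\alpha^{\mathfrak{n}}$ into the path-counting estimate. The point is that, since $\alpha = O(1/n)$, each backtrack costs a factor $\alpha$ that exactly compensates for the extra vertex choice it enables. Concretely, consider a segment (Type I or Type II) with $m'$ marked edges sitting inside a non-maximal path. By Lemma~\ref{lem:choice-vertices}, the number of ways to assign marked vertices is at most $n^{m'-1}$ for a Type I segment and $n^{m'}$ for a Type II segment; and in the Type II case the segment carries at least $m'$ backtracks, producing a weight $\alpha^{m'}=O(n^{-m'})$. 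Thus the combined ``vertex $\times$ backtrack'' contribution per segment with $m'$ marked edges is $O(n^{m'-1})$ in both cases. Summing over the $r$ segments that together contain $m$ marked edges gives a uniform vertex/backtrack bound $C\,n^{m-r}$.

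Given this, I would assemble the bound on $E_{m,r}$ exactly as in the DeepWalk case. The probability of having all $m$ marked edges present contributes $\rho_n^{m}$. By Lemma~\ref{lem:num arrangement node2vec}, the number of configurations of segments and unmarked edges satisfies
\begin{align*}
N_{m,r} \leq N_{m,\rmin}\times O\bigl(k^{(r-\rmin)(t+1)}\bigr),
\end{align*}
and the construction in that lemma shows that the number of unmarked edges exceeds the $\rmin$-segment count by at most $(r-\rmin)t(t+1)$. Each unmarked edge contributes at most a factor $m$ (one of $m$ possible choices identifying it with an already-marked edge). Combining everything yields
\begin{align*}
E_{m,r} \;\leq\; N_{m,\rmin}\,\rho_n^{m}\,n^{m-r}\,m^{kt-m}\cdot\bigl(C k^{t+1}\, m^{t(t+1)}\bigr)^{r-\rmin},
\end{align*}
where the first three factors form precisely the upper bound on $E_{m,\rmin}$ derived in Lemma~\ref{lem:sum over m node2vec} (up to constants absorbed in $C$) once we recall that $E_{m,\rmin}$ carries $n^{m-\rmin}$ and lose one factor of $n$ per extra segment via the $n^{m-r}$ term.

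Summing the geometric series over $r \in [\rmin,\rmax]$ gives
\begin{align*}
\sum_{r=\rmin}^{\rmax}E_{m,r} \;\leq\; E_{m,\rmin}\sum_{r=\rmin}^{\rmax}\left(\frac{C k^{t+1}\, m^{t(t+1)}}{n}\right)^{r-\rmin} \;=\; E_{m,\rmin}\bigl(1+o(1)\bigr),
\end{align*}
provided the ratio $k^{t+1} m^{t(t+1)}/n = o(1)$, which holds with our standard choice $k=\lceil \log n\rceil$ and $m\le kt$. The one place that requires genuine care, and which I would treat as the main obstacle, is verifying that Type II segments do not ruin the ``one factor of $n$ lost per extra segment'' heuristic. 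That is exactly what the identity $n^{m'}\alpha^{m'}=O(1)$ for Type II segments guarantees, and it is the reason the assumption $\alpha=O(1/n)$ cannot be weakened at this step. Once this accounting is in place the rest of the argument is a direct transcription of Lemma~\ref{lem:sum over r>rmin deepwalk} and produces the claimed estimate.
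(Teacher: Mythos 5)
Your proof is correct and follows essentially the same route as the paper. The central insight — that the condition $\alpha = O(1/n)$ makes a Type II segment with $m'$ marked edges contribute $n^{m'}\alpha^{m'} = O(1)$, so every additional segment (Type I or II) costs at least one factor of $n$ — is exactly what the paper uses, and your citation of Lemma~\ref{lem:num arrangement node2vec} for the configuration count $N_{m,r} \leq N_{m,\rmin}\, O(k^{(r-\rmin)(t+1)})$ matches.

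One small bookkeeping comment: the extra factor $m^{t(t+1)(r-\rmin)}$ you carry over from the DeepWalk argument is superfluous here and amounts to double-counting. The number of unmarked edges across all $k$ paths is always at most $kt-m$ regardless of $r$, so the single uniform factor $m^{kt-m}$ already covers every configuration; the paper's bound $E_{m,r}\leq N_{m,r}\,n^{m-r}\,\rho_n^{m}\,m^{kt-m}$ omits the extra piece for this reason. Your looser version is still a valid upper bound and the geometric-series ratio $C k^{t+1}m^{t(t+1)}/n$ remains $o(1)$ with $k=\lceil\log n\rceil$ and $m\leq kt$, so the conclusion goes through unchanged — but the tighter accounting is cleaner and closer to what the authors actually write.
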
		
\begin{proof}
We start by giving a bound for $E_{m,r}$.
The probability of the $m$ marked edges is bounded by $\rho_n^{m}$. 
The upper bound for the unmarked edges is $kt-m$.
We now compute a bound for the choices of marked vertices and factors arising from backtracks. For this we note that for both the constructions in the proof of Lemma \ref{lem:num arrangement node2vec} we modify at most $t+1$ paths and create an additional segment in the modified paths.
By similar reasoning as in the proof of Lemma \ref{lem:sum over m node2vec}, for any Type I segment with $m'$ marked edges in the modified paths we have an upper bound of $n^{m'-1}$ and for a Type II segment (in the modified paths) we have an upper bound of $O(1)$. 
Since we create an additional segment in the modified paths, we have an associated factor of $O(n^{-1})$.
Combining these we have
\begin{align*}
E_{m,r} \leq N_{m,r} n^{m-r} \rho_n^{m} m^{kt - m}. 
\end{align*}
This implies that
\begin{align*}
\sum_{r= \rmin}^{\rmax} E_{m,r}  
= E_{m,\rmin} \sum_{r= \rmin}^{\rmax} \left(\frac{O(k^{t+1})}{n}\right)^{r - \rmin}
= E_{m,\rmin} (1 + o(1)).
\end{align*}

\end{proof}

By Lemmas \ref{lem:sum over m node2vec} and \ref{lem:sum over r>rmin node2vec} we have
\begin{align*}
\sum_{m < kt} E_m \leq C \sum_{m < kt} E_{m, \rmin} = o(U_b^k).
\end{align*}
This completes the proof of Proposition \ref{prop:node2vec kth moment-2}.

\subsection{Concentration of path counts}\label{sec:concent-node2vec}

We will prove the following concentration result for $Y_{b,\alpha}$.  
\begin{prop}\label{prop:Lower tail node2vec}
Let $\tL = \tU = t \geq 3$ be given and $k = \ceil{\log n}$. Suppose that \eqref{eq:t-condn-2} holds.
Then we have 
\begin{align*}
	\Prob\left(|Y_{b,\alpha} - \E Y_{b,\alpha}| > \delta \E Y_{b,\alpha}\right)  = O(n^{-3}),
\end{align*}
where $\delta = \Theta((\log n)^{-\eta})$ for some $\eta > 0$. 
\end{prop}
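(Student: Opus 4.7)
The plan is to mirror the proof of Proposition~\ref{prop:Lower tail A^t}, substituting the node2vec moment estimates from Section~\ref{sec:counting-weighted-paths} for the DeepWalk estimates. The starting point is Markov's inequality applied to the $2k$-th centered moment with $k = \ceil{\log n}$:
\begin{align*}
\Prob(|Y_{b,\alpha} - \E Y_{b,\alpha}| > \delta \E Y_{b,\alpha}) \leq \frac{\E(Y_{b,\alpha} - \E Y_{b,\alpha})^{2k}}{\delta^{2k} (\E Y_{b,\alpha})^{2k}}.
\end{align*}
By Proposition~\ref{prop:node2vec kth moment}, the denominator is bounded below by $\delta^{2k} L_b^{2k}$, and since $U_b/L_b = 1 + O(k/n)$, the entire difficulty is a sharp upper bound on the numerator.

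Expanding as in \eqref{eqn:lower tail bound 2}, we write $\E(Y_{b,\alpha} - \E Y_{b,\alpha})^{2k} = \sum \E \prod_{l=1}^{2k}(X_{p_l,\alpha} - \E X_{p_l,\alpha})$, summed over $2k$-tuples in $\cP_b$. Since $X_{p,\alpha} - \E X_{p,\alpha} = \alpha^{\mathfrak{n}(p)} (A_{i_0i_1}\cdots A_{i_{t-1}i_t} - P_{i_0i_1}\cdots P_{i_{t-1}i_t})$, the deterministic $\alpha$-weights factor out and the remaining expectation can be controlled by Lemma~\ref{lem:expectation centered moment}. As in the DeepWalk proof, any tuple containing a path whose edge set is disjoint from the others vanishes, so each surviving tuple has at least $k$ repeated edges, i.e.\ at most $m \leq 2kt - k$ marked edges. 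Assigning weight $w(e) = \rho_n$ to marked edges and $w(e) \leq 1 + \rho_n \leq 2$ to unmarked ones, and collecting the $\alpha^{\mathfrak{n}}$ factors into the vertex-count bounds, we obtain for each fixed $m$ a quantity $E'_{m,\alpha}$ structurally identical to $E_{m,\alpha}$ in \eqref{E-m-defn node2vec} but with $2k$ in place of $k$ and with one extra factor of $2$ per unmarked edge.

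The combinatorial reductions from Lemmas~\ref{lem:m-rmin-arrangement node2vec}, \ref{lem:num arrangement node2vec}, \ref{lem:sum over m node2vec} and \ref{lem:sum over r>rmin node2vec} then apply verbatim: the sum over configurations with $r$ segments collapses to the $r = r_*(m)$ case, and the sum over $m$ is dominated by $m_0$ maximal, i.e.\ $m_0 = t r_*(m_0)$ and $r_*(m_0)$ is forced to be essentially $2k$. The crucial point, which is what makes the node2vec bound stronger than DeepWalk and avoids an extra factor $(n\rho_n)^{\phi}$, is that under $\alpha = O(1/n)$ each Type II segment with $m'$ marked edges contributes $O(n^{m'} \rho_n^{m'} \alpha^{m'}) = O(\rho_n^{m'})$ instead of $O((n\rho_n)^{m'})$, so Type II segments carry no combinatorial advantage over non-existing segments. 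Combining these estimates, the ratio takes the schematic form
\begin{align*}
\left(\frac{U_b}{L_b}\right)^{r_*} \cdot \left(\frac{C (\log n)^{t+1} \, \delta^{-2k/(2k-r_*)}}{n^{t-1}\rho_n^t}\right)^{2k - r_*},
\end{align*}
and with $\delta = (\log n)^{-\eta}$ and $2k - r_* = \Theta(\log n)$ the bracket is $O((\log n)^{t+1+2\eta(t-1)} / (n^{t-1}\rho_n^t))$; under the hypothesis \eqref{eq:t-condn-2} with $c_0 = 4 + (t+2)\eta$ this is at most $n^{-3/\log n}$ per factor, yielding the bound $O(n^{-3})$ as claimed. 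The main obstacle I anticipate is the careful bookkeeping needed to incorporate the $\alpha$-weights into the Type I/Type II vertex-choice counts uniformly across all $r$-segment configurations, together with verifying that the extra factors of $2$ and of $2k$ relative to $k$ are absorbed into the same power of $\log n$ that \eqref{eq:t-condn-2} provides; everything else is a direct transcription of the DeepWalk argument.
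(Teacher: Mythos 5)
Your proof diverges from the paper's in one essential respect, and—at least as the paper's argument is literally written—your route is the more careful one. After factoring out the deterministic weights as $\prod_{l=1}^{2k}\alpha^{\mathfrak{n}(p_l)}\cdot\E\prod_{l=1}^{2k}(X_{p_l,1}-\E X_{p_l,1})$, the paper simply bounds $\alpha^{\mathfrak{n}(p_l)}\le 1$ and declares that one may ``bound as in the proof of Proposition~\ref{prop:Lower tail A^t}.'' Taken at face value, that reduction discards the $\alpha$-weights entirely, so the resulting numerator is the DeepWalk centered moment $\E(Y_{b,1}-\E Y_{b,1})^{2k}$; but the denominator $(\E Y_{b,\alpha})^{2k}\asymp L_b^{2k}$ is unchanged, and the DeepWalk ratio only decays under the stronger condition \eqref{eq:t-condn} (with the extra $(n\rho_n)^{-\phi}$ factor), not under the hypothesis \eqref{eq:t-condn-2} actually assumed here. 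Your approach keeps the $\alpha^{\mathfrak{n}(p_l)}$ weights inside the sum and uses them to down-weight Type~II segments so that a Type~II segment with $m'$ marked edges contributes $O(n^{m'}\rho_n^{m'}\alpha^{m'})=O(\rho_n^{m'})$ rather than $(n\rho_n)^{m'}$. That is precisely the mechanism that replaces the $(n\rho_n)^h$ in the DeepWalk final bracket by $O(1)$ and makes the estimate close under \eqref{eq:t-condn-2} alone, matching what Lemma~\ref{lem:sum over m node2vec} already does for the uncentered moments. So your proof is not merely a transcription of the paper's---it supplies the ingredient the terse reduction leaves implicit (and, if read literally, omits).

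Two small bookkeeping points to tighten: (i) the inequality $\prod_l\alpha^{\mathfrak{n}(p_l)}\cdot\E\prod_l(X_{p_l,1}-\E X_{p_l,1})\le \cdots$ is not valid term by term because the expectation can be negative; you (and the paper) should pass to absolute values, i.e. bound $\E(Y_{b,\alpha}-\E Y_{b,\alpha})^{2k}\le\sum_{(p_1,\dots,p_{2k})}\prod_l\alpha^{\mathfrak{n}(p_l)}\,\bigl|\E\prod_l(X_{p_l,1}-\E X_{p_l,1})\bigr|$ before invoking Lemma~\ref{lem:expectation centered moment}; (ii) when you cite $c_0=4+(t+2)\eta$ as giving $n^{-3}$, make sure you track that the exponent in the final ratio is $2k-\rmin=\Theta(\log n/t)$ and that the power of $\log n$ in the bracket (including the $\delta^{-2k/(2k-\rmin)}$ contribution) is absorbed by the hypothesis---this is exactly the arithmetic the DeepWalk proof carries out, and it transfers directly once $h$ has been effectively reduced to $0$.
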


\begin{proof}
Recall the definition of $X_{p, \alpha}$ from \eqref{eqn:node2vec X-p-def}.
By Markov's inequality, and using Proposition~\ref{prop:node2vec kth moment}, 
\begin{align} \label{eqn:lower tail bound 1 node2vec}
	\Prob\left(|Y_{b,\alpha} - \E Y_{b,\alpha}| > \delta \E Y_{b,\alpha}\right) 
	&\leq \frac{\E (Y_{b,\alpha} - \E 	Y_{b,\alpha})^{2k}}{\delta^{2k} \left(	\E Y_{b,\alpha}\right)^{2k}} \leq  \frac{\E (Y_{b,\alpha} - \E 	Y_{b,\alpha})^{2k}}{\delta^{2k} (L_b)^{2k}},
\end{align}
and moreover, 
\begin{align*}
	\E (Y_{b,\alpha} - \E 	Y_{b,\alpha})^{2k} 
	&= \E \bigg(\sum_{p \in \cP_b} (X_{p,\alpha} - \E X_{p,\alpha})	\bigg)^{2k}
	= \sum_{(p_1, p_2, \ldots, p_{2k}): p_l \in \cP_b} \E \prod_{l=1}^{2k} (X_{p_l,\alpha} - \E X_{p_l,\alpha}),\\
	&=\sum_{(p_1, p_2, \ldots, p_{2k}): p_l \in \cP_b}
	\prod_{l=1}^{2k} \alpha^{N(p_l)} \cdot \E \prod_{l=1}^{2k} (X_{p_l,1} - \E X_{p_l,1}),\\
	&\leq \sum_{(p_1, p_2, \ldots, p_{2k}): p_l \in \cP_b}
	\E \prod_{l=1}^{2k} (X_{p_l,1} - \E X_{p_l,1})
	\numberthis\label{eqn:lower tail bound 2 node2vec}
\end{align*}
where $X_{p_l,1}$ is obtained by plugging $\alpha =1$.
We observe that for any path $p$, $X_{p,1} = X_{p}$ where $X_p$ is as defined in \eqref{eqn:X-p-def}. 
This observation combined with \eqref{eqn:lower tail bound 1 node2vec} and \eqref{eqn:lower tail bound 2 node2vec} implies that we may bound as in the proof of Proposition \ref{prop:Lower tail A^t} to complete this proof.

\end{proof}

\section{Analysis of spectral clustering for node2vec}\label{sec:node2vec}

We analyze the matrix $M_0$ and the eigendecomposition of $M$ in section~\ref{sec:M matrix node2vec}. We then prove Theorem~\ref{prop:node2vec frob norm} in section~\ref{sec:Frob norm node2vec}. We then end with the proof of Theorem~\ref{thm:node2vec misclassified nodes}.

\subsection{Analysis of \texorpdfstring{$M$}{M}-matrix}\label{sec:M matrix node2vec}

We start with the proof of Lemma~\ref{lem:order of M(alpha)} which shows that $M_0$ has an approximate block structure. This then leads to the proof of Proposition~\ref{prop:eigenvector M_0}. 
Using these two results, in Lemma~\ref{lem:Rows of V_0} we then provide bounds for the inner products of rows of $V_0$ similar to the bounds given for DeepWalk in Proposition~\ref{prop:null-matrix}\ref{prop:null-matrix-3}.

	Now we show that $M_0$ has a block structure when $t_L \geq 3$ but not when $t_L=2$.
	We also show that the entries of $M_0$ are $O(1)$ when $t_L \geq 3$.
				
	\begin{lem}\label{lem:order of M(alpha)}
			Suppose that $t_L \geq 2$. 
			Then we have
			\[	M_0 = M_0(\alpha_n) = \log \left(\sum_{t=t_L}^{t_U} \Theta  G_{t,n} \Theta^T + R_{t,n} \right),\] 
			where $G_{t,n}$ is a $K \times K$ matrix, $R_{t,n}$ is a diagonal matrix, $R_{t,n} = 0$ when $t$ is odd, and $(R_{t,n})_{ij} = O(n^{-\frac{t}{2}})$ when $t > 2$ and $t$ is even. 
			
			Further $(G_{t,n})_{ij} = O(1)$ and $M_{ij} = O(1)$ (as a function of $n$) if $t_L > 2$ or if $t_L=2$ and $i \neq j$. If $t_L=2$ and $i =j$ and  $\alpha \to 0$, then $(G_{t,n})_{ii} = 0$ and $M_{ii} \to -\infty$.
		\end{lem}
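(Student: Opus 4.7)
The plan is to start from Proposition~\ref{lem:M for node2vec}, which expresses $(e^{M_0})_{ij}$ as a ratio involving the stationary density $\Prob(\mathbf{w}_1 = i) = |P_{i\star}|/|P|$ and the $t$-step walk probabilities $P^{(t)}_{ij} := \Prob(\mathbf{w}_1 = i, \mathbf{w}_{1+t} = j)$ of the node2vec walk (with $\beta = 1$) on the weighted complete graph having edge-weights $P$. Expanding $P^{(t)}_{ij}$ as a sum over length-$t$ walks and observing that the weights $P_{uv} = \rho_n (B_0)_{g(u)g(v)}$, the second-order normalizers $Z_{v,u} = (\alpha-1)P_{vu} + |P_{v\star}|$, and the initial pair distribution $P_{uv}/|P|$ each depend on their arguments only through block labels, a relabeling argument shows that $P^{(t)}_{ij}$ depends only on $(g(i), g(j))$ when $i \neq j$ and only on $g(i)$ when $i = j$. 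This dichotomy dictates the decomposition: I would define $(G_{t,n})_{rs}$ via the common off-diagonal value of $(l-t)(P^{(t)}_{ij} + P^{(t)}_{ji}) / (2 b \gamma(l,t_L,t_U)\, \Prob(\mathbf{w}_1=i)\Prob(\mathbf{w}_1=j))$ for $g(i)=r, g(j)=s, i\neq j$, and let $(R_{t,n})_{ii}$ absorb whatever extra mass is needed on the diagonal; by construction $R_{t,n}$ is diagonal.

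The next step is to estimate the entries of $G_{t,n}$ and to separate out the exceptional $t_L = 2$ case. The leading contribution to $P^{(t)}_{ij}$ comes from walks whose $t+1$ vertices are all distinct and free of backtracks; summing over such walks with fixed block sequence $(r_0,\ldots,r_t)$ produces a numerator of order $\prod_{k=1}^{t-1} n_{r_k} \prod_{k=0}^{t-1} \rho_n (B_0)_{r_k r_{k+1}}$, while the transition normalizers in the denominator contribute factors of $\Theta(n\rho_n)$. After dividing by $\Prob(\mathbf{w}_1=i)\Prob(\mathbf{w}_1=j) = \Theta(n^{-2})$ the overall ratio is $\Theta(1)$, so $(G_{t,n})_{rs} = O(1)$ and hence $M_{ij} = O(1)$ whenever $t_L \geq 3$, or when $t_L = 2$ and $i \neq j$. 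In the special case $t_L = t_U = 2$ with $i = j$, however, the only length-$2$ walk from $i$ to $i$ on the weighted complete graph is $i \to v \to i$, which is a backtrack step; its transition weight carries an explicit factor $\alpha$, so $P^{(2)}_{ii} = \Theta(\alpha/n^2)$, $(G_{2,n})_{ii} = \Theta(\alpha) \to 0$, and $M_{ii} \to -\infty$ as $\alpha \to 0$.

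For the diagonal residual $R_{t,n}$ with $t_L \geq 3$, I would compare $P^{(t)}_{ii}$ against the extrapolated off-diagonal value $(G_{t,n})_{g(i),g(i)}$, i.e., against $P^{(t)}_{i'j'}$ for generic distinct $i',j'$ in the same block as $i$. Walks whose intermediate vertices are all distinct from the endpoints contribute identical block-structured leading terms in both cases and are absorbed into $G_{t,n}$, so the genuine residual arises only from walks that either revisit an endpoint in the middle or use backtrack steps anchored at the endpoint. For even $t$ the dominant such configuration uses a maximal backtracking pattern, and collecting the associated factors $\alpha^{\mathfrak{n}}$ with $\alpha = O(1/n)$ gives a contribution bounded by $O(n^{-t/2})$, yielding $(R_{t,n})_{ii} = O(n^{-t/2})$. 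For odd $t$, any walk consisting entirely of backtrack steps must oscillate between two vertices and therefore returns to its start only after an even number of steps; no saturated backtracking excursion can complete a length-$t$ return to $i$, and a parity-matching between the surviving anchored diagonal excursions and their off-diagonal counterparts forces the residual to cancel, giving $R_{t,n} = 0$.

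The principal technical obstacle is precisely this final diagonal--off-diagonal comparison: one must certify that every walk configuration counted differently for $i=j$ versus $i\neq j$ in the same block is either of size $O(\alpha^{t/2}) = O(n^{-t/2})$ for even $t$, or is matched by a cancelling configuration for odd $t$. I would handle this by reusing the Type~I/Type~II segment classification of Section~\ref{sec:path counting deepwalk}, now weighted by $\alpha^{\mathfrak{n}}$ to track the number of backtracks, so that anchored backtracking excursions can be isolated from the block-structured bulk and controlled by the parity observation above.
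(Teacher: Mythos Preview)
Your approach and the paper's arrive at the same decomposition, but the paper organizes the argument differently and thereby avoids the step you flag as the ``principal technical obstacle.'' Rather than defining $R_{t,n}$ as the diagonal minus the extrapolated off-diagonal value, the paper decomposes $P^{(t)}_{ij}$ directly by the \emph{backtrack pattern}: for each walk, record the positions $0 = k_0 < k_1 < \cdots < k_{m+1} = t$ of the non-backtrack steps. Configurations in which every gap $k_{r+1}-k_r$ is even are assigned to $R'_{t,n}$; all others go into $N'_{t,n}$. Three facts then fall out immediately: (i) an all-even-gap walk necessarily satisfies $i_t = i_0$, so $R'_{t,n}$ is diagonal; (ii) for odd $t$ there is no all-even-gap pattern (the gaps sum to $t$), so $R'_{t,n}=0$; (iii) for even $t$ an all-even-gap walk has at most $t/2$ free intermediate vertices, giving the $O(n^{-t/2})$ bound. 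The block structure of $N'_{t,n}$, including on the diagonal, is argued by observing that for any not-all-even pattern the summand $E_{b,k,i,j,t}$ depends on $i,j$ only through $g(i),g(j)$.

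Your relabeling argument establishes block structure only for $i\neq j$; it does not by itself show $P^{(t)}_{ii}$ equals the off-diagonal value. The phrase ``parity-matching \ldots\ forces the residual to cancel'' for odd $t$ is the right instinct but slightly mis-framed: there is no cancellation, rather for odd $t$ \emph{every} backtrack pattern has an odd gap and hence lies in the block-structured part, so the residual is empty. The Type I/II classification you invoke at the end is essentially the paper's backtrack-pattern decomposition, and once you promote it from a patch to the primary organizing device the argument goes through cleanly. One small correction: your bound $O(\alpha^{t/2})=O(n^{-t/2})$ presumes $\alpha=O(1/n)$, whereas the paper's $O(n^{-t/2})$ comes from the free-vertex count in all-even-gap walks and does not rely on $\alpha$ being small.
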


		\begin{proof}[Proof of Lemma \ref{lem:order of M(alpha)}]

Let
\begin{align}
	P^{\sss (t)}_{ij} =  \sum_{(i_0, i_1, \ldots, i_t) : i_0 = i, i_t = j} P_{i_0 i_1} P_{i_1 i_2} \cdots P_{i_{t-1} i_t} \frac{1}{\vert P \vert}  \left(\prod_{l=1}^{t-1} \frac{1}{|P_{i_l\star}| -1 + \alpha}\right) \cdot \alpha^{\mathfrak{n}((i_0, i_1, \ldots, i_t))}, \label{eqn:node2vec trans prob} 
\end{align}
be the $t$-step transition probability for node2vec. Then
\begin{align*}
(M_0)_{ij} = \log \bigg(	\frac{\sum_{t = t_L}^{t_U} (l-t) \cdot (P^{\sss (t)}_{ij} + P^{\sss (t)}_{ji}) }{2b \gamma(l, t_L, t_U) \frac{|P_{i\star}|}{|P|}\times  \frac{|P_{j\star}|}{|P|} }	\bigg). \numberthis\label{eqn:M_0 node2vec}
\end{align*}

And we define $M_0'$ as follows.
			\begin{align*}
				(M_0')_{ij} : = \bigg(	\frac{\sum_{t = t_L}^{t_U} (l-t) \cdot (P^{\sss (t)}_{ij} + P^{\sss (t)}_{ji}) }{2b \gamma(l, t_L, t_U) \frac{|P_{i\star}|}{|P|}\times  \frac{|P_{j\star}|}{|P|} }	\bigg). \numberthis \label{eqn: def M'}
			\end{align*}
			We first describe a decomposition of $M_0'$.
			Towards this, for any sequence $k = (k_0, k_1, \ldots, k_{m+1})$ such that $0 = k_0 < k_1 < \cdots < k_m < k_{m+1} = t$, and any path type $b = (b_0, b_1, \ldots, b_t)$ such that $b_0 = g(i)$ and $b_t = g(j)$ we define
			\begin{align*}
				E_{b,k,i,j,t} &:= \sum_{(i_0, i_1, \ldots, i_t) | i_0 = i, i_t = j} \ind\{(g(i_0), g(i_1), \ldots, g(i_t)) = (b_0, b_1, \ldots, b_t)	\}\\ 
				&\quad \ind \{	(i_{l}, i_{l+1}) \text{ is a backtrack for } k_r < l < k_{r+1}, (i_{k_r}, i_{k_{r+1}}) \text{ is not a backtrack for } 1\leq r \leq m	\}\\
				&\quad P_{i_0 i_1} P_{i_1 i_2} \cdots P_{i_{t-1} i_t} \frac{1}{\vert P \vert}  \left(\prod_{l=1}^{t-1} \frac{1}{|P_{i_l\star}| -1 + \alpha}\right) \cdot \alpha^{\mathfrak{n}((i_0, i_1, \ldots, i_t))}.\numberthis\label{eqn:P(i,j,t) dec path types}
			\end{align*}
			$E_{b,k,i,j,t}$ is a sum over paths of length $t$ between nodes $i$ and $j$ with locations of edges which are not backtracks given by the sequence $k$ and the block labels of the vertices along the paths given by the sequence $b$.
			We can note that 
			\begin{align*}
				P^{\sss (t)}_{ij} = \sum_k \sum_b E_{b,k,i,j,t}. \numberthis\label{eqn:P(i,j,t) as sum}
			\end{align*}
			For a given $t$, consider a sequence $k = (k_0, k_1, \ldots, k_{m+1})$ such that $k_{r+1} - k_r$ is odd for some $r$ where $1 \leq r \leq m$. This implies that the endpoints of the path do not have an equality constraint between them.  
			From this we can see that the summands $E_{b,k,i,j}$ in equation \ref{eqn:P(i,j,t) dec path types} depend on $i$ and $j$ only through the block types $g(i)$ and $g(j)$.
			On the other hand, consider a sequence $k$ such that $k_{r+1} - k_r$ is even for $1 \leq r \leq m$. Then $t$ must be even and $m \leq t/2$. 
			By Lemma \ref{lem:choice-vertices} there are $O(n^{\frac{t}{2}})$ paths associated with the sequence $k$. 
			Further, we have the equality constraint $i=j$ for this case.
			Then we define
			\begin{align*}
				(N_{t,n})'_{ij} &:= \sum E_{b,k,i,j,t} \ind\{k = (k_0,k_1, \ldots, k_{m+1}), k_{r+1} - k_r \text{ is odd for some } 1 \leq r \leq m, k_{m+1} = t	\},\\
				(R_{t,n})'_{ij} &:=  \sum E_{b,k,i,j,t} \ind\{k = (k_0,k_1, \ldots, k_{m+1}), k_{r+1} - k_r \text{ is even for } 1 \leq r \leq m, k_{m+1} = t	\},\\ 
				(N_{t,n})_{ij} &:= \left(	\frac{\sum_{t=t_L}^{t_U}(l-t) \cdot \left((N_{t,n})'_{ij} + (N_{t,n})'_{ji}\right)}{2 b \gamma(l,t_L, t_U) \frac{|P_{i\star}|}{|P|} \times \frac{|P_{j\star}|}{|P|}}	\right),\\
				(R_{t,n})_{ij} &:= \left(	\frac{\sum_{t=t_L}^{t_U}(l-t) \cdot \left((R_{t,n})'_{ij} + (R_{t,n})'_{ji}\right)}{2 b \gamma(l,t_L, t_U) \frac{|P_{i\star}|}{|P|} \times \frac{|P_{j\star}|}{|P|}}	\right). \numberthis \label{eqn:N R def}
			\end{align*}
			Then $M_0' = N_{t,n} + R_{t,n}$ by equations \ref{eqn: def M'} and \ref{eqn:P(i,j,t) as sum}. 
			Further, by the discussion in the previous paragraph $N_{t,n}$ is a block matrix with the same block structure as $P$, $R_{t,n} = 0$ when $t$ is odd and when $t$ is even and $i \neq j$, and $(R_{t,n})_{ij} = O(n^{-\frac{t}{2}}) ( N_{t,n})_{ij}$ when $t$ is even, $t > 2$ and $i=j$.
			In the case when $t=2$ and $i=j$, we have $( N_{t,n})_{ij} = 0$.
			This completes the first part of the proof.
			
			Now we describe the order of the coefficients of $M_0$ and $N$. We note that if $\alpha =1$, we can see that $M_0(1)$ is the matrix in the DeepWalk in which case by Proposition~\ref{prop:null-matrix} we have $(M_0(1))_{ij} = O(1)$.
			Thus more generally $(M_0)_{ij} = O(1)$ iff for all $t_L \leq t \leq t_U$
			\begin{align*}
				\frac{\sum_{(i_0, i_1, \ldots, i_t) : i_0 = i, i_t = j} P_{i_0 i_1} P_{i_1 i_2} \cdots P_{i_{t-1} i_t} \frac{1}{\vert P \vert}  \left(\prod_{l=1}^{t-1} \frac{1}{|P_{i_l\star}| -1 + \alpha}\right) \cdot \alpha^{\mathfrak{n}((i_0, i_1, \ldots, i_t))}}
				{\sum_{(i_0, i_1, \ldots, i_t) : i_0 = i, i_t = j} P_{i_0 i_1} P_{i_1 i_2} \cdots P_{i_{t-1} i_t} \frac{1}{\vert P \vert}  \left(\prod_{l=1}^{t-1} \frac{1}{|P_{i_l\star}|}\right)} = \theta(1). \numberthis \label{eqn:M alpha ratio}
			\end{align*} 
			Towards this note that $\frac{p_{i_l}}{p_{i_l} -1 + \alpha} \to 1$ as $p_{i_l} = \theta(n\rho_n)$.
			Next consider paths between $i$ and $j$ without any backtracks i.e. $\mathfrak{n}((i_0, i_1, \ldots, i_t)) = 0$. There are $\theta(n^{t-1})$ such paths if $t >2$ and if $t=2$ and $i \neq j$. 
			Now consider paths with $\mathfrak{n}((i_0, i_1, \ldots, i_t)) > 0$ and consider the case $t>2$.
			Then there are $O(n^{t-2})$ such paths and so the contribution from such paths is of a smaller order.
			Thus the fraction in  \eqref{eqn:M alpha ratio} tends to $1$.
			We can note that the leading term, i.e. paths with $\mathfrak{n}((i_0, i_1, \ldots, i_t)) = 0$, is a summand in the definition of $(N')_{t,n}$ in \eqref{eqn:N R def}. Thus $(N'_{t,n})_{ij} = \theta(1)$.
			
			Now consider the case when $t=2$.
			If $t=2$ and $i \neq j$, there cannot be a backtrack and so the leading contribution is from the case when $\mathfrak{n}((i_0, i_1, \ldots, i_t)) = 0$.
			Thus, the fraction in \eqref{eqn:M alpha ratio} tends to $1$. In this case $(R_{2,n})'_{ij} = 0$ and $(N_{2,n})'_{ij} = O(1)$.
			If $t=2$ and $i=j$ then the fraction in \eqref{eqn:M alpha ratio} tends to $0$ when $\alpha \to 0$.
			In this case $(N_{2,n})'_{ii} = 0$ and $(R_{2,n})'_{ii} \to 0$.
			This completes the proof.
			
		\end{proof}
		
		Now we are ready to prove Proposition \ref{prop:eigenvector M_0}.
		
		\begin{proof}[Proof of Proposition \ref{prop:eigenvector M_0}]
			Let $\tilde{\log}: [0, \infty] \to \bbR$ be defined by
			\begin{align*}
				x &\mapsto \log x, \quad x > 0,\\
				x &\mapsto 0, \quad x=0.
			\end{align*}
			To simplify notation, we will write $\log$ for $\tilde{\log}$ at all places in the proof.
			By Lemma \ref{lem:order of M(alpha)}, $(\log (Z_0))_{ij}$ is $O(1)$. So by approximating $\log$ around $1$ we have
			\begin{align*}
				M_0 &= \log \left(\Theta Z_0 \Theta^T + R\right),\\
				&= \Theta \log Z_0 \Theta^T + R',
			\end{align*} 
			where $R'$ is a diagonal matrix and $R'_{ii} = O(n^{-2})$ as we consider $t_L > 2$ for node2vec.
			By Proposition~\ref{prop:null-matrix}\ref{prop:null-matrix-2} (applied to $\Theta \log Z_0 \Theta^T$) and since $\log Z_0$ has rank $K$, $\Theta \log Z_0 \Theta^T$ has $K$ non-zero eigenvalues, $\tilde{\lambda}_i$, where $\tilde{\lambda}_i = \theta(n)$. 
			By Weyl's theorem on eigenvalues, $M_0$ has rank $K$ and each of the $K$ non-zero eigenvalues is $\theta(n)$.
			Now let $v_i$, $1 \leq i \leq K$ be the columns (i.e. eigenvectors of $M_0$) in $V_0$. Then 
			\begin{align*}
				\lambda_i v_i  = M_0 v_i =   \Theta \log Z_0 \Theta^T v_i  + R'v_i, \numberthis\label{eqn:expand M_0 v_i}\\
				\implies v_i = \Theta \left( \lambda_i^{-1} \log Z_0 \Theta^T v_i \right) + \lambda^{-1}_i R' v_i.
			\end{align*}
			Let $\Lambda = \diag{\lambda_1, \lambda_2, \ldots, \lambda_K}$.
			Then taking $X_0 = \log Z_0 \Theta^T V_0 \Lambda^{-1}$ and $E_0 = R' V \Lambda^{-1}$ completes the proof.
		\end{proof}	
		
		Next, we compute bounds on the inner products of rows of $V_0$. The bounds are similar to the DeepWalk case except that we have small error terms.
		
		\begin{lem}\label{lem:Rows of V_0}
			Consider the decomposition $M_0 = \log \left(	\Theta Z_0 \Theta^T + R \right)$. 
			Let $V_0 \in \bbR^{n \times K}$ be the matrix of top $K$ left singular vectors of $M_0$ and let $V_0 = \Theta X_0 + R_0$ be the decomposition from Proposition \ref{prop:eigenvector M_0}. Then 
			\begin{align*}
				\langle (V_0)_{i \star} , (V_0)_{j \star}  \rangle = \ind\left\{g_i = g_j\right\}\left( \Theta \left(\frac{1}{\sqrt{n_1}}, \frac{1}{\sqrt{n_2}}, \ldots, \frac{1}{\sqrt{n_K}}\right)^T\right)_{i} + O(n^{-2.5}).
			\end{align*}
			If $i$ and $j$ are two nodes such that $g_i \neq g_j$, then we have
			\begin{align*}
				\frob{ (V_0)_{i \star} - (V_0)_{j \star} } = \sqrt{\frac{1}{n_{g_i}} + \frac{1}{n_{g_j}}} + O(n^{-3}). 
			\end{align*}
		\end{lem}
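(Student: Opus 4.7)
The plan is to leverage the decomposition $V_0 = \Theta X_0 + R_0$ provided by Proposition~\ref{prop:eigenvector M_0}, where $(R_0)_{ij} = O(n^{-3})$, and mirror the DeepWalk computation in Proposition~\ref{prop:null-matrix}, carefully carrying along the $R_0$ perturbation. Setting $D = \diag{\sqrt{n_1},\dots,\sqrt{n_K}}$, the basic identity is
\begin{align*}
\langle (V_0)_{i\star}, (V_0)_{j\star}\rangle
&= \langle (X_0)_{g_i \star}, (X_0)_{g_j \star}\rangle
+ \langle (X_0)_{g_i \star}, (R_0)_{j\star}\rangle \\
&\quad + \langle (R_0)_{i\star}, (X_0)_{g_j \star}\rangle
+ \langle (R_0)_{i\star}, (R_0)_{j\star}\rangle,
\end{align*}
so the task reduces to computing the leading term $\langle (X_0)_{g_i \star}, (X_0)_{g_j \star}\rangle$ and showing the three remainder terms are negligible at the claimed precision.

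First I would establish the pointwise bound $|(X_0)_{kr}| = O(n^{-1/2})$ by averaging $(V_0)_{ir}$ over all $i$ with $g_i = k$: because $(\Theta X_0)_{ir}$ depends on $i$ only through $g_i$, this average equals $(X_0)_{kr}$ up to an $O(n^{-3})$ perturbation from $R_0$, and Cauchy--Schwarz combined with $\frob{V_0}^2 = K$ gives $|(X_0)_{kr}| \leq \sqrt{K/n_k} + O(n^{-3}) = O(n^{-1/2})$. Next, the orthonormality $V_0^T V_0 = I_K$ yields
\begin{align*}
X_0^T D^2 X_0 = I_K - X_0^T \Theta^T R_0 - R_0^T \Theta X_0 - R_0^T R_0.
\end{align*}
The two cross correction terms are $O(n \cdot n^{-1/2} \cdot n^{-3}) = O(n^{-5/2})$ entry-wise using the Step~1 bound on $X_0$, and the last is $O(n \cdot n^{-6})$ entry-wise. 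Thus $(DX_0)^T(DX_0) = I_K + E$ with $\frob{E} = O(n^{-5/2})$, and since $DX_0$ is $K \times K$ with $K$ fixed, inverting (via SVD of $DX_0$ or a Neumann series for $(I+E)^{-1}$) produces $(DX_0)(DX_0)^T = I_K + O(n^{-5/2})$ in Frobenius norm. Multiplying on both sides by $D^{-1}$ gives $(X_0 X_0^T)_{rs} = \delta_{rs}/n_r + O(n^{-7/2})$, and hence the leading term of the row inner product is $\delta_{g_i g_j}/n_{g_i} + O(n^{-7/2})$.

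The three remainder terms are each bounded by $O(K \cdot n^{-1/2} \cdot n^{-3}) = O(n^{-7/2})$ or $O(K \cdot n^{-6})$ using the entry bounds on $X_0$ and $R_0$, so combining all contributions yields the first claim (the error is in fact $O(n^{-7/2})$, well within the stated $O(n^{-5/2})$, with the leading coefficient matching the vector $(\Theta v)_i$ with $v_k = 1/n_k$). The second claim follows by expanding
\begin{align*}
\frob{(V_0)_{i\star} - (V_0)_{j\star}}^2 = \frob{(V_0)_{i\star}}^2 + \frob{(V_0)_{j\star}}^2 - 2\langle (V_0)_{i\star}, (V_0)_{j\star}\rangle,
\end{align*}
applying the first claim three times to obtain $1/n_{g_i} + 1/n_{g_j} + O(n^{-7/2})$, and then extracting the square root via $\sqrt{x + \varepsilon} = \sqrt{x} + O(\varepsilon/\sqrt{x})$ with $x = \Theta(n^{-1})$ and $\varepsilon = O(n^{-7/2})$, which produces the claimed $O(n^{-3})$ additive error. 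The main obstacle is controlling the $R_0$-cross terms uniformly; this is the reason the pointwise bound $|(X_0)_{kr}| = O(n^{-1/2})$ in Step~1 is indispensable, since individual entries of $V_0$ could in principle be much larger than $n^{-1/2}$, and only the averaging argument exploiting the block structure yields the required control. Once this bound is secured, the remaining steps are routine linear algebra and first-order Taylor expansion.
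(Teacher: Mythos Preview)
Your argument is correct, and in fact yields a sharper error of $O(n^{-7/2})$ for the row inner products than the paper's $O(n^{-2.5})$. However, the route is genuinely different from the paper's. The paper does not work with the decomposition $V_0=\Theta X_0+R_0$ directly; instead it writes $M_0 = \Theta(\log Z_0)\Theta^T + R'$ with $R'$ diagonal and $\frob{R'}=O(n^{-3/2})$, lets $V$ be the top-$K$ eigenvectors of the \emph{exact} block matrix $N=\Theta(\log Z_0)\Theta^T$, applies the DeepWalk result (Proposition~\ref{prop:null-matrix}) to $N$ to obtain the exact row inner products of $V$, and then invokes Davis--Kahan to get $\frob{V_0 - VO}=O(n^{-2.5})$, from which the inner product bounds for $V_0$ follow. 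Your approach is intrinsic: it extracts everything from the single relation $V_0^TV_0=I_K$ and the entrywise size of $R_0$, using the key step that $(DX_0)^T(DX_0)$ and $(DX_0)(DX_0)^T$ share eigenvalues so both are $I_K+O(n^{-5/2})$ in Frobenius norm. The paper's approach is shorter because it recycles the DeepWalk computation wholesale via a perturbation argument; yours avoids a second appeal to Davis--Kahan and the auxiliary eigenvector matrix $V$, at the cost of the inversion/SVD step and the preliminary pointwise bound $|(X_0)_{kr}|=O(n^{-1/2})$ (which, as you note, is the one nontrivial ingredient). Both deliver what is needed for the downstream misclassification bound.
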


		\begin{proof}[Proof of Lemma \ref{lem:Rows of V_0}]
			As shown in proof of Proposition \ref{prop:eigenvector M_0} 
			\begin{align*}
				M_0 = \Theta \log Z_0 \Theta^T + R',
			\end{align*}
			where $R'$ is a diagonal matrix and $R'_{ii} = O(n^{-2})$.
			Let $V \in \bbR^{n \times K}$ be matrix of top $K$ left singular vectors of $N = \Theta \log Z_0 \Theta^T$.
			Then by Proposition~\ref{prop:null-matrix}\ref{prop:null-matrix-2}\ref{prop:null-matrix-3} (applied to $N$), $V = \Theta X$ satisfying the following: 
			\begin{enumerate}
				\item $X \in \bbR^{K \times K}$ and has $K$ distinct rows. And so, $V$ has $K$ distinct rows.
				\item Rows of $X$ are orthogonal and the row norms are given by $\Theta \left(\frac{1}{\sqrt{n_1}}, \frac{1}{\sqrt{n_2}}, \ldots, \frac{1}{\sqrt{n_K}}\right)^T$.
				\item Let $\tilde{\lambda}_i$ for $1 \leq i \leq K$ be the top $K$ eigenvalues. Then $\tilde{\lambda}_i = \theta(n)$.
			\end{enumerate} 
			Let $\lambda_i$ be the top $K$ eigenvalues of $V_0$.
			By Weyl's theorem on eigenvalues, $\vert \lambda_i - \tilde{\lambda}_i \vert \leq O(n^{-1.5})$. 
			Let $\hat{E}_i$ (resp. $E_i$) be the eigenspace corresponding to $\tilde{\lambda}_i$ (resp. $\lambda_i$).
			Let $V_{\hat{E}}$ and $(V_0)_E$ be the eigenvectors corresponding to the eigenspaces.
			Then by Davis-Kahan theorem there exists $O$ such that
			\begin{align*}
				\frob{(V_0)_E - V_{\hat{E}}O} \leq \frac{\frob{R'}}{\theta(n)} = O(n^{-2.5}).
			\end{align*}
			We can note that if we replace $V_{\hat{E}}$ by $V_{\hat{E}}O$, then we continue to have the three properties for $V$ listed above. 
			As a consequence, $\frob{V - V_0} = O(n^{-2.5})$ which implies the first part of the result.
			
			For the second part we have for any $i,j$ such that $g(i) \neq g(j)$
			\begin{align*}
				\frob{ (V_0)_{i \star} - (V_0)_{j \star} } &= \sqrt{\frob{(V_0)_{i \star}}^2 + \frob{(V_0)_{j \star}}^2 
					-2 \langle (V_0)_{i \star} , (V_0)_{j \star} \rangle},\\
				&= \sqrt{\frac{1}{n_{g_i}} + \frac{1}{n_{g_j}} + O(n^{-2.5})},\\
				&= \sqrt{\frac{1}{n_{g_i}} + \frac{1}{n_{g_j}}} + O(n^{-3}).
			\end{align*}
		\end{proof}

\subsection{Bound on \texorpdfstring{$\frob{M-M_0}$}{Frobenius norm M-M0}.}\label{sec:Frob norm node2vec}
We provide a proof of Theorem \ref{prop:node2vec frob norm} in this section.
	\begin{proof}[Proof of Theorem \ref{prop:node2vec frob norm}]
			
			To begin the proof we can note that $M_{ij} = 0$ iff $\sum_{t=t_L}^{t_U} A^{(t)}_{ij} = 0$. Using the same arguments as in the proof of Proposition~\ref{prop:Frob norm t > 2} and Theorem~\ref{prop:Frob norm tL to tU} along with the lower tail inequality in Proposition~\ref{prop:Lower tail A^t} we can conclude that 
			\begin{align*}
				\Prob\left({M}_{ij} = 0 \text{ for some } i,j		\right) = o(1).
			\end{align*}
			Thus we can assume that ${M}_{ij} \neq 0$ for the rest of the proof.

			Let $a_n = 4n(\log n)^{-\eta}$ and
			let 
			$b_{n} = \exp\left\{\frac{a_n}{\sqrt{2}n}\right\}$. 
			Recall from \eqref{eqn:node2vec trans prob} that $P^{\sss (t)}_{ii}$ is the $t$-step transition probability for node2vec, and recall from \eqref{eqn:M_0 node2vec} the form of $M_0$ for node2vec.
			Then we have
			\begin{align*}
				&\Prob\left(	\frob{{M} - M_0}  \geq a_n	\right) 
				= \Prob\left( \sum_{1 \leq i,j \leq n}	\log^2 \left(	\frac{M_{ij}}{(M_0)_{ij}}	\right) \geq a_n^2	\right),\\
				&\leq o(1) + 
				\sum_{1 \leq i, j \leq n} \sum_{t=t_L}^{t_U} \Prob\left(	\frac{\Prob\left( \blw^{(1)}_{1} =i , \blw^{(1)}_{1+t} = j \right) \frac{|P_{i\star}|}{|P|} \times \frac{|P_{j\star}|}{|P|} }
				{P^{\sss (t)}_{ij}\Prob\left(\blw^{(1)}_{1} = i\right) \Prob\left(\blw^{(1)}_{1} = j\right)} \geq b_{n}	\right)\\
				&\quad + \sum_{1 \leq i, j \leq n} \sum_{t=t_L}^{t_U} \Prob\left(	\frac{P^{\sss (t)}_{ij}\Prob\left(\blw^{(1)}_{1} = i\right) \Prob\left(\blw^{(1)}_{1} = j\right)}
				{\Prob\left( \blw^{(1)}_{1} =i , \blw^{(1)}_{1+t} = j \right) \frac{|P_{i\star}|}{|P|} \times \frac{|P_{j\star}|}{|P|} } \geq b_{n}	\right). \numberthis\label{eqn:node2vec frob overall bd 1}
			\end{align*}

			Fix $t_L \leq t \leq t_U$ and $i \neq j$. We show how to bound a typical term in the last summand in \eqref{eqn:node2vec frob overall bd 1}. 
			Recall from \eqref{eqn:P_b def} that $\cP_b$ is the set of paths with vertices having community assignment $b$ for $b\in \cB_{i,j}$.
			Similar to the proof of Proposition \ref{prop:Frob norm t > 2}, 
			for $p = (i_0,\dots,i_t)\in \cP_b$, let 
\begin{gather*}\numberthis\label{eqn:def node2vec X_b, E_b}
    \bar{X}_{p, \alpha} = \frac{A_{i_0 i_1}}{|A_{i_0\star}|} \frac{1}{|A_{i_t\star}|} \left(\prod_{l=1}^{t-1} \frac{A_{i_{l}i_{l+1}}}{|A_{i_l\star}| - 1 + \alpha}\right) \cdot \alpha^{\mathfrak{n}(p)},\quad \text{and} \quad  \bar{Y}_{b,\alpha} = \sum_{p\in \cP_b} \bar{X}_{p, \alpha},\\ 
    \bar{X}_{p,\alpha}^* = \frac{P_{i_0 i_1}}{|P_{i_0\star}|} \frac{1}{|P_{i_t\star}|} \left(\prod_{l=1}^{t-1} \frac{P_{i_{l}i_{l+1}}}{|P_{i_l\star}| - 1 + \alpha}\right) \cdot \alpha^{\mathfrak{n}(p)}, \quad \text{and} \quad \bar{Y}_{b,\alpha}^* = \sum_{p\in \cP_b} \bar{X}_{p,\alpha}^*.
\end{gather*}

			Then we can write 
			\begin{align*}
				&\Prob\left(	\frac{P^{\sss (t)}_{ij}\Prob\left(\blw^{(1)}_{1} = i\right) \Prob\left(\blw^{(1)}_{1} = j\right)}
				{\Prob\left( \blw^{(1)}_{1} =i , \blw^{(1)}_{1+t} = j \right) \frac{|P_{i\star}|}{|P|} \times \frac{|P_{j\star}|}{|P|}}
				\geq b_{n}	\right)\\
				&= \Prob\left(	\frac{\vert P \vert}{\vert A \vert}
				\frac{\sum_{b \in \cB_{i,j}} {\bar{Y}_{b,\alpha}^*} }{\sum_{b \in \cB_{i,j}} \bar{Y}_{b,\alpha} } \geq b_{n} \right) 
				\leq \sum_{b \in \cB_{i,j}} \Prob\left( \frac{\vert P \vert}{\vert A \vert} 
				\frac{\bar{Y}_{b,\alpha}^*}{\bar{Y}_{b,\alpha}} \geq b_{n}		\right) + o(n^{-4}).
				\numberthis \label{eqn:node2vec frob overall bd 2}
			\end{align*}
			The last term $o(n^{-4})$ above, obtained by an application of Proposition~\ref{prop:Lower tail node2vec}, is a bound on the probability $\Prob\left(X_b = 0\right)$ so that $\frac{\bar{Y}_{b,\alpha}^*}{\bar{Y}_{b,\alpha}}$ is well-defined for the rest of the arguments below.
			We now show how to bound the typical summand in \eqref{eqn:node2vec frob overall bd 2}. Towards this we use the Chernoff bound as in the proof of Proposition~\ref{prop:Frob norm t > 2} to bound $\frac{\vert P \vert}{\vert A \vert}$.
			We next show how to bound the fractions coming from the degree terms in \eqref{eqn:def node2vec X_b, E_b}.
			Towards this we note that for any $ 0 < \delta < 1$ and for any $l \in [n]$ we have 
			\begin{align*}
				\Prob\left(	|A_{i_l\star}| -1 + \alpha \leq (1-\delta) (|P_{i_l\star}| -1 + \alpha) 	\right) 
				&= \Prob\left(|A_{i_l\star}| \leq \left(	(1-\delta) + \frac{\delta (1-\alpha)}{|P_{i_l\star}|}	\right) |P_{i_l\star}|\right),\\
				&\leq \Prob\left(|A_{i_l\star}| \leq \left(1 - \frac{\delta}{2}	\right) |P_{i_l\star}|\right),
			\end{align*}
			as $|P_{i_l\star}| = \theta(n \rho_n)$. By similar reasoning we also have for  $ 0 < \delta < 1$ and for any $l \in [n]$
			\begin{align*}
				\Prob\left(	|A_{i_l\star}| -1 + \alpha \geq (1+\delta) (|P_{i_l\star}| -1 + \alpha) 	\right) 
				\leq \Prob \left(	|A_{i_l\star}| \geq \left(1 + \frac{\delta}{2}\right) |P_{i_l\star}|	\right).
			\end{align*}
			These inequalities combined with the Chernoff bound help bound $\Prob\left( \frac{\vert P \vert}{\vert A \vert} 
				\frac{\bar{Y}_{b,\alpha}^*}{\bar{Y}_{b,\alpha}} \geq b_{n}		\right)$ as follows. Define
			\begin{gather*}\label{eqn:def E'b X'b}
			    {X}_{p,\alpha}^* = \prod_{l=1}^t {P_{i_{l-1}i_l}} \alpha^\mathfrak{n}, \quad \text{and} \quad {Y}_{b,\alpha}^* = \sum_{p\in \cP_b} {X}_{p,\alpha}^*.
			\end{gather*}
				
			Then we have 
			\begin{align*}
				\Prob\left( \frac{\vert P \vert}{\vert A \vert} 
				\frac{\bar{Y}_{b,\alpha}^*}{\bar{Y}_{b,\alpha}} \geq b_{n}		\right)
				&\leq \Prob\left( \frac{{Y}_{b,\alpha}^*}{{Y}_{b,\alpha}} \geq \exp \left\{
				\frac{a_n}{\sqrt{2}n} -
				\theta \left(\frac{1}{\sqrt{n}}\right) - 
				\theta \left(	\sqrt{\frac{\log n}{n\rho_n}}	\right) 
				\right\}		\right) + O(n^{-4}),\\
				&\leq \Prob\left( \frac{{Y}_{b,\alpha}^*}{{Y}_{b,\alpha}} \geq 1 + (\log n)^{-\eta}
				\right) + O(n^{-4}), \numberthis \label{eqn:node2vec frob overall bd 3}
			\end{align*}
			as $a_n = 4n(\log n)^{-\eta}$.

			Fix $(i,j)$. We estimate the difference 	
			\begin{align*}
				\left| \E {Y}_{b,\alpha} - {Y}_{b,\alpha}^* \right| &= \left|  \sum_{p \in \cP_b} \left( \E A_{i_0 i_1} \cdots A_{i_{t-1} i_t} -  P_{i_0 i_1} \cdots P_{i_{t-1} i_t}\right) \alpha^\mathfrak{n} \right|,\\
				&\leq O\bigg(\frac{1}{n\rho_n}\bigg) \E Y_{b,1} = O\bigg(\frac{1}{n\rho_n}\bigg) \E Y_{b,\alpha},
				\numberthis \label{eqn:E X'_b - E'_b diff}
			\end{align*}
			where the inequality follows from \eqref{eqn:deepwalk bd EX'b E'b 1} in the proof of Proposition~\ref{prop:Frob norm t > 2} for DeepWalk as $\alpha \leq 1$, and the last equality follows as both $\E Y_{b,\alpha}$ and $\E Y_{b,1}$ are $\theta(n^{t-1} \rho_n^t)$ by Proposition~\ref{prop:node2vec kth moment}.
			These computations show that 
        \begin{gather*}
    \frac{{Y}_{b,\alpha}}{{Y}_{b,\alpha}^*}= \frac{{Y}_{b,\alpha}}{\E {Y}_{b,\alpha} \left( 1 + O((n \rho_n)^{-1}) \right)}, 
    \quad \text{and} \quad
    \frac{{Y}_{b,\alpha}^*}{{Y}_{b,\alpha}}= \frac{\E {Y}_{b,\alpha} \left( 1 + O((n \rho_n)^{-1}) \right)}{{Y}_{b,\alpha}}.
    \numberthis \label{eqn: Y_b Y_b* bound node2vec}
    \end{gather*}

			With this bound we complete our calculation from \eqref{eqn:node2vec frob overall bd 3}.
			We use Proposition~\ref{prop:Lower tail node2vec} and \eqref{eqn: Y_b Y_b* bound node2vec} to conclude that
			\begin{align*}
			\Prob\left( \frac{{Y}_{b,\alpha}^*}{{Y}_{b,\alpha}} \geq 1 + (\log n)^{-\eta}
				\right) = O(n^{-3}).
			\end{align*}
			This completes the argument to show that the last term in \eqref{eqn:node2vec frob overall bd 1} goes to $0$. 
			The second term in  \eqref{eqn:node2vec frob overall bd 1} can be bounded similarly. This completes the first part of the proof.

			The second part of the proof, for the range $n^{t_U-1}\rho_n^{t_U} \ll 1$, is similar to the analogous proof in the proof of Theorem~\ref{prop:Frob norm tL to tU} except the following two changes:
			\begin{enumerate}
				\item Lemma \ref{lem:order of M(alpha)} is used to show that the entries of $M_{ij}$ are $O(1)$.
				\item We use Proposition~\ref{prop:node2vec kth moment} to bound $\E Y_{b, \alpha}$.
			\end{enumerate}

		\end{proof}

\subsection{Bounding the number of missclassified nodes}\label{sec:prediction error node2vec}
We provide a proof of Theorem~\ref{thm:node2vec misclassified nodes} in this section.
\begin{proof}[Proof of Theorem~\ref{thm:node2vec misclassified nodes}]
For this proof, we choose $O \in \bbR^{K \times K}$ obtained by an application of Proposition~\ref{lem:eigenvector bound} which satisfies
\begin{align}
	\frob{V - V_0O} \leq \frac{\frob{M - M_0}}{Cn} = \SOP(1), \label{eqn:thm1 Davis Kahan node2vec}
\end{align}
where the last step follows using Theorem~\ref{prop:node2vec frob norm}.
To simplify notation, we denote $U =V_0O$ in the rest of the proof.
Let $n_{\max} = \max_{r \in [K]} n_r$. 
Recall $(\bar{\Theta},\bar{X})$ from \eqref{eqn:kmeans2} and let $\bar{V} = \bar{\Theta}\bar{X}$.
Then let 
\begin{align*}
	S = \bigg\{	i \in [n] : \frob{\bar{V}_{i \star}  - U_{i \star}} \geq \frac{1}{5} \sqrt{\frac{2}{n_{\max}}}	\bigg\}.
\end{align*}
We will show that for $i \notin S$ the community is predicted correctly using $\bar{\Theta}$.
The proof of the theorem is in two steps.
			
\paragraph*{Step 1: Bounding $\vert S \vert$.}
By the definition of $S$ we have
\begin{align*}
	\sum_{i \in S} \sqrt{\frac{2}{n_{\max}}} \leq \sum_{i \in S} 5 \frob{\bar{V}_{i \star}  - U_{i \star}}
	\leq 5 \frob{\bar{V}  - U}
	\implies  \vert S \vert \leq  \frac{5}{\sqrt{2}} \sqrt{n_{\max}}  \frob{\bar{V} - U}.
	\numberthis \label{eqn:|S| bound 1 node2vec}
\end{align*}
Next, we recall the optimization problem in \eqref{eqn:kmeans2} is given as follows.
	\begin{align*}
		\frob{ \bar{\Theta} \bar{X} - V }^2 \leq (1+\varepsilon)  \min_{\Theta \in \{0,1\}^{n\times K}, X \in \bbR^{K \times K}} \frob{\Theta X - V }^2. 
	\end{align*}
	We substitute $U$ for $\Theta X$ to get the following upper bound:
			\begin{align}
				\frob{\bar{V} - V}^2 \leq (1 + \varepsilon) \frob{U - E_0 - V}^2. \label{eqn:Vbar - V bound node2vec}
			\end{align}
			Then by \eqref{eqn:thm1 Davis Kahan node2vec} and \eqref{eqn:Vbar - V bound node2vec} we have 
			\begin{align*}
				\frob{\bar{V} - U} \leq \frob{\bar{V} - V} + \frob{V- U}
				\leq (1 + o(n^{-1.5}) + \sqrt{1+\varepsilon})  \frob{V- U}
				= \SOP(1).
			\end{align*}
			This combined with \eqref{eqn:|S| bound 1 node2vec} we have $\vert S \vert = \SOP(\sqrt{n})$.
			
			\paragraph*{Step 2:  Bounding the prediction error.}
			For any community $r$, there exists $i_r \in [n]$ such that $g_{i_r} =r$ and $i_r \notin S$ as $n_r = \theta(n)$ and $\vert S \vert = \SOP(\sqrt{n})$.
			By Lemma \ref{lem:Rows of V_0}, for $r\neq s$ we have
			\begin{align*}
				\frob{\bar{V}_{i_r \star} - \bar{V}_{i_s \star}} &\geq \frob{ U_{i_r \star} - U_{i_s \star} } - \frob{\bar{V}_{i_r \star} - U_{i_r \star}}	- \frob{\bar{V}_{i_s \star} - U_{i_s \star}},\\
				&\geq \sqrt{\frac{1}{n_{r}} + \frac{1}{n_{s}}} + O(n^{-3}) - \frac{2}{5}  \sqrt{\frac{2}{n_{\max}}},\\
				&\geq \frac{2.9}{5} \sqrt{\frac{2}{n_{\max}}}. \numberthis\label{eqn:step 2 bd 1 node2vec}
			\end{align*}
			Next, let $i$ be such that $g_i = r$ and $i \notin S$. We show that $\bar{V}_{i \star} = \bar{V}_{i_r \star}$. 
			For this we note that $U_{i \star} = U_{i_r \star}$ and
			\begin{align*}
				\frob{\bar{V}_{i \star} - \bar{V}_{i_r \star}} &\leq \frob{\bar{V}_{i \star} - U_{i \star}} +
				\frob{U_{i_r \star} - \bar{V}_{i_r \star}},\\
				&< \frac{1}{5} \sqrt{\frac{2}{n_{\max}}} + \frac{1}{5} \sqrt{\frac{2}{n_{\max}}} + O(n^{-3}),\\
				&< \frac{2.1}{5} \sqrt{\frac{2}{n_{\max}}}.  \numberthis\label{eqn:step 2 bd 2 node2vec}
			\end{align*}
			In view of \eqref{eqn:step 2 bd 1 node2vec} and \ref{eqn:step 2 bd 2 node2vec} we must have $\bar{V}_{i \star} = \bar{V}_{i_r \star}$.
			Let $C$ be a permutation matrix defined so that $\Theta_0 C$ assigns community $r$ to node $i_r$, for $1 \leq r \leq K$.
			Then we have, 
			\begin{align*}
				\sum_i \ind\{	\bar{\Theta}_{i\star} \neq (\Theta_0 C)_{i \star}	\} \leq \vert S \vert.
			\end{align*}
			From the bound on $\vert S \vert$ from Step 1, we have that $\err(\bar{\Theta},\true)$ is $\SOP\left(1/\sqrt{n}\right)$ and this completes the proof of Theorem~\ref{thm:node2vec misclassified nodes}.
\end{proof}

\bibliographystyle{abbrv}

\appendix 

\section{Properties of optimizers for embedding algorithms} \label{appendix-1}
\begin{proof}[Proof of Proposition~\ref{lem:factorization}]
Note that 
\begin{align*}
    \E_{l\sim P_{\sss C}} \big[ \log \sigma(-\langle \blf_i, \blf_{l}'\rangle ) )\big]  = \sum_{j'=1}^n \frac{|C_{\star j'}|}{|C|}\log \sigma(-\langle \blf_i, \blf_{j'}'\rangle ) ). 
\end{align*}
Thus, \eqref{eq:loss2} reduces to 
    \begin{align} \label{eq:loss3}
		L_{\sss C}(F,F') = \sum_{i,j=1}^{n} 
		\bigg[C_{ij}  \log \sigma(\langle \blf_i, \blf_{j}'\rangle) + b \frac{|C_{i \star}| |C_{\star j}|}{|C|}\log \sigma(-\langle \blf_i, \blf_{j}'\rangle) \bigg].
	\end{align}
Let $\ell_{\sss C}(\langle \blf_i,\blf_j'\rangle)$ be the $(i,j)$-th summand above. 
Defining $x = \langle \blf_i,\blf_j'\rangle$, we optimize $\ell_{\sss C}(x)$.  Taking partial derivative with respect to $x$, 
\begin{align*}
    \frac{\partial \ell_{\sss C}}{\partial x} = \frac{C_{ij}}{1+\e^x} -  b \frac{|C_{i \star}| |C_{\star j}|}{|C|} \times \frac{\e^x}{1+\e^x}.
\end{align*}
Equating the derivative to zero,  we have $x= \log \big(\frac{C_{ij} \cdot  |C|}{ |C_{i \star}| |C_{\star j}|}  \big) - \log b$, and thus, 
\begin{align} \label{eq:fixed-point}
     \langle \blf_i,\blf_j'\rangle = \log \bigg(\frac{C_{ij} \cdot  |C|}{ |C_{i \star}| |C_{\star j}|}  \bigg) - \log b = (M_{\sss C})_{ij}. 
\end{align}
Thus, if $\bar{M}_{\sss C} = F F'^T$, then $\langle F_{i \star} , F'_{j \star} \rangle$ satisfies \eqref{eq:fixed-point}. 
\end{proof}
	
	\begin{proof}[Proof of Lemma \ref{lem:M for node2vec}]
			We recall that co-occurences are given by 
			\begin{align*}
				C_{ij} = \sum_{t=t_L}^{t_U} \sum_{m=1}^{r} \sum_{k=1}^{l-t} \ind\left\{	\blw^{(m)}_k = i, \blw^{(m)}_{k+t} = j	\right\} + \ind\left\{	\blw^{(m)}_k =j, \blw^{(m)}_{k+t} = i	\right\}.
			\end{align*}
			By the strong law of large numbers, 
			\begin{align*}
				\frac{\sum_{m=1}^{r} \ind\left\{	\blw^{(m)}_k = i, \blw^{(m)}_{k+t} = j	\right\}}{r} \xrightarrow[r \to \infty]{a.s.} \Prob\left(	\blw^{(1)}_{k} = i, \blw^{(1)}_{k+t} = j	\right). \numberthis \label{eqn:Cij limit 1}
			\end{align*}
			The initial distribution for DeepWalk is an invariant distribution for each of the random walks. We now note that the initial distribution for node2vec is also invariant for the walks for node2vec, i.e.:
			\begin{align*}
				\Prob\left(	\blw^{(1)}_{1} = i, \blw^{(2)}_{1} = j	\right) = \Prob\left(	\blw^{(1)}_{k} = i, \blw^{(1)}_{k+1} = j	\right), \quad 1 \leq k < l. \numberthis \label{eqn:invaraiant dist}
			\end{align*} 
			The computation below shows that this follows by induction. Suppose that $(i_{k+1}, i_{k+2}) \in E$.
			\begin{align*}
				&\Prob \left( \blw^{(1)}_{k+1} = i_{k+1},  \blw^{(1)}_{k+2} = i_{k+2}	\right)\\
				&= \sum_{i_k | (i_k,i_{k+1}) \in E} \Prob\left(	\blw^{(1)}_{k} = i_{k}, \blw^{(1)}_{k+1} = i_{k+1},  \blw^{(1)}_{k+2} = i_{k+2}	\right),\\
				&= \sum_{i_k | (i_k,i_{k+1}) \in E} \Prob\left(	\blw^{(1)}_{k+2} = i_{k+2} | \blw^{(1)}_{k} = i_{k}, \blw^{(1)}_{k+1} = i_{k+1}	\right) \cdot \Prob\left(\blw^{(1)}_{k} = i_{k}, \blw^{(1)}_{k+1} = i_{k+1}	\right),\\
				&= \frac{\alpha A_{i_{k+1} i_{k}} }{d_{i_{k+1}} + (-1 + \alpha)A_{i_{k+1} i_{k}} } \cdot \frac{A_{i_{k} i_{k+1}}}{ \vert A \vert}  +
				\sum_{i_k \neq i_{k+2}} \frac{A_{i_{k+1} i_{k+2}}}{d_{i_{k+1}} + (-1 + \alpha)A_{i_{k+1} i_{k}}}  \cdot \frac{A_{i_{k} i_{k+1}}}{\vert A \vert} \ind\left\{(i_k,i_{k+1}) \in E\right\} ,\\
				&= \frac{1}{\vert A\vert}.
			\end{align*}
			In view of \eqref{eqn:Cij limit 1} and \eqref{eqn:invaraiant dist} we have
			\begin{align*}
				\frac{C_{ij}}{r} &\xrightarrow[r \to \infty]{a.s.} \sum_{t = t_L}^{t_U}	(l-t) \left(  \Prob\left( \blw^{(1)}_{1} =i , \blw^{(1)}_{1+t} = j \right)
				+ \Prob\left( \blw^{(1)}_{1} = j , \blw^{(1)}_{1+t} = i \right)	\right),\\
				\frac{\sum_{i'} C_{i'j}}{r} &\xrightarrow[r \to \infty]{a.s.} 2 \Prob\left(\blw^{(1)}_{1} = j\right) \cdot \sum_{t=t_L}^{t_U} (l-t) = 2 \gamma(l,t_L, t_U) \Prob\left(\blw^{(1)}_{1} = j\right), \\
				\frac{\sum_{j'} C_{ij'}}{r} &\xrightarrow[r \to \infty]{a.s.} 2 \Prob\left(\blw^{(1)}_{1} = i\right) \cdot \sum_{t=t_L}^{t_U} (l-t) = 2 \gamma(l,t_L, t_U) \Prob\left(\blw^{(1)}_{1} = i\right),\\
				\frac{\sum_{i'j'} C_{i'j'}}{r} &\xrightarrow[r \to \infty]{a.s.}  2 \gamma(l,t_L, t_U).
			\end{align*}
			The result follows from these equations. We note that if, $\sum_{t=t_L}^{t_U} A^{(t)}_{ij} > 0$ then,
			$$\sum_{t = t_L}^{t_U}	(l-t) \left(  \Prob\left( \blw^{(1)}_{1} =i , \blw^{(1)}_{1+t} = j \right)
			+ \Prob\left( \blw^{(1)}_{1} = j , \blw^{(1)}_{1+t} = i \right)	\right) > 0,$$ 
			and so the logarithm is well-defined for the limiting term. In this case, we also have $C_{ij} > 0$ for large $r$ and so $\log\left(	\frac{C_{ij} \cdot \left(	\sum_{i'j'} C_{i'j'}	\right)}{b \sum_{j'} C_{ij'} \sum_{i'} C_{i'j}}	\right)$ is well-defined for large enough $r$.
		\end{proof}
\begin{proof}[Proof of Lemma~\ref{lem:rank M is K}]
Let 
\begin{align*}
Z = \{ X  \in \mathbb{R}^{K\times K} : \rank(X) < K \},
\end{align*}
be the set of all matrices which are not of full rank. Note that $\lambda(Z) = 0$ as $Z$ is the zero set of the determinant function on $\mathbb{R}^{K\times K}$ which is a polynomial function of the matrix entries.
Next, let 
\begin{align*}
    S = \{ X \in \mathbb{R}_+^{K\times K} : \rank(X) = K,\, \rank(\bar{\log} X) < K \},
\end{align*}
be the set of all matrices in $\mathbb{R}_+^{K\times K}$ for which the rank drops after applying $\bar{\log}$. 
Let $\bar{\exp}$ be the mapping given by applying the exponential function element-wise to the matrix entries. 
Then we can note that $S \subset \bar{\exp}(Z)$. 
Thus to complete the proof, it is enough to show that $\lambda(\bar{\exp}(Z)) = 0$. 
This final assertion follows by \cite[Lemma 7.25]{rudin} as $\bar{\exp}$ is a smooth function and so it must map null sets to null sets.	
\end{proof}

\end{document}